\newcommand{\bigo}{\mathcal{O}}
\newcommand{\pluseq}{\mathrel{+}=}
\newcommand{\indep}{\perp \!\!\! \perp}
\theoremstyle{plain}
\newtheorem{theorem}{Theorem}[section]
\newtheorem{lemma}[theorem]{Lemma}
\theoremstyle{definition}
\newtheorem{assumption}[theorem]{Assumption}
\theoremstyle{remark}
\newtheorem{remark}[theorem]{Remark}
\newcommand{\ie}{{\em i.e.,~}}
\newcommand{\qval}[2]{Q_{{#1}, {#2} }}
\newcommand{\qvalsingle}[1]{Q_{{#1}}}
\newcommand{\reward}[2]{r_{{#1}, {#2} }}
\newcommand{\qvaldotsingle}[1]{\dot{Q}_{{#1}}}
\newcommand{\C}{\mathrm{C}}
\newcommand{\D}{\mathrm{D}}
\newcommand{\cc}{\C \C}
\newcommand{\cd}{\C \D}
\newcommand{\dc}{\D \C}
\newcommand{\dd}{\D \D}
\newcommand{\rcc}{r_{\cc}}
\newcommand{\rcd}{r_{\cd}}
\newcommand{\rdc}{r_{\dc}}
\newcommand{\rdd}{r_{\dd}}
\newcommand{\qccc}{\qval{(\C, \C)}{\C}}
\newcommand{\qccd}{\qval{(\C, \C)}{\D}}
\newcommand{\qcdc}{\qval{(\C, \D)}{\C}}
\newcommand{\qdcd}{\qval{(\D, \C)}{\D}}
\newcommand{\qdcc}{\qval{(\D, \C)}{\C}}
\newcommand{\qcdd}{\qval{(\C, \D)}{\D}}
\newcommand{\qddd}{\qval{(\D, \D)}{\D}}
\newcommand{\qddc}{\qval{(\D, \D)}{\C}}
\newcommand{\nash}{\mathrm{Defect}}
\newcommand{\defect}{\mathrm{Defect}}
\newcommand{\alwaysdefect}{\texttt{always defect}}
\newcommand{\Alwaysdefect}{\texttt{Always defect}}
\newcommand{\AlwaysDefect}{\texttt{Always Defect}}
\newcommand{\loseshift}{\texttt{lose-shift}}
\newcommand{\wsls}{\texttt{win-stay, lose-shift}}
\newcommand{\Loseshift}{\texttt{Lose-shift}}
\newcommand{\Pavlovpolicy}{\texttt{Pavlov}}
\newcommand{\grimtrigger}{\texttt{grim trigger}}
\newcommand{\Grimtrigger}{\texttt{Grim trigger}}
\newcommand{\pavlov}{\mathrm{Pavlov}}
\newcommand{\xhdr}[1]{{\noindent\bfseries #1}.}
\newlist{lemmaenum}{enumerate}{1} % also creates a counter called 'lemmaenum'
\setlist[lemmaenum]{
  label=\emph{\roman*)},
  ref=\thedefinition~\emph{\roman*)}}
\newlist{assumenum}{enumerate}{1} % also creates a counter called 'assumenum'
\setlist[assumenum]{
  label=\emph{\roman*)},
  ref=\thedefinition~\emph{\roman*)}}
\newlist{thmenum}{enumerate}{1} % also creates a counter called 'lemmaenum'
\setlist[thmenum]{label=\emph{\roman*)}, ref=\thetheorem~\emph{\roman*)}}
\Crefname{lemmaenumi}{lemma}{Lemmas}
\Crefname{assumenumi}{assumption}{Assumptions}
\Crefname{assumption}{Assumption}{Assumptions}
\icmltitlerunning{Self-play $Q$-Learners Can Provably Collude in the Iterated Prisoner's Dilemma}
\begin{document}

\twocolumn[
\icmltitle{Self-Play $Q$-Learners Can Provably Collude in the Iterated Prisoner's Dilemma}

% It is OKAY to include author information, even for blind
% submissions: the style file will automatically remove it for you
% unless you've provided the [accepted] option to the icml2025
% package.

% List of affiliations: The first argument should be a (short)
% identifier you will use later to specify author affiliations
% Academic affiliations should list Department, University, City, Region, Country
% Industry affiliations should list Company, City, Region, Country

% You can specify symbols, otherwise they are numbered in order.
% Ideally, you should not use this facility. Affiliations will be numbered
% in order of appearance and this is the preferred way.
\icmlsetsymbol{equal}{*}

\begin{icmlauthorlist}
\icmlauthor{Quentin Bertrand}{malice}
\icmlauthor{Juan Agustin Duque}{mila}
\icmlauthor{Emilio Calvano}{affilemilio}
\icmlauthor{Gauthier Gidel}{mila,cifar}
% \icmlauthor{Firstname5 Lastname5}{yyy}
% \icmlauthor{Firstname6 Lastname6}{sch,yyy,comp}
% \icmlauthor{Firstname7 Lastname7}{comp}
% %\icmlauthor{}{sch}
% \icmlauthor{Firstname8 Lastname8}{sch}
% \icmlauthor{Firstname8 Lastname8}{yyy,comp}
%\icmlauthor{}{sch}
%\icmlauthor{}{sch}
\end{icmlauthorlist}

\icmlaffiliation{malice}{Université Jean Monnet Saint-Etienne, CNRS, Institut d’Optique Graduate School, Inria, Laboratoire Hubert Curien UMR 5516, F-42023, Saint-Étienne, France}
\icmlaffiliation{mila}{Mila, Université de Montréal}
\icmlaffiliation{cifar}{Canada AI CIFAR Chair}
\icmlaffiliation{affilemilio}{Università LUISS (Rome), Toulouse School of Economics, EIEF and CEPR}

\icmlcorrespondingauthor{Quentin Bertrand}{quentin.bertrand@inria.fr}
% \icmlcorrespondingauthor{Firstname2 Lastname2}{first2.last2@www.uk}

% You may provide any keywords that you
% find helpful for describing your paper; these are used to populate
% the "keywords" metadata in the PDF but will not be shown in the document
\icmlkeywords{Machine Learning, ICML}

\vskip 0.3in
]

% this must go after the closing bracket ] following \twocolumn[ ...

% This command actually creates the footnote in the first column
% listing the affiliations and the copyright notice.
% The command takes one argument, which is text to display at the start of the footnote.
% The \icmlEqualContribution command is standard text for equal contribution.
% Remove it (just {}) if you do not need this facility.

%\printAffiliationsAndNotice{}  % leave blank if no need to mention equal contribution
\printAffiliationsAndNotice{\icmlEqualContribution} % otherwise use the standard text.

\begin{abstract}
    % !TEX root = ../icml2025.tex
A growing body of computational studies shows that simple machine learning agents converge to cooperative behaviors in social dilemmas, such as collusive price-setting in oligopoly markets, raising questions about what drives this outcome. In this work, we provide theoretical foundations for this phenomenon in the context of self-play multi-agent Q-learners in the iterated prisoner’s dilemma. We characterize broad conditions under which such agents provably learn the cooperative Pavlov (win-stay, lose-shift) policy rather than the Pareto-dominated “always defect” policy. We validate our theoretical results through additional experiments, demonstrating their robustness across a broader class of deep learning algorithms.

\end{abstract}

% !TEX root = ../icml2025.tex

%%%%%%%%%%%%%%%%%%%%%%%%%%%%%%%%%%%%%%%%%%%%%%%%%%%%%%%%%%%%%%%%%%%%%%%%%%%%%%
\section{Introduction}
%%%%%%%%%%%%%%%%%%%%%%%%%%%%%%%%%%%%%%%%%%%%%%%%%%%%%%%%%%%%%%%%%%%%%%%%%%%%%%%
% In economy, \emph{perfect competition} is a set of assumptions that ensure good properties of markets, such as reaching Pareto optimal equilibria, where supply matches demand \citep{Debreu1959}.
% One of the main assumptions in perfect competition is the existence of regulations that prevent anti-competitive activities \citep{Radaelli2004,Conway2006},
% \eg the United States antitrust laws that prevent companies from colluding by agreeing on artificially high prices~\citep{Posner2009}.
%\looseness=-1
In recent years, algorithmic pricing has increasingly supplanted manual pricing, both online—by 2015, one-third of Amazon’s largest third-party sellers were already using such tools \citep{Chen2016}—and in physical markets, such as gas stations \citep{Schechner2017}.
The growing adoption of machine learning–based pricing systems has raised significant concerns in both academic \citep{Ezrachi2015,Mehra2015} and institutional \citep{OECD2017,CompeititionBureau2018} circles regarding the potential for tacit collusion.
Indeed, if multiple firms competing within the same local market deploy learning algorithms to set prices, a natural and pressing question arises:

\begin{center}
    \emph{Could pricing algorithms autonomously learn to suppress competition, thereby leading to higher prices?}
\end{center}

% In recent years, algorithmic pricing has increasingly replaced manual pricing, both online (by 2015, one-third of Amazon’s largest third-party sellers were already using it \citep{Chen2016}) and in real life, for example at gas stations \citep{Schechner2017}.
% The rise of machine learning pricing tools has then triggered multiple academic~\citep{Ezrachi2015,Mehra2015} and institutional~\citep{OECD2017,CompeititionBureau2018} concerns over the potential risk of tacit collusion: if multiple firms competing in the same local market rely on machine learning algorithms to set prices,
% \begin{center}
%     \emph{could pricing algorithms autonomously learn to suppress competition leading to higher prices?}
% \end{center}
% \begin{center}
%     \emph{Could sellers disable competition by jointly learning       reward-punishment schemes that support high prices as an equilibrium outcome.}
% \end{center}
%
\looseness=-1
Prior works, based on simulated environments and empirical analyses of real-world pricing, have shown that algorithmic collusion is a tangible risk, not merely a theoretical concern.
For instance, \citet{Deneckere1985,Waltman_Kaymak2008,Hansen_Misra_Pai2021,Asker2022} documented that even simple algorithms can lead to higher prices in standard models of price competition.
Moreover, \citet{Calvano2019,Calvano2020,klein2021autonomous} demonstrated that standard machine learning algorithms, such as $Q$-learning \citep{Watkins_Dayan1992}, can learn to sustain high prices through reward–punishment mechanisms, commonly characterized as `collusive' in the antitrust literature; see \citealt{calvano2020protecting} for an in-depth discussion of related policy issues.
\citet{assad2024algorithmic} go one step further by providing the first empirical evidence of how algorithmic pricing affects competition: in Germany’s retail gasoline market, where machine learning pricing tools spread after 2017, adoption increased profit margins only when multiple competitors used them—consistent with concerns over algorithm-driven collusion.
% More broadly, the emergence of c
Cooperation among machine learning algorithms has also been observed in general iterated social dilemmas, owing to the sequential nature of the problems \citep{Lanctot2017,Leibo2017}.

% Existing work, either based on simulated environments or real-world pricing empirical analyses has shown that algorithmic collusion is a concrete possibility, not merely a theoretical concern.
% For instance \citet{Deneckere1985,Waltman_Kaymak2008,Hansen_Misra_Pai2021,Asker2022} document that simple algorithms can lead to higher prices in workhorse models of price competition.  Moreover, \citet{Calvano2019,Calvano2020,klein2021autonomous} show that standard machine learning algorithms, such as $Q$-learning \cite{Watkins_Dayan1992}, can learn to sustain high prices through reward–punishment mechanisms—patterns typically described as `collusive' in the antitrust literature: see \citealt{calvano2020protecting} for a detailed discussion of policy issues. \citet{assad2024algorithmic} went one step further providing the first empirical evidence of how algorithmic pricing affects competition, showing that in Germany’s retail gasoline market—where machine learning pricing tools spread after 2017—adoption raised margins only when multiple competitors adopted, consistent with concerns about algorithm-driven collusion. More broadly, cooperation of machine learning algorithms has been observed in practice for more general iterated social dilemmas \citep{Lanctot2017,Leibo2017}, due to the sequential structures of these problems.

\looseness=-1
Theoretically, the emergence of tacit collusion has been studied in highly simplified settings.
Even in minimalist cooperative/competitive games such as the prisoner's dilemma \citep{Flood1958,Kendall2007}, the dynamics induced by standard machine learning algorithms in multi-agent contexts are often so complex that researchers typically resort to oversimplified versions of these algorithms.
In particular, expected-value formulations of $\epsilon$-greedy $Q$-learning \citep[Sec.~6.10]{VanSeijen2009,Sutton}, \emph{without memory}—that is, without access to previous states—are frequently analyzed \citep{Banchio_Skrzypacz2022,Banchio_Mantegazza2022}.
In these prior theoretical results, cooperation does not arise from agents learning equilibrium strategies in the classical game-theoretic sense. Their limited memory prevents them from conditioning on the past actions of others, and thus from learning to implement retaliatory pricing strategies \citep{Calvano2023}.
By contrast, learning to sustain high prices through retaliation consistent with a subgame-perfect equilibrium represents a stronger and more strategic form of cooperation: it captures equilibrium behavior that is robust to unilateral deviations
% and therefore not exploitable
—unlike the limited-memory scenarios, where cooperation can emerge without credible deterrence mechanisms.

In this paper, we aim to fill this gap by studying a richer setting that, in principle, allows for collusion in the game-theoretic sense.
More precisely, the proposed project seeks to shed light on the \emph{learning dynamics of algorithms} as they coordinate to cooperate—\emph{in contrast to the learning outcomes}, which have been the primary focus of prior work.
By analyzing the processes that lead to collusion in a minimalist \emph{self-play} setting—\ie where an agent interacts with a copy of itself—we aim to better inform the development of rules and regulations that ensure fair competition and protect consumer interests.

\textbf{Contributions.}
In this work, we study the \emph{dynamics} of two agents playing the iterated prisoner's dilemma and choosing their actions according to \emph{self-play} $\epsilon$-greedy $Q$-learning policies.
Importantly, as opposed to previous work \citep{Banchio_Skrzypacz2022,Banchio_Mantegazza2022}, we theoretically study the standard \emph{stochastic} (\ie not averaged) version of \emph{self-play} $\epsilon$-greedy $Q$-learning (\Cref{alg:qlearning}), \emph{with a one-step memory}.
More precisely,

\begin{itemize}[noitemsep,itemjoin = \quad,topsep=0pt,parsep=0pt,partopsep=0pt, leftmargin=*]
    \item \looseness=-1 First, we show that without exploration,
    \ie self-play $\epsilon$-greedy $Q$-learning with $\epsilon=0$,
    if the initialization of the agent is optimistic enough \citep{Even2001}, then
    agents can learn to move from an \alwaysdefect~policy,
    to a cooperative policy (\Cref{thm_conv_fully_greedy} in \Cref{sub_multiagent_fully_greedy}), referred to as \texttt{win-stay, lose-shift} policy \citep{Nowak1993}.
    \item Then, we extend the convergence toward a cooperative policy to self-play $\epsilon$-greedy $Q$-learning with $\epsilon>0$ (\Cref{thm_conv_epsilon_greedy} in \Cref{sub_multiagent_epsilon_greedy}).
    This is the main technical difficulty, as one needs to prove the convergence of a stochastic process toward a specific equilibrium.
    \item Finally, we empirically show that the collusion proved for standard $Q$-learning algorithms is also observed for deep $Q$-learning algorithms (\Cref{sec_experiments}).
\end{itemize}
\looseness=-1
The manuscript is organized as follows: \Cref{sec_setting_backgound} provides recalls on the prisoner's dilemma, multi-agent $Q$-learning, and fixed points of the self-play multi-agent Bellman equation.
\Cref{sec_dynamic} contains our main results: the convergence of $Q$-learning toward the collusive \Pavlovpolicy~strategy (\Cref{thm_conv_fully_greedy,thm_conv_epsilon_greedy}).
Previous related works on the dynamics of multi-agent $Q$-learning are detailed in \Cref{sec_related_work}.
Similar collusive behaviors are also observed for deep $Q$-learning algorithms in \Cref{sec_experiments}.

% !TEX root = ../icml2025.tex

%%%%%%%%%%%%%%%%%%%%%%%%%%%%%%%%%%%%%%%%%%%%%%%%%%%%%%%%%%%%%%%%%%%%%%%%%%%%%%%%%%%%%%%%%%%%%%%%%
\section{Setting and Background}
\label{sec_setting_backgound}
%%%%%%%%%%%%%%%%%%%%%%%%%%%%%%%%%%%%%%%%%%%%%%%%%%%%%%%%%%%%%%%%%%%%%%%%%%%%%%%%%%%%%%%%%%%%%%%%%
First, \Cref{sub_prisoner_dilemma} provides recalls on the iterated prisoner's dilemma.
Then, \Cref{sub_q_learning} provides recalls on reinforcement learning, and more specifically, multi-agent Bellman equation and $\epsilon$-greedy $Q$-learning.
In \Cref{sub_fix_bellman}, we recall the fixed-point policies of the multi-agent Bellman equation: \alwaysdefect, \Loseshift, \Pavlovpolicy~(summarized in \Cref{tab_summary_policies}).
Interestingly, in the iterated prisoner's dilemma, these policies are also \emph{subgame perfect equilibrium}, and might be different from \alwaysdefect.
%%%%%%%%%%%%%%%%%%%%%%%%%%%%%%%%%%%%%%%%%%%%%%%%%%%%%%%%%%%%%%%%%%%%%%%%%%%%%%%%%%%%%%%%%%%%%%%%%
\subsection{Iterated Prisoner's Dilemma}
\label{sub_prisoner_dilemma}
%%%%%%%%%%%%%%%%%%%%%%%%%%%%%%%%%%%%%%%%%%%%%%%%%%%%%%%%%%%%%%%%%%%%%%%%%%%%%%%%%%%%%%%%%%%%%%%%%
In this work, we consider a two-player iterated prisoner's dilemma whose players have a \emph{one-step} memory.
At each time step, players choose between cooperate ($\C$) or defect ($\D$), \ie each player $i$ choose an action $a^i \in \mathcal{A} \triangleq \{\C, \D\}$, which yields a reward $r_{a^1, a^2}^i$ for player $i$, $i \in \{1, 2\}$.
\Cref{tab_prisoner_dilemma} describes the rewards $r_{a^1,a^2}^1$ and $r_{a^2, a^1}^2$ respectively obtained by each player depending on their respective action $a^1$ and $a^2$.
Note that such a game is \emph{symmetric}:
$r_{a^1,a^2}^1 = r_{a^2, a^1}^2 := r_{a^1,a^2}$.
For simplicity, in the experiments (\Cref{sec_experiments}),
we consider simplified rewards, which are parameterized by a single scalar $g$, $1 < g < 2$ (see \Cref{tab_prisoner_dilemma_parameterized} in \Cref{app_sec_expes_details}, as in~\citealt{Banchio_Mantegazza2022}).
%%%%%%%%%%%%%%%%%%%%%%%%%%%%%%%%%%%%%%%%%%%%%%%%%%%%%%%%%%%%%%%
%%%%%%%%%%%%%%%%%%%%%%%%%%%%%%%%%%%%%%%%%%%%%%%%%%%%%%%%%%%%%%%
\begin{table}[tb]
    \centering
    \caption{Prisoner's Dilemma Rewards/Payoff Matrix.
    Typical iterated prisoner´s dilemma rewards satisfy $\rdc > \rcc > \rdd > \rcd$ and $2\rcc > \rcd + \rdc$.}
    \label{tab_prisoner_dilemma}
    \begin{tabular}{c@{\hspace{1cm}}c@{\hspace{1cm}}c}
        % \hline
        & Cooperate
        & Defect
        \\
        % \hline
        Cooperate
        & \diagbox[width=.7 \textwidth/8+2\tabcolsep\relax, height=1cm]{$\rcc$}{$\rcc$}
        & \diagbox[width=.7 \textwidth/8+2\tabcolsep\relax, height=1cm]{$\rcd$}{$\rdc$}
        \\
        % \hline
        Defect
        & \diagbox[width=.7 \textwidth/8+2\tabcolsep\relax, height=1cm]{$\rdc$}{$\rcd$}
        &
        \diagbox[width=.7 \textwidth/8+2\tabcolsep\relax, height=1cm]{$\rdd$}{$\rdd$}
        \\
        % \hline
    \end{tabular}
\end{table}
%%%%%%%%%%%%%%%%%%%%%%%%%%%%%%%%%%%%%%%%%%%%%%%%%%%%%%%%%%%%%%%
%%%%%%%%%%%%%%%%%%%%%%%%%%%%%%%%%%%%%%%%%%%%%%%%%%%%%%%%%%%%%%%

When the prisoner's dilemma is not repeated, joint defection is the only Nash equilibrium: for a fixed decision of the other prisoner, defecting always reaches better rewards than cooperating.
This yields the celebrated paradox: even though the rewards obtained in the ``defect defect'' state are Pareto dominated by the ones obtained with the ``cooperate cooperate'' state, the Pareto-suboptimal mutual defection remains the only Nash equilibrium.

When the prisoner's dilemma is infinitely repeated, new equilibria can emerge, and \alwaysdefect~is no longer the dominant strategy \citep{osborne2004introduction}.
We will refer to ``collusive'' or ``colluding'' strategies, any strategy that differs from the \alwaysdefect~strategy.
Note that infinitely many time steps are essential for such equilibria to exist\footnote{
\looseness=-1
Equivalently, one can consider that the game is finitely repeated with a probability of ending equal to the discount factor at each step~\citep[Chap. 15.3]{littman1994markov,osborne2004introduction}.}.
%%%%%%%%%%%%%%%%%%%%%%%%%%%%%%%%%%%%%%%%%%%%%%%%%%%%%%%%%%%%%%%%%%%%%%%%%%%%%%
%%%%%%%%%%%%%%%%%%%%%%%%%%%%%%%%%%%%%%%%%%%%%%%%%%%%%%%%%%%%%%%%%%%%%%%%%%%%%%
\subsection{$Q$-Learning for Multi-Agent Reinforcement Learning and Self-Play}
\label{sub_q_learning}
\xhdr{Reinforcement learning}
At a given step, the choice of each player to cooperate or defect is conditioned by the actions $\mathcal{S} \triangleq \{\cc, \dd,\cd, \dc\}$ played at the previous time step,
% the actions picked by both players at the previous time step,
where the first action stands for the action picked by the first player at the previous time-step.
$\Delta_\mathcal{A}^\mathcal{S}$ denotes the space of policies
$\pi : \mathcal{S} \times \mathcal{A} \rightarrow [0, 1]$,
such that for all $s \in \mathcal{S}$,
$\sum_{a \in \mathcal{A}} \pi(a | s) =1$.
Given two policies (\ie mixed strategies) $\pi_1, \pi_2 \in \Delta_\mathcal{A}^\mathcal{S}$,
a discount factor $\gamma \in (0,1)$,
and an initial distribution $\rho$ over the state space $\mathcal{S}$,
the cumulated reward observed by the agent $i \in \{1,2\}$ is given by
    \begin{equation}\label{eq_cumul_reward}
        J_i(\pi_1,\pi_2)
        \triangleq
        \mathbb{E}_{s_0 \sim \rho, a_t^i \sim \pi_i(\cdot|s_t)}
        \left [ \sum_{t=0}^\infty \gamma^t r_{a_t^1,a_t^2} \right ]
        \enspace.
    \end{equation}

\xhdr{$Q$-learning}
A popular approach to maximize the cumulative reward function \Cref{eq_cumul_reward} is to find an \emph{action-value function} or   \emph{Q-function}, that is a fixed point of the Bellman operator \citep[Eq. 3.17]{Sutton}.
Since the reward of the second player is stochastic in the multi-agent case, the fixed-point equation writes,
% $(s_t, a_t^1) \in \mathcal{S} \times \mathcal{A}$
for all
$(s_t, a_t^1) \in \mathcal{S} \times \mathcal{A}$:
    \begin{align} \label{eq_bellman_multi_first}
        \qval{s_t}{a_t^1}^\star
        =
        \mathbb{E}_{a_t^2 \sim \pi_2( \cdot | s_t)} \left (
            \reward{a_t^1}{a_t^2}^1
            +
            \gamma \max_{a} \qval{(a_t^1, a_t^2)}{a}^\star \right )
            \;.
    \end{align}

%%%%%%%%%%%%%%%%%%%%%%%%%%%%%%%%%%%%%%%%%%%%%%%%%%%%%%%%%%%%%%%%%%%%%%%%%%%%%%
%%%%%%%%%%%%%%%%%%%%%%%%%%%%%%%%%%%%%%%%%%%%%%%%%%%%%%%%%%%%%%%%
\begin{algorithm}[tb]
    \caption{Multi-agent Self-Play $Q$-learning}
    \label{alg:qlearning}
    \SetKwInOut{Input}{input}
    \SetKwInOut{Init}{init}
    \SetKwInOut{Parameter}{param}
    \Init{$s_0, Q$}
    \Parameter{$\alpha$, $\gamma$, $\epsilon$}
    \For{$t$ in $1, \dots, n_{\mathrm{iter}}$ }
    {
    \tcp*[l]{Compute $a_t^1$ and $a_t^2$ via  \textrm{\Cref{alg:epsilon_greedy}}}

    $a_t^1 = \mathrm{\Cref{alg:epsilon_greedy}}(Q, (a_{t-1}^1, a_{t-1}^2), \epsilon)$

    $a_t^2 = \mathrm{\Cref{alg:epsilon_greedy}}(Q, (a_{t-1}^2, a_{t-1}^1), \epsilon)$

    $s_t = (a_{t-1}^1, a_{t-1}^2)$

    \tcp*[l]{Update the $Q$-value entry in $s_t, a_t^1$}

    $\qval{s_t}{a_t^1}
    \pluseq \alpha
    \left ( r_{{a_t^1},{a_t^2}}
    + \gamma
    \max_{a} \qval{(a_t^1, a_t^2)}{a} - \qval{s_t}{a_t^1} \right)$

    }
\Return{$Q$}
\end{algorithm}
%%%%%%%%%%%%%%%%%%%%%%%%%%%%%%%%%%%%%%%%%%%%%%%%%%%%%%%%%%%%%%%%

$Q$-learning (\Cref{alg:qlearning}) consists of stochastic fixed-point iterations on the Bellman \Cref{eq_bellman_multi}: with a step-size $\alpha>0$
% actions $(a_t^1, a_t^2) \in \{\C, \D\}^2$, it writes
%%%%%%%%%%%%%%%%%%%%%%%%%%%%%%%%%%%%%%%%%%%%%%%%
%%%%%%%%%%%%%%%%%%%%%%%%%%%%%%%%%%%%%%%%%%%%%%%%
\begin{empheq}[box=\fcolorbox{blue!40!black!10}{green!05}]{align}
    &\qval{s_t}{a_t^1}^{t+1}
    % \pluseq
    % \leftarrow
    = \qval{s_t}{a_t^1}^{t}
    \nonumber
    \\
    &
    +\alpha
    % \mathbb{E}_{a^2 \sim \pi( \cdot | s_t)}
    \left(
        \reward{a_t^1}{a_t^2}^1
        + \gamma \max_{a'}
        \qval{(a_t^1, a_t^2)}{a'}^t
        - \qval{s_t}{a_t^1}^t \right)
    \enspace
    .
    \label{eq_avg_q_learning_avg_oponent}
\end{empheq}
%%%%%%%%%%%%%%%%%%%%%%%%%%%%%%%%%%%%%%%%%%%%%%%%
%%%%%%%%%%%%%%%%%%%%%%%%%%%%%%%%%%%%%%%%%%%%%%%%
\looseness=-1
\xhdr{Self-play}
Since the game is symmetric, we study the dynamic of \Cref{eq_avg_q_learning_avg_oponent}, where the second agent is a copy of the first agent, \ie \emph{the actions $a_t^1$ and $a_t^2$ are sampled according to the same policy $\pi$}.
Such a way to model the opponent is referred to as \emph{self-play} and has been successfully used to train agents in deep reinforcement learning applications~\citep{Lowe2017,silver2018general,baker2019emergent,tang2019towards}:
\begin{empheq}[box=\fcolorbox{blue!40!black!10}{green!05}]{align*}
    a_t^1, a_t^2 \sim \pi(\cdot | s_t)
    \quad
    \text{\textcolor{blue}{// same $Q$-table for $a_t^1$ and $a_t^2$.}}
\end{empheq}
In the self-play setting, the Bellman \Cref{eq_bellman_multi_first} writes
% \begin{align}
\begin{empheq}[box=\fcolorbox{blue!40!black!10}{green!05}]{align}
\label{eq_bellman_multi}
    \qval{s_t}{a_t^1}^\star
    =
    \mathbb{E}_{a_t^2 \sim \pi( \cdot | s_t)} \left (
        \reward{a_t^1}{a_t^2}
        +
        \gamma \max_{a} \qval{(a_t^1, a_t^2)}{a}^\star \right )
        .
\end{empheq}

In addition, we study $\epsilon$-greedy $Q$-learning policies (\Cref{alg:epsilon_greedy}, $0<\epsilon <1/2$), \ie
\begin{empheq}[box=\fcolorbox{blue!40!black!10}{green!05}]{align*}
    \pi(a | s) =
    \begin{cases}
        1 - \epsilon &\text{ if } a = {\arg \max}_{a} \qval{s}{a}
        \\
        \epsilon  & \text{else}
    \end{cases}
    .
    % , \text{ for }
    % i \in \{1, 2 \},
\end{empheq}

We say that an $\epsilon$-greedy policy is a fixed point of the self-play multi-agent Bellman \Cref{eq_bellman_multi} if the policy is greedy with respect to its own $Q$-values.
In other words, a fixed point $Q^\star$ of the self-play multi-agent Bellman \Cref{eq_bellman_multi} must satisfy \Cref{eq_bellman_multi}, where the policy $\pi$ is $\epsilon$-greedy with respect to the $Q$-table $Q^\star$.
%%%%%%%%%%%%%%%%%%%%%%%%%%%%%%%%%%%%%%%%%%%%%%%%%%%%%%%%%%%%%%%%
%%%%%%%%%%%%%%%%%%%%%%%%%%%%%%%%%%%%%%%%%%%%%%%%%%%%%%%%%%%%%%%%%%%%%%%%%%%%%%
Interestingly, in this setting, there exist multiple fixed-point policies $Q^\star$ of the self-play multi-agent Bellman \Cref{eq_bellman_multi}.

\looseness=-1
In this paper, we study the dynamics of the multi-agent self-play $Q$-learning (\Cref{eq_avg_q_learning_avg_oponent}) in the \emph{one-step memory} case,
\ie
$\mathcal{A}=\{\C, \D \}$ and $\mathcal{S}=\{\C, \D \}^2$.
In particular, in \Cref{sec_dynamic} we show convergence towards a specific cooperative
fixed point policy $Q^\star$ of the multi-agent Bellman \Cref{eq_bellman_multi}.
In the following \Cref{sub_fix_bellman}, we provide recalls on the fixed-point policies for the multi-agent Bellman equation \Cref{eq_bellman_multi}.
%%%%%%%%%%%%%%%%%%%%%%%%%%%%%%%%%%%%%%%%%%%%%%%%%%%%%%%%%%%%%
\subsection{Fixed Points of the Multi-Agent Bellman Equation}
\label{sub_fix_bellman}
%%%%%%%%%%%%%%%%%%%%%%%%%%%%%%%%%%%%%%%%%%%%%%%%%%%%%%%%%%%%%
%%%%%%%%%%%%%%%%%%%%%%%%%%%%%%%%%%%%%%%%%%%%%%%%%%%%%%%%%%%%%%%%
%
%%%%%%%%%%%%%%%%%%%%%%%%%%%%%%%%%%%%%%%%%%%%%%%%%%%%%%%%%%%%%%%%%%%%%%%%%%%%%
%%%%%%%%%%%%%%%%%%%%%%%%%%%%%%%%%%%%%%%%%%%%%%%%%%%%%%%%%%%%%%%%%%%%%%%%%%%%%
% \subsection{Notions of Equilibria}
%%%%%%%%%%%%%%%%%%%%%%%%%%%%%%%%%%%%%%%%%%%%%%%%%%%%%%%%%%%%%%%%%%%%%%%%%%%%%
%
In the iterated prisoner's dilemma with one-step memory, multiple symmetric strategy profiles exist.
However, only three of them are fixed points of the self-play multi-agent Bellman operator in \Cref{eq_bellman_multi}:
\alwaysdefect, \Pavlovpolicy, and \grimtrigger~\citep{Usui2021}.
A summary of these policies is provided in \Cref{tab_summary_policies}.

%%%%%%%%%%%%%%%%%%%%%%%%%%%%%%%%%%%%%%%%%%%%%%%%%%%%%%%%%%%%%%%%%%%%%%%%%%%%%
\begin{algorithm}[tb]
    \caption{$\epsilon$-greedy}
    \label{alg:epsilon_greedy}
    \SetKwInOut{Input}{input}
        \SetKwInOut{Init}{init}
        \SetKwInOut{Parameter}{param}
         \Input{$Q, s$, $0 \leq \epsilon \leq 1 / 2$}
        % \Parameter{}
        \Return{ $\arg \max_{a'} \qval{s}{a'}$ \emph{with probability} $1 - \epsilon$
        \emph{and} $\arg \min_{a'} \qval{s}{a'}$ \emph{with probability} $\epsilon$ }
\end{algorithm}
%%%%%%%%%%%%%%%%%%%%%%%%%%%%%%%%%%%%%%%%%%%%%%%%%%%%%%%%%%%%%%%%%%%%%%%%%%%%%

In \Cref{sec_dynamic}, we show that one can transition from \alwaysdefect~to \Pavlovpolicy, passing through the \loseshift~policy (defined in \Cref{tab_summary_policies}).
First, we recall below that fixed-point policies of the multi-agent Bellman \Cref{eq_bellman_multi} can define \alwaysdefect~and \Pavlovpolicy.
\begin{restatable}[\AlwaysDefect]{proposition}{nashequilibrium}\label{prop_nash_eq}
    \looseness=-1
    The following $Q$-table $Q^{\star, \defect}$ is a fixed point of the Bellman \Cref{eq_bellman_multi} and yields an \texttt{always defect} policy, $\forall s \in { \{ \C, \D\} }^2,$
    % \begin{empheq}[box=\fcolorbox{blue!40!black!60}{green!10}]{align*}
    \begin{align*}
        %[box=\fcolorbox{blue!40!black!60}{green!10}]{align*}
        Q_{s, \D}^{\star, \defect}
        &=
        \mathbb{E}_{a^2 \sim \pi( \cdot | s)} r_{\D, a^2} / (1 - \gamma)
        \enspace,
        \\
        Q_{s, \C}^{\star, \defect}
        &=
        Q_{s, \D}^{\star, \defect}
        -
        \mathbb{E}_{a^2 \sim \pi( \cdot | s)} \left ( r_{\D, a^2}
            -
            r_{\C, a^2} \right )
            % < Q_{s, \D}^{\star, \defect}
        \enspace.
    \end{align*}
\end{restatable}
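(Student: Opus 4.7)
The plan is to verify two complementary properties: (i) that the proposed $Q$-table $Q^{\star, \defect}$ is consistent with inducing an \alwaysdefect{} $\epsilon$-greedy policy (so the definition of $Q^{\star, \defect}$ via expectations under $\pi$ is self-referential but coherent), and (ii) that, granted (i), the $Q$-values indeed satisfy the self-play Bellman \Cref{eq_bellman_multi}.

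For (i), I would simply subtract the two displayed formulas to obtain
$Q_{s, \D}^{\star, \defect} - Q_{s, \C}^{\star, \defect} = \mathbb{E}_{a^2 \sim \pi(\cdot|s)}(r_{\D, a^2} - r_{\C, a^2})$.
The prisoner's dilemma reward ordering $\rdc > \rcc > \rdd > \rcd$ from \Cref{tab_prisoner_dilemma} gives $r_{\D, \C} - r_{\C, \C} > 0$ and $r_{\D, \D} - r_{\C, \D} > 0$, so this expectation is strictly positive for \emph{any} distribution on $a^2$. Hence $\D = \arg\max_a Q^{\star, \defect}_{s, a}$ at every state $s$, and the $\epsilon$-greedy policy with respect to $Q^{\star, \defect}$ is indeed always-defect: $\pi(\D|s) = 1-\epsilon$ and $\pi(\C|s) = \epsilon$ uniformly in $s$.

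Step (ii) then collapses to a one-line computation because, under \alwaysdefect{}, $\pi(\cdot|s)$ no longer depends on $s$. Define $\bar r_{\D} := \mathbb{E}_{a^2 \sim \pi} r_{\D, a^2}$ and $\bar r_{\C} := \mathbb{E}_{a^2 \sim \pi} r_{\C, a^2}$; these are constants. Moreover $\max_{a} Q^{\star, \defect}_{s, a} = Q^{\star, \defect}_{s, \D} = \bar r_{\D}/(1-\gamma)$ is the same at every state. Plugging this into the right-hand side of \Cref{eq_bellman_multi} with $a_t^1 = \D$ yields $\bar r_{\D} + \gamma\, \bar r_{\D}/(1-\gamma) = \bar r_{\D}/(1-\gamma)$, which matches the formula. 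For $a_t^1 = \C$, the same substitution gives $\bar r_{\C} + \gamma\, \bar r_{\D}/(1-\gamma)$, and a quick rearrangement rewrites this as $\bar r_{\D}/(1-\gamma) - (\bar r_{\D} - \bar r_{\C})$, exactly the claimed expression for $Q^{\star, \defect}_{s, \C}$.

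There is no real obstacle: the statement is essentially a self-consistency check and reduces to a geometric-series identity. The only subtlety worth spelling out is the circularity in the definition — $Q^{\star, \defect}$ is written in terms of expectations under $\pi$, which itself depends on $Q^{\star, \defect}$ — and the argument above is precisely what breaks that circularity by showing that the implied argmax is $\D$ regardless of the exact distribution $\pi(\cdot|s)$.
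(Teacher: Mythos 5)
Your proposal is correct and follows essentially the same route as the paper: both arguments reduce to observing that under \alwaysdefect~the continuation value $\max_a Q^{\star,\defect}_{s',a}=\mathbb{E}_{a}r_{\D,a}/(1-\gamma)$ is state-independent, verifying the geometric-series identity for each action, and closing the loop with the ordering $\rdd>\rcd$ and $\rdc>\rcc$ to confirm the greedy action is $\D$. The only difference is presentational — the paper derives the values by writing out and solving the implied linear system, whereas you verify the given formulas directly — but the computations and the key inequality are identical.
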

% Proof of \Cref{prop_nash_eq} can be found in \Cref{app_sub_q_value_nash}.
%%%%%%%%%%%%%

%%%%%%%%%%%%%%%%%%%%%%%%%%%%%%%%%%%%%%%%%%%%%%%%%%%%%%%%%%%%%%%%%%%%%%%%%%%%%
More interestingly, the cooperative \Pavlovpolicy~policy (also referred to as \wsls) can also be a fixed point of the multi-agent Bellman \Cref{eq_bellman_multi}.
\texttt{Pavlov} can be summarized as \emph{cooperate as long as the players are synchronized by playing the same action}.
%%%%%%%%%%%%%%%%%%%%%%%%%%%%%%%%%%%%%%%%%%%%%%%%%%%%%%%%%%%%%%%%
%%%%%%%%%%%%%%%%%%%%%%%%%%%%%%%%%%%%%%%%%%%%%%%%%%%%%%%%%%%%%%%%
\begin{restatable}[\Pavlovpolicy]{proposition}{PavlovEquilibrium}\label{prop_pavlov_eq}
    \looseness=-1
    % If $\gamma > (2 - g) / (2 g - 2)$
    If $\gamma > \tfrac{\rdc - \rcc}{\rcc - \rdd}$
    and $\epsilon$ is small enough, then there exists a $Q$-function, $Q^{\star, \pavlov}$, which is a fixed point of the self-play multi-agent Bellman \Cref{eq_bellman_multi} and yields the Pavlov policy, i.e,
    \begin{align*}
            \forall s \in \{\cc, \dd \} \quad
        \qval{s}{\C}^{\star, \mathrm{Pavlov}}
        &>
        \qval{s}{\D}^{\star, \mathrm{Pavlov}}
        \, \quad \text{and}
        % \quad
        \\
        \forall s \in \{\cd, \dc \} \quad
        \qval{s}{\C}^{\star, \mathrm{Pavlov}}
        &<
        \qval{s}{\D}^{\star, \mathrm{Pavlov}}
        \enspace.
    \end{align*}
\end{restatable}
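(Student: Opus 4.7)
The plan is to construct $Q^{\star,\pavlov}$ explicitly by postulating that its greedy action pattern \emph{is} Pavlov (cooperate from $\cc$ and $\dd$; defect from $\cd$ and $\dc$), solving the resulting linear Bellman fixed-point system, and then verifying the four Pavlov inequalities a posteriori. I handle $\epsilon = 0$ by direct computation and extend to small $\epsilon > 0$ by continuity.

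First, I fix $\pi$ to be $\epsilon$-greedy with respect to the Pavlov action pattern, so $\pi(\C|\cc) = \pi(\C|\dd) = 1-\epsilon$ and $\pi(\D|\cd) = \pi(\D|\dc) = 1-\epsilon$. With the greedy action at each state pinned down, every $\max_a Q_{(a^1,a^2),a}$ term in \eqref{eq_bellman_multi} collapses to the $Q$-value at the prescribed Pavlov action, so the Bellman equations become a linear system $Q = b(\epsilon) + \gamma P(\epsilon) Q$ in the eight unknowns, where $P(\epsilon)$ is nonnegative with unit row sums. Since $\gamma < 1$, the matrix $I - \gamma P(\epsilon)$ is invertible and the system has a unique solution $Q^{\star,\pavlov}(\epsilon)$, continuous in $\epsilon$. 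The self-play symmetry of Pavlov---it acts identically on $\cd$ and $\dc$---ensures that the expectation $\mathbb{E}_{a^2\sim\pi(\cdot|s_t)}$ in \eqref{eq_bellman_multi} correctly models the opponent, who in principle sees the swapped state.

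Next, I compute $Q^{\star,\pavlov}(0)$ in closed form by tracing the deterministic Pavlov trajectory: the state $\cc$ is absorbing, so $Q^{\star,\pavlov}_{\cc,\C}(0) = \rcc/(1-\gamma)$; a unilateral $\D$ from $\cc$ sends the system to $\dc$, from which Pavlov brings both players to $\dd$ and then back to $\cc$, yielding $Q^{\star,\pavlov}_{\cc,\D}(0) = \rdc + \gamma\rdd + \gamma^2 \rcc/(1-\gamma)$. Analogous computations at the other three states give, e.g., $Q^{\star,\pavlov}_{\cd,\D}(0) = \rdd + \gamma \rcc/(1-\gamma)$ and $Q^{\star,\pavlov}_{\cd,\C}(0) = \rcd + \gamma Q^{\star,\pavlov}_{\cd,\D}(0)$. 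The synchronized-state inequality $Q^{\star,\pavlov}_{\cc,\C}(0) > Q^{\star,\pavlov}_{\cc,\D}(0)$ simplifies (after multiplication by $1-\gamma$) to $(1+\gamma)\rcc > \rdc + \gamma\rdd$, i.e., exactly the hypothesis $\gamma > (\rdc-\rcc)/(\rcc-\rdd)$; the inequality at $\dd$ reduces to the same condition. The unsynchronized-state inequality $Q^{\star,\pavlov}_{\cd,\D}(0) - Q^{\star,\pavlov}_{\cd,\C}(0) = (1-\gamma)\rdd + \gamma\rcc - \rcd$ is strictly positive under the standard PD ordering $\rcc > \rdd > \rcd$, with no extra constraint on $\gamma$; the $\dc$ case is identical by the Pavlov $\cd\!\leftrightarrow\!\dc$ symmetry.

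Finally, all four strict inequalities hold at $\epsilon = 0$ and $\epsilon \mapsto Q^{\star,\pavlov}(\epsilon)$ is continuous, so they persist on some neighborhood $[0,\epsilon_0)$, which is the ``$\epsilon$ small enough'' in the statement. The main technical nuisance is bookkeeping the self-play transitions: each agent queries the same $Q$-table through its own ``first-coordinate-is-me'' state, but invariance of Pavlov under the $\cd\!\leftrightarrow\!\dc$ swap removes the need to track two different opponent models, after which the argument is just linear algebra plus a continuity step.
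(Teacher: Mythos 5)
Your proposal is correct, and its skeleton is the same guess-and-verify argument as the paper's: postulate that the greedy action pattern is \Pavlovpolicy, so that every $\max$ in \Cref{eq_bellman_multi} collapses and the fixed-point condition becomes a linear system, solve it, and then check the four strict inequalities a posteriori. The difference is in how $\epsilon$ is handled. The paper keeps $\epsilon$ symbolic throughout: it writes the eight Bellman equations for the $\epsilon$-greedy \Pavlovpolicy~policy, exploits the symmetries $\qccc^{\star,\pavlov}=\qddc^{\star,\pavlov}$, $\qccd^{\star,\pavlov}=\qddd^{\star,\pavlov}$, $\qcdc^{\star,\pavlov}=\qdcc^{\star,\pavlov}$, $\qcdd^{\star,\pavlov}=\qdcd^{\star,\pavlov}$ to reduce to a $2\times 2$ system, and obtains explicit $\epsilon$-dependent closed forms for the two relevant differences, from which positivity (and an explicit threshold on $\epsilon$) can be read off directly. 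You instead compute only the $\epsilon=0$ solution, by tracing the deterministic \Pavlovpolicy~trajectories, and then invoke continuity of $\epsilon\mapsto (I-\gamma P(\epsilon))^{-1}b(\epsilon)$ to propagate the strict inequalities to a neighborhood of $0$. Your $\epsilon=0$ computations match the $\epsilon\to 0$ limit of the paper's formulas, and in particular you locate the role of the hypothesis $\gamma>(\rdc-\rcc)/(\rcc-\rdd)$ in exactly the same place (the synchronized states $\cc$ and $\dd$), while the unsynchronized-state inequalities hold unconditionally. What the continuity route buys is brevity and robustness to algebra errors; what it gives up is the explicit condition on $\epsilon$, which the paper's computation in principle provides (and which the main text alludes to when it says the exact condition on $\epsilon$ is recalled in the appendix). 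Both arguments are sound.
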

% Note that the condition $1>\gamma > \tfrac{2-g}{2(g-1)}$ implies $g> 4/3$.
Proofs of \Cref{prop_nash_eq,prop_pavlov_eq},
the exact $Q$-values
and condition on $\epsilon$ are recalled in \Cref{app_sub_q_vlaue_nash,app_sub_q_vlaue_pavlov} for completeness.
In the rest of the manuscript we assume that $\gamma > \tfrac{\rdc - \rcc}{\rcc - \rdd}$, and the exploration $\epsilon$ is small enough such that \Pavlovpolicy~policy exists.
%%%%%%%%%%%%%%%%%%%%%%%%%%%%%%%%%%%%%%%%%%%%%%%%%%%%%%%%%%%%%%%%%%%%%%%%%%%%

  %%%%%%%%%%%%%%%%%%%%%%%%%%%%%%%%%%%%%%%%%%%%%%%%%%%%%%%%%%%%%%%%%%%%%%%%%%%%
The main takeaway from \Cref{prop_pavlov_eq} is that there exists a fixed point of the self-play multi-agent Bellman \Cref{eq_bellman_multi} whose associated strategy is cooperative.
Interestingly, the only other fixed-point strategy is the (cooperative) \grimtrigger~policy (\citealt[Table 1]{Usui2021}; \citealt{Meylahn2022}).
On the opposite, \emph{tit-for-tat} is not a fixed-point policy of the Bellman \Cref{eq_bellman_multi}.
\Cref{tab_summary_policies}
summarizes the greedy action of each fixed-point policy.
Interestingly, \alwaysdefect, \Pavlovpolicy, and \grimtrigger~policies also are \emph{subgame perfect equilibrium}, which is a stronger notion of equilibrium than Nash equilibrium for iterated games
(the subgame perfect definition can be found in Sec. 5.5 of \citealt{osborne2004introduction}).
For completeness, the proofs of the latter statement are recalled in \Cref{app_proof_always_defect_pavlov_nash}.

\Cref{prop_nash_eq,prop_pavlov_eq} illustrate the challenges of multi-agent reinforcement learning for the prisoner's dilemma:
multiple equilibria exist that yield outcomes with markedly different levels of cooperation. Thus, one should not only care about convergence toward equilibrium, but one should also care about \emph{which type} of equilibrium the agents converge to.
The reached equilibrium will depend on the \emph{dynamics} of the algorithm used to estimate the policy $\pi^\star$: this dynamics is studied in \Cref{sec_dynamic} for $Q$-learning.

%%%%%%%%%%%%%%%%%%%%%%%%%%%%%%%%%%%%%%%%%%%%%%%%%%%%%%%%%%%%%%%%%%%%%%%%%%%%
\begin{table}[tb]
    \caption{Summary of the symmetric fixed-point policies of the self-play multi-agent Bellman \Cref{eq_bellman_multi}. The table displays the greedy action $a_t$ of each policy, given the state $s_t = (a_{t-1}^1, a_{t-1}^2)$.}
    \label{tab_summary_policies}
    \centering
    \vspace{1em}
    \begin{tabular}{l@{\hspace{.1cm}}|@{\hspace{.1cm}}c@{\hspace{.1cm}}c@{\hspace{.1cm}}c@{\hspace{.1cm}}c@{\hspace{.1cm}}|c}
    %   \toprule
      \diagbox[width=\dimexpr \textwidth/8+2\tabcolsep\relax, height=.8cm]{ Policy}{State $s_t$}
      % Policy
       & $(\D, \D)$
      %  & Output
       & $(\C, \C)$
       & $(\C, \D)$
       & $(\D, \C)$
       & {$\substack{\text{Fixed} \\ \text{Point}}$}
       \\
      \midrule
      \Alwaysdefect
      &$\D$
      &$\D$
      &$\D$
      &$\D$
      & $\textcolor{olive}{\checkmark}$
      \\
      \Loseshift
      & $\C$
      &$\D$
      &$\D$
      &$\D$
      & $\textcolor{red}{\pmb \times}$
      \\
      \Grimtrigger
      & $\D$
      & $\C$
      &$\D$
      &$\D$
      & $\textcolor{olive}{\checkmark}$
      \\
      \Pavlovpolicy
      & $\C$
      & $\C$
      &$\D$
      &$\D$
      & $\textcolor{olive}{\checkmark}$
      \\
    %   \bottomrule
    \end{tabular}
  \end{table}
  %%%%%%%%%%%%%%%%%%%%%%%%%%%%%%%%%%%%%%%%%%%%%%%%%%%%%%%%%%%%%%%%%%%%%%%%%%%%

% !TEX root = ../icml2025.tex

%%%%%%%%%%%%%%%%%%%%%%%%%%%%%%%%%%%%%%%%%%%%%%%%%%%%%%%%%%%%%%%%%%%%%%
%%%%%%%%%%%%%%%%%%%%%%%%%%%%%%%%%%%%%%%%%%%%%%%%%%%%%%%%%%%%%%%%%%%%%%
\section{$Q$-Learning Dynamic in the Iterated Prisoner's Dilemma}
\label{sec_dynamic}
%%%%%%%%%%%%%%%%%%%%%%%%%%%%%%%%%%%%%%%%%%%%%%%%%%%%%%%%%%%%%%%%%%%%%%
The complex structure of the multi-agent $Q$-learning with memory yields multiple fixed-point policies for the Bellman \Cref{eq_bellman_multi}.
In this section, for a specific set of $Q$-value initializations $Q^0$, we show convergence of the dynamics resulting from $\epsilon$-greedy $Q$-learning with memory updates (\ie \Cref{eq_avg_q_learning_avg_oponent}) toward the cooperative fixed point \Pavlovpolicy~policy.

More precisely, we show that, with an ``optimistic enough'' initialization, the  $\epsilon$-greedy policy with a small enough learning rate $\alpha$ starts from the \alwaysdefect~policy, then moves to the \loseshift~policy, and finally converges towards the \Pavlovpolicy~policy.
The $Q$-values at initialization are required to be  ``optimistic enough'' in the sense that they are "set to large values, larger than their optimal values" \citep{Even2001}.

For exposition purposes, the greedy case ($\epsilon=0$ in \Cref{alg:epsilon_greedy}) is presented in \Cref{sub_multiagent_fully_greedy}, and the general case ($0 < \epsilon < 1/2$) in \Cref{sub_multiagent_epsilon_greedy}.
%

%%%%%%%%%%%%%%%%%%%%%%%%%%%%%%%%%%%%%%%%%%%%%%%%%%%%%%%%%%%%%%%%%%%%%%
%%%%%%%%%%%%%%%%%%%%%%%%%%%%%%%%%%%%%%%%%%%%%%%%%%%%%%%%%%%%%%%%%%%%%%
\subsection{Fully Greedy Policy with No Exploration (\Cref{alg:epsilon_greedy} with $\epsilon=0$)}
\label{sub_multiagent_fully_greedy}
\begin{figure*}[tb]
    \centering
    \includegraphics[width=0.6\linewidth]{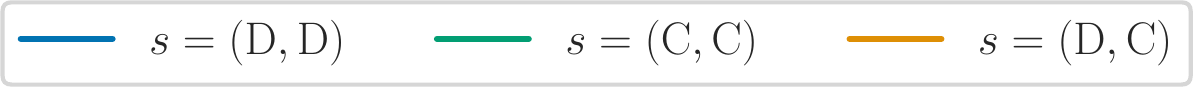}
    \includegraphics[width=1\linewidth]{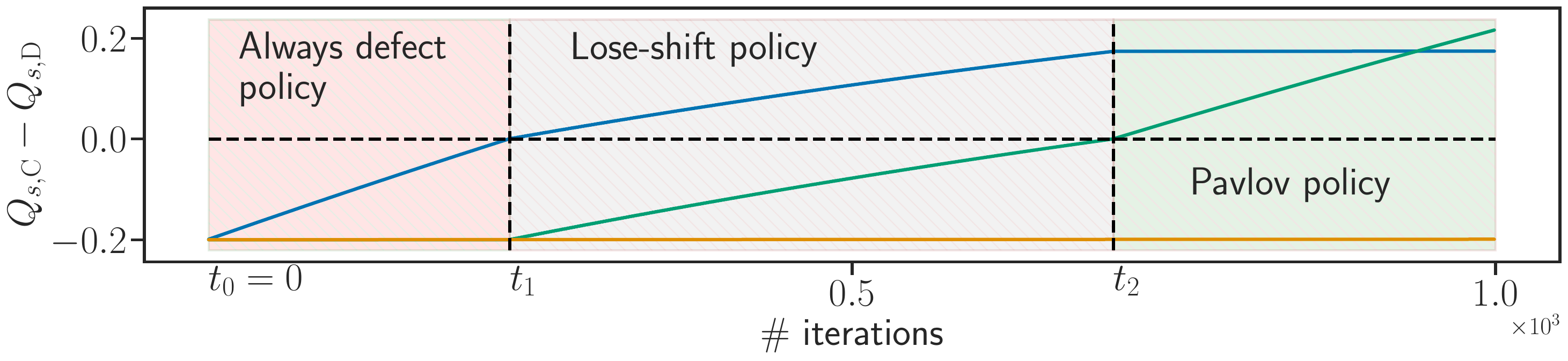}
        % \vspace{-3mm}
    \caption{
        \textbf{From \texttt{always defect} to \texttt{Pavlov} policy, \Cref{alg:qlearning} with  no exploration (\ie $\epsilon=0$)}.
        Evolution of the $Q$-learning policy as a function of the number of iterations in the iterated prisoner's dilemma.
        With a correct (optimistic) initialization, players go from an \texttt{always defect} policy to the \loseshift~policy (at time $t_1$), and then go to the cooperative \Pavlovpolicy~policy (at time $t_2$).
        The incentive to cooperate and the discount factor are set to $g=1.8$ (see \Cref{tab_prisoner_dilemma_parameterized}) and $\gamma = 0.6$.
        }
        \label{fig_illustration_policy}
\end{figure*}
%%%%%%%%%%%%%%%%%%%%%%%%%%%%%%%%%%%%%%%%%%%%%%%%%%%%%%%%%%%%%%%%%%%%%%
%%%%%%%%%%%%%%%%%%%%%%%%%%%%%%%%%%%%%%%%%%%%%%%%%%%%%%%%%%%%%%%%%%%%%%
We first show the case $\epsilon=0$, \ie with no exploration.
We assume that the initial policy is \alwaysdefect, and we require ``optimistic enough'' initial $Q$-values.

% \quentin{Remove that this is an assumption, this is a choice of the algorithm}
%%%%%%%%%%%%%%%%%%%%%%%%%%%%%%%%%%%%%%%%%%%%%%%%%%%%
\begin{assumption}[$Q$-values Initialization]
    \label{ass_qvalues}
    \phantom{phant}
    \begin{assumenum}[topsep=0pt,itemsep=0pt,partopsep=0pt,parsep=0pt]
        %%%%%%%%%%%%%%%%%%%%%
        \item \label{ass_qddd}
        $ \qddc^{t_0} > \frac{\rdd}{1 - \gamma}\enspace$ \enspace,
        %%%%%%%%%%%%%%%%%%%%%
        \item \label{ass_qccd}
        $\qccd^{t_0} > \frac{\qddc^{t_0} - \rcc}{ \gamma}$
        %%%%%%%%%%%%%%%%%%%%%
        \item \label{ass_qccc}
        $\qccc^{t_0}
        <
        \tfrac{\rcc}{1 - \gamma} \enspace$.
        %%%%%%%%%%%%%%%%%%%%%
    \end{assumenum}
\end{assumption}
%%%%%%%%%%%%%%%%%%%%%%%%%%%%%%%%%%%%%%%%%%%%%%%%%%%%%%%%%%%%%%%%%%%%%%%%%%%%%
\Cref{ass_qddd} ensures that the algorithm moves from \alwaysdefect~to \loseshift~policy.
\Cref{ass_qccd}
ensure that the algorithm shifts from \loseshift~to \Pavlovpolicy~policy, and \Cref{ass_qccc} ensures that the policy sticks to \Pavlovpolicy.

%%%%%%%%%%%%%%%%%%%%%%%%%%%%%%%%%%%%%%%%%%%%%%%%%%%%%%%%%%%%%%%%%%%%%%%%%%%%%
\xhdr{$Q$-values Initialization in Practice}
How can we ensure that \Cref{ass_qddd,ass_qccd,ass_qccc} are actually satisfied in practice?
For standard $Q$-learning, the initial $Q$-values are user-defined parameters of the algorithm (see \Cref{alg:qlearning}).
\Cref{fig_influ_initialization} illustrates how the initial $Q$-values—specifically \Cref{ass_qddd,ass_qccd}—influence the resulting policy.
However, in more complex settings such as deep $Q$-learning, the initial $Q$-values are not explicitly controlled, since they result from randomly initialized neural network weights.
In such cases, initializing the weights to directly satisfy \Cref{ass_qvalues} is non-trivial.

To address this, we propose a practical approach to approximate initialization toward an \alwaysdefect~policy, which can subsequently shift toward cooperation.
Specifically, we suggest initializing the $Q$-values using the output of \Cref{alg:qlearning} executed with an exploration parameter $\epsilon = 1/2$.
This corresponds to a uniformly random policy:
$p(\C \mid s) = p(\D \mid s) = \frac{1}{2}, \quad \forall s \in \{\cc, \dd, \cd, \dc\}$.
Under this initialization, the self-play multi-agent Bellman \Cref{eq_bellman_multi} admits a unique fixed-point policy, which corresponds to the \alwaysdefect~strategy.

% \xhdr{$Q$-values Initialization in Practice}
% How to make sure that \Cref{ass_qddd,ass_qccd,ass_qccc} are actually enforced?
% For standard $Q$-learning, the $Q$-values at initialization are parameters of the algorithm (see \Cref{alg:qlearning}) and are choices of the user.
% \Cref{fig_influ_initialization} shows the impact of the initial $Q$-values \Cref{ass_qddd,ass_qccd} on the achieved policy.
% However, for more complex algorithms, such as deep $Q$-learning, one does not easily control the initial $Q$-values.
% In other words, one cannot easily initialize the weights of the networks to satisfy \Cref{ass_qvalues}.
% Below, we provide one practical way to initialize the $Q$-values to an always defect policy, which will shift to a cooperative policy.
% Practically, in these more complex settings, one can \emph{initialize} the $Q$-values to the output of \Cref{alg:qlearning}, ran with the exploration parameter $\epsilon=1/2$.
% %
% This corresponds to the uniformly random policy,
% $p(\C|s) = p(\D|s) = 1/2\,, \,\forall s \in \{\cc, \dd,\cd, \dc\}$.
% In this case, \Cref{eq_bellman_multi} has a unique fixed-point policy, which yields an \alwaysdefect~strategy.
%%%%%%%%%%%%%%%%%%%%%%%%%%%%%%%%%%%%%%%%%%%%%%%%%%%%%%%%%%%%%%%%%%
%%%%%%%%%%%%%%%%%%%%%%%%%%%%%%%%%%%%%%%%%%%%%%%%%%%%%%%%%%%%%%%%%%

%%%%%%%%%%%%%%%%%%%%%%%%%%%%%%%%%%%%%%%%%%%%%%%%%%%%%%%%%%%%%%%%%%%%%%%%%%%%
\begin{theorem}\label{thm_conv_fully_greedy}
    Suppose the initial policy is \alwaysdefect~and the initial state $s_0$ is \texttt{defect defect}: $s_0 = \dd$.
    Then for all $Q$-values initializations $Q^{t_0}$ that satisfy \Cref{ass_qvalues},
    \Cref{alg:qlearning} with no exploration ($\epsilon=0$) moves away from the \alwaysdefect~policy,
    and learn the cooperative \Pavlovpolicy~policy.
    % in $O(1/\alpha)$ steps.
\end{theorem}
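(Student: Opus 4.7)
The plan is a three-phase argument that tracks the deterministic evolution of the $Q$-table, using the fact that the no-exploration update modifies only the entry $\qval{s_t}{a_t^1}$ actually visited at step $t$; every other entry stays pinned at its initialization value. \textbf{Phase 1 (\alwaysdefect{} $\to$ \loseshift{}).} Starting at $s_0 = \dd$ with every greedy action equal to $\D$, the trajectory is absorbed in $\dd$ and only $\qddd$ is updated, through the affine contraction $\qddd^{t+1} = \bigl(1-\alpha(1-\gamma)\bigr)\qddd^{t} + \alpha\rdd$ whose unique fixed point is $\rdd/(1-\gamma)$. By \Cref{ass_qddd} we have $\qddd^{t_0} > \rdd/(1-\gamma)$ while $\qddc$ remains at $\qddc^{t_0} > \rdd/(1-\gamma)$, so $\qddd$ decreases monotonically and at some finite time $t_1$ drops below $\qddc^{t_0}$. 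The greedy action at $\dd$ flips to $\C$, and since no other $Q$-entry has moved, the policy is exactly \loseshift{}.

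\textbf{Phase 2 (\loseshift{} $\to$ \Pavlovpolicy{}).} Under \loseshift{}, starting from $\dd$ the trajectory becomes strictly periodic, $\dd \to \cc \to \dd \to \cc \to \cdots$, and only $\qddc$ and $\qccd$ are updated. Folding two consecutive updates into one step yields a $2\times 2$ affine map with unique fixed point $u^\star = (\rcc+\gamma\rdd)/(1-\gamma^2)$ and $v^\star = (\rdd+\gamma\rcc)/(1-\gamma^2)$; for $\alpha$ small enough, the spectral radius of the associated error map is strictly less than one, so $(\qddc,\qccd) \to (u^\star, v^\star)$ geometrically. Two properties then need verification. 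First, the \loseshift{} prescription at $\dd$ must be preserved: $\qddd$ is now frozen at $\rdd/(1-\gamma)$ while both endpoints $\qddc^{t_1}$ and $u^\star$ strictly exceed $\rdd/(1-\gamma)$ (using $\rcc > \rdd$), so for small $\alpha$ the $\qddc$-trajectory never crosses below $\qddd$. Second, the identity $\rcc/(1-\gamma) - (\rcc-\rdd)/(1-\gamma^2) = v^\star$ combined with \Cref{ass_qccd} gives $v^\star < \qccc^{t_0}$; since $\qccc$ stays frozen at $\qccc^{t_0}$ while $\qccd \to v^\star$, at some finite $t_2$ the greedy action at $\cc$ flips to $\C$, producing the \Pavlovpolicy{} policy.

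\textbf{Phase 3 (\Pavlovpolicy{} is absorbing), and main obstacle.} From $t_2$ onwards \Pavlovpolicy{} sends $\cc \to \cc$, only $\qccc$ is updated, and it contracts from below toward $\rcc/(1-\gamma)$ by \Cref{ass_qccc_qddd}. The remaining greedy prescriptions are preserved: $\qccd$ is frozen strictly below $\qccc^{t_2}$ while $\qccc$ only grows; $\qddc > \qddd$ is inherited from Phase 2 and neither is updated; and $\qcdd > \qcdc$, $\qdcd > \qdcc$ hold from the initial \alwaysdefect{} configuration and are never touched. The delicate step is Phase 2, where the coupled dynamics of $(\qddc,\qccd)$ could a priori oscillate: one must (i) rule out any transient dip of $\qddc$ below $\qddd$ that would snap the policy back to \alwaysdefect{}, and (ii) ensure that the crossing $\qccd < \qccc^{t_0}$ actually occurs in finite time. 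Both reduce to a quantitative analysis of the $2\times 2$ error map, whose real eigenvalues $1-\alpha(1-\gamma)$ and $1-\alpha(1+\gamma)$ become arbitrarily close to $1$ as $\alpha \to 0$, allowing each coordinate to evolve monotonically in the direction predicted by the linear fixed point.
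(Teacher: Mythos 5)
Your three-phase decomposition coincides with the paper's, and Phases 1 and 3 match its proof essentially step for step: the same scalar contraction for $\qddd$ toward $\rdd/(1-\gamma)$, the same coupled affine recursion for $(\qccd,\qddc)$ with the same fixed point, the same use of \Cref{ass_qccd} to force the crossing $\qccd<\qccc^{t_0}$ in finite time, and the same absorption argument for \Pavlovpolicy. The gap is exactly at the step you yourself flag as delicate, and the resolution you sketch does not work. You assert that in Phase 2 ``$\qddd$ is now frozen at $\rdd/(1-\gamma)$'' and deduce that $\qddc$ cannot dip below $\qddd$ because both $\qddc^{t_1}$ and the limit $u^\star$ exceed $\rdd/(1-\gamma)$. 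But Phase 1 ends at the \emph{first} time $t_1$ with $\qddd^{t_1}<\qddc^{t_0}$, so $\qddd$ is frozen at $\qddd^{t_1}$, which satisfies $\qddc^{t_1}-\qddd^{t_1}\leq \alpha(1-\gamma)\bigl(\qddc^{t_0}-\tfrac{\rdd}{1-\gamma}\bigr)=O(\alpha)$ and can be far above $\rdd/(1-\gamma)$ (\Cref{ass_qddd} gives no upper bound on $\qddc^{t_0}$). There is therefore no fixed margin for the $\qddc$-trajectory to exploit, and ``taking $\alpha$ small'' cannot rescue the argument: the gap that must not be crossed shrinks at the same rate as the step size. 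The eigenvalue heuristic is also insufficient on its own, since with two distinct real eigenvalues a coordinate of the error, $c_1\lambda_1^t+c_2\lambda_2^t$, need not evolve monotonically.

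The paper closes this point with a one-line induction (\Cref{lemma_stay_lose_shift}): each Phase-2 update of $\qddc$ is a convex combination of its previous value and $\rcc+\gamma\qccd^{t}$, and as long as $\qccd^{t}>\qccc^{t_0}$ one has $\rcc+\gamma\qccd^{t}>\rcc+\gamma\qccc^{t_0}>\qccc^{t_0}>\qddd^{t_0}\geq\qddd^{t_1}$, which uses precisely $\qccc^{t_0}<\rcc/(1-\gamma)$ (\Cref{ass_qccc_qddd}) and $\qddd^{t_0}<\qccc^{t_0}$ (part of \Cref{ass_qccd})—two hypotheses your sketch never invokes for this purpose. Replacing your margin-plus-monotonicity argument by this induction repairs the proof; the remainder of your proposal then goes through as written.
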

\looseness=-1
\Cref{fig_illustration_policy} illustrates \Cref{thm_conv_fully_greedy} and shows the evolution of \Cref{eq_bellman_multi} as a function of the number of iterations.
At $t=t_0$, $\qval{s}{\D} > \qval{s}{\C}$ for all states $s \in \{ \C, \D\}^2$.
The greedy action is playing defect, $\qddd - \qddc$ increases and agents progressively learn the \loseshift~policy.
Once the \loseshift~policy is learned, the greedy actions are ``cooperate'' when the state is $\dd$ and ``defect'' when the state is $\cc$.
Hence, in this phase, \Cref{alg:qlearning} successively updates $\qccd$ and $\qddc$ entries, until $\qccd$ goes below $\qccc$.
From this moment, the dynamic changes, and agents start to play the \Pavlovpolicy~policy and do not change.
%
%%%%%%%%%%%%%%%%%%%%%%%%%%%%%%%%%%%%%%%%%%%%%%%%%%%%%%%%%%%%%%%%%%%%%%%%%%%%%
\begin{figure*}[tb]
    \centering
    \includegraphics[width=.6\linewidth]{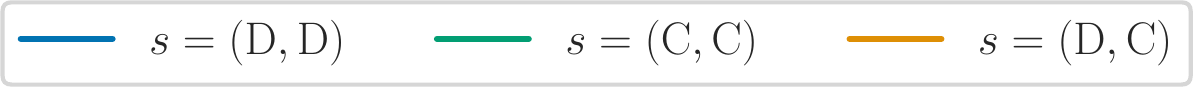}

    \includegraphics[width=1\linewidth]{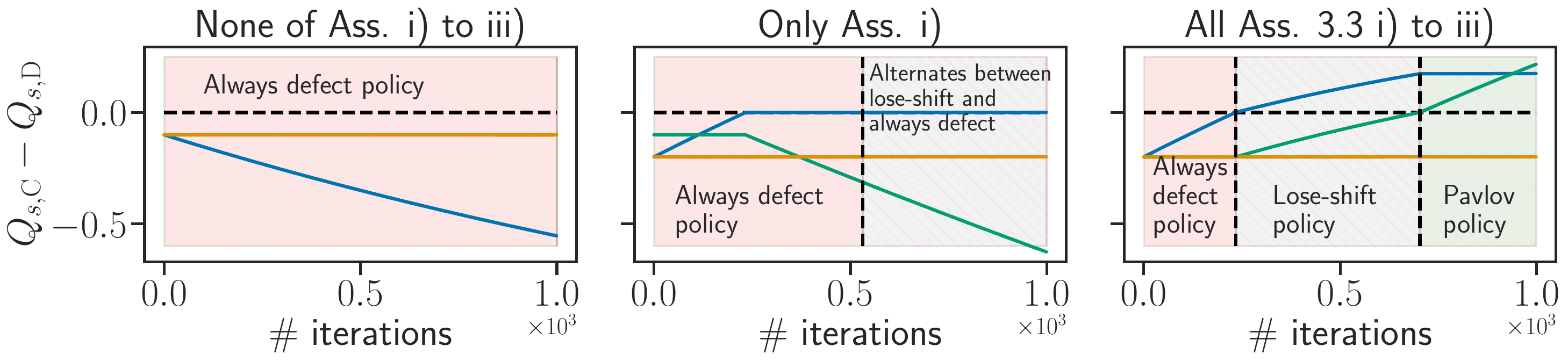}
    % \vspace{-6mm}
    \caption{
        \textbf{Influence of \Cref{ass_qddd,ass_qccd,ass_qccc}} ($\gamma = 0.6$ and $g=1.8$ using the parameterization of \Cref{tab_prisoner_dilemma_parameterized}).
        Evolution of the $Q$-learning policy as a function of the number of iterations in the iterated prisoner's dilemma, for multiple initializations.
        If one assumption from \Cref{ass_qddd,ass_qccd,ass_qccc} is not satisfied, then the \Pavlovpolicy~policy is not achieved.
        Except for the initialization, the experimental setting is the same as for \Cref{fig_illustration_policy}.
        }
       \label{fig_influ_initialization}
\end{figure*}
%%%%%%%%%%%%%%%%%%%%%%%%%%%%%%%%%%%%%%%%%%%%%%%%%%%%%%%%%%%%%%%%%%%%%%%%%%%%%
\begin{proof}(\Cref{thm_conv_fully_greedy})
    As illustrated in \Cref{fig_illustration_policy}, the trajectory can be decomposed in $3$ phases:
    first, the policy goes from \alwaysdefect~to \loseshift~policy (Phase 1).
    Then, the policy goes from \loseshift~to \Pavlovpolicy~(Phase 2).
    Finally, the policy stays in the \Pavlovpolicy~policy (Phase 3).

    % \begin{itemize}
        $\bullet$ \textbf{Phase 1 -- From \alwaysdefect~to \loseshift~
        ($0 \leq t \leq t_1$).}
        % \\
        At $t_0$ we have,
        $s_0=\dd$,
        and for all
        $s \in \{\C, \D\}^2 $,
        $\qval{s}{\D} > \qval{s}{\C}$, hence the greedy action is defect, and is always chosen since $\epsilon=0$.
        Then, the $Q$-learning update \Cref{eq_avg_q_learning_avg_oponent} writes
        %%%%%%%%%%%%%%%%%%%%%%%%%%%%%%%%%%%%%%%%%%%%%%%%%%%%%%%%%%%%%%%%%%
        %%%%%%%%%%%%%%%%%%%%%%%%%%%%%%%%%%%%%%%%%%%%%%%%%%%%%%%%%%%%%%%%%%
        \begin{align}
            \qddd^{t+1}
            &=
            \qddd^{t}
            \nonumber
            \\
            &
            + \alpha \left (\rdd + \gamma \qddd^{t} - \qddd^{t} \right )
            \label{eq_q_value_first_part}
            .
        \end{align}
        %%%%%%%%%%%%%%%%%%%%%%%%%%%%%%%%%%%%%%%%%%%%%%%%%%%%%%%%%%%%%%%%%%
        %%%%%%%%%%%%%%%%%%%%%%%%%%%%%%%%%%%%%%%%%%%%%%%%%%%%%%%%%%%%%%%%%%
        % \looseness=-1
        Thus, while $\qddd > \qddc$, $\qddd$ is the only updated entry
        % $Q$-value
        and converges linearly towards $\qddd^{\star, \defect}$:
        %%%%%%%%%%%%%%%%%%%%%%%%%%%%%%%%%%%%%%%%%%%%%%%%%%%%%%%%%%%%%%%%%%
        %%%%%%%%%%%%%%%%%%%%%%%%%%%%%%%%%%%%%%%%%%%%%%%%%%%%%%%%%%%%%%%%%%
        \begin{align*}
            &\qddd^{t+1} - \qddd^{\star, \defect}
            \\& =
            \left(1 - \alpha (1 - \gamma) \right)^t \left (\qddd^{t} - \qddd^{\star, \defect}  \right)
            \enspace,
        \end{align*}
        %%%%%%%%%%%%%%%%%%%%%%%%%%%%%%%%%%%%%%%%%%%%%%%%%%%%%%%%%%%%%%%%%%
        %%%%%%%%%%%%%%%%%%%%%%%%%%%%%%%%%%%%%%%%%%%%%%%%%%%%%%%%%%%%%%%%%%
        where $\qddd^{\star, \defect} \triangleq \reward{\D}{\D} / (1-\gamma).$
        Thus,
        since $\qddc^{t_0} > \qddd^{\star, \nash} \triangleq \reward{\D}{\D} / (1 - \gamma)$ (\Cref{ass_qddd}),
        $\qddd$ converges linearly towards $\qddd^{\star, \defect} < \qddc^{t_0}$.
        Hence there exists $t_1$ such that $\qddc^{t_1} = \qddc^{t_0} > \qddd^{t_1} $.
        Once this time $t_1$ is reached, the policy switches from \alwaysdefect~to \loseshift, and
        the update in \Cref{eq_q_value_first_part} no longer guides the dynamics.

        \looseness=-1
        $\bullet$ \textbf{Phase 2 -- From \loseshift~to \Pavlovpolicy~
        ($t_1 \leq t \leq t_2$)}.
        % \\
        In this phase, players alternate to defect in the $\cc$ state and cooperate in the $\dd$ state.
        Hence the only entries successively updated are $\qccd$ and $\qddc$.
        For all $t \geq t_1$,
        \Cref{eq_avg_q_learning_avg_oponent} becomes
        % \footnote{For clarity, one time step of \Cref{eq_lose_shift_qccd,eq_lose_shift_qddc} corresponds to two $\epsilon$-greedy updates.}
            \begin{align}
                \qccd^{2 t+1}
                & =
                (1 - \alpha)  \qccd^{2 t}
                + \alpha \left (\rdd + \gamma \qddc^{2 t} \right )
                \label{eq_lose_shift_qccd}
            \enspace
            ,
            \\
                \label{eq_lose_shift_qddc}
                \qddc^{2t+2}
                & =
                (1 - \alpha) \qddc^{2t+1}
                +
                \alpha
                \left (
                    \rcc +
                    \gamma
                    \qccd^{2t+1}
                \right )
                \enspace.
            \end{align}
        Similarly to Phase 1, one can show that $\qccd$ converges linearly towards
        $\qccd^{\star, \loseshift}$.
        Additionally, one can show that,
        while the dynamic follows \Cref{eq_lose_shift_qccd,eq_lose_shift_qddc}, and $\qccd^{t} > \qccc^{t}$,
        then $\qddc^{t} > \qddd^{t}$ (see \Cref{lemma_stay_lose_shift} in \Cref{app_lose_shift_pavlov_deter}).
        Hence, there exists $t_2$ such that $\qccd^{t_2} < \qccc^{t_2}$ and $\qddc^{t_2} > \qddd^{t_2}$: the \Pavlovpolicy~policy is reached.
        %%%%%%%%%%%%%%%%%%%%%%%%%%%%%%%%%%%%%%%%

        %%%%%%%%%%%%%%%%%%%%%%%%%%%%%%%%%%%%%%%%
        $\bullet$ \textbf{Phase 3 -- Staying in Pavlov~($t \geq t_2$)}.
        \\
        In this part of the trajectory, both players cooperate in the state $\cc$, and \Cref{eq_avg_q_learning_avg_oponent} writes
        \begin{align*}
            \qccc^{t+1}
            =
            \qccc^{t}
            +
            \alpha
            \left (\rcc + \gamma \qccc^{t} - \qccc^{t} \right )
            % \enspace
            .
        \end{align*}
        The only updated $Q$-entry is $\qccc$, and $\qccc$ converges linearly toward $\qccc^{\star, \pavlov}$
        \begin{align*}
            &\qccc^{t+1}
            -
            \qccc^{\star, \pavlov}
            \\
            &=
            (1 - \alpha (1 - \gamma))
            \left (
                \qccc^{t}
                -
                \qccc^{\star, \pavlov}
            \right )
            \enspace.
        \end{align*}
        Since
        $\qccc^{\star, \pavlov} \triangleq \rcc / (1 - \gamma) > \qccd^{t_2} = \qccd^{t_0}$ (\Cref{ass_qccc}),  there is no other change of policy.

        We have shown that the convergence is linear in each phase
        (from \alwaysdefect~to \loseshift, from \loseshift~to \Pavlovpolicy, staying in \Pavlovpolicy).
        In addition, one can show that the time to go from one policy to another varies as $\bigo(1/\alpha)$ in each phase
        (see \Cref{app_conv_rate_deterministic} for the proof of this result).
\end{proof}
%%%%%%%%%%%%%%%%%%%%%%%%%%%%%%%%%%%%%%%%%%%%%%%%%%%%%%%%%%%%%%%%%%%%%%%%%%%%
%%%%%%%%%%%%%%%%%%%%%%%%%%%%%%%%%%%%%%%%%%%%%%%%%%%%%%%%%%%%%%%%%%%%%%%%%%%%%%
\begin{figure*}[tb]
    \centering
    \includegraphics[width=0.6\linewidth]{figures/q_val_iter_sto_legend.pdf}

    \includegraphics[width=1\linewidth]{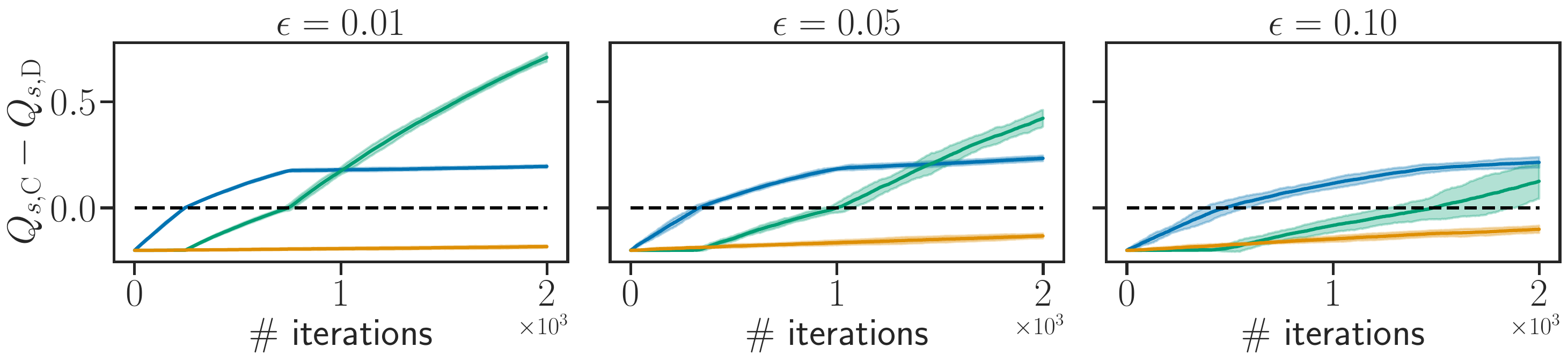}
    % \vspace{-3mm}
    \caption{
        \textbf{From \texttt{always defect} to \texttt{Pavlov} policy, \Cref{alg:qlearning} with exploration} ($g=1.8$ and $\gamma = 0.6$).
        Evolution of the $Q$-learning policy as a function of the number of iterations in the iterated prisoner's dilemma.
        With a correct (optimistic) initialization, players go from an \texttt{always defect} policy to the \loseshift~policy and then to the cooperative \Pavlovpolicy~policy.
        % $100$ runs, 1.95 std. $\alpha=1e^{-3}$
        }
       \label{fig_q_learning_epsion_greedy}
\end{figure*}
%%%%%%%%%%%%%%%%%%%%%%%%%%%%%%%%%%%%%%%%%%%%%%%%%%%%%%%%%%%%%%%%%%%%%%%%%%%%
%%%%%%%%%%%%%%%%%%%%%%%%%%%%%%%%%%%%%%%%%%%%%%%%%%%%%%%%%%%%%%%%%%%%%%%%%%%%
In \Cref{sub_multiagent_fully_greedy}, we showed that agents learned to cooperate with no exploration.
In \Cref{sub_multiagent_epsilon_greedy}, we investigate cooperation in the general case, with an exploration parameter $\epsilon>0$.
% The proof idea is very similar but more technical.
%%%%%%%%%%%%%%%%%%%%%%%%%%%%%%%%%%%%%%%%%%%%%%%%%%%%%%%%%%%%%%%%%%%%%%%%%%%%
%%%%%%%%%%%%%%%%%%%%%%%%%%%%%%%%%%%%%%%%%%%%%%%%%%%%%%%%%%%%%%%%%%%%
%%%%%%%%%%%%%%%%%%%%%%%%%%%%%%%%%%%%%%%%%%%%%%%%%%%%%%%%%%%%%%%%%%%%
\subsection{$\epsilon$-Greedy $Q$-Learning with Exploration (\Cref{alg:epsilon_greedy} with $\epsilon > 0$)}
\label{sub_multiagent_epsilon_greedy}
%%%%%%%%%%%%%%%%%%%%%%%%%%%%%%%%%%%%%%%%%%%%%%%%%%%%%%%%%%%%%%%%%%%%
%%%%%%%%%%%%%%%%%%%%%%%%%%%%%%%%%%%%%%%%%%%%%%%%%%%%%%%%%%%%%%%%%%%%
% \subsubsection{Step 0: $Q$-learning with pure exploration}
%%%%%%%%%%%%%%%%%%%%%%%%%%%%%%%%%%%%%%%%%%%%%%%%%%%%%%%%%%%%%%%%%%%%
%%%%%%%%%%%%%%%%%%%%%%%%%%%%%%%%%%%%%%%%%%%%%%%%%%%%%%%%%%%%%%%%%%%%
%
% \subsection{Step 1: $\epsilon$-Greedy $Q$-learning from optimistic $Q$-values}
In this section, we show that \Cref{alg:qlearning} with sufficiently small step-size $\alpha$ and exploration parameter $\epsilon$ yields
 a cooperative policy with high probability.
\begin{theorem}
    \label{thm_conv_epsilon_greedy}
    Let $\delta > 0$ and $1/2 > \epsilon>0$,
    $\epsilon$ sufficiently small such that \Pavlovpolicy~policy is a fixed-point of \Cref{eq_bellman_multi}
    (as defined in \Cref{prop_pavlov_eq}).
    Suppose that \Cref{ass_qvalues} holds, and  $s_0=\dd$.
    Then, for $\alpha \leq \frac{C\log(1/\epsilon)}{\log(1/\delta)}$,
    where $C$ is a constant that is independent of the discount factor $\gamma$, the step-size $\alpha$, the rewards and the initialization of the $Q$-values, we have that with probability $1-\delta$,
    \Cref{alg:qlearning} does
    achieve
    a cooperative policy: \Pavlovpolicy~or \texttt{lose-shift} policy in $O(1/\alpha)$ iterations.
\end{theorem}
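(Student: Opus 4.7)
The plan is to mirror the three-phase decomposition of \Cref{thm_conv_fully_greedy}—\alwaysdefect~$\to$ \loseshift~$\to$ \Pavlovpolicy—while carefully controlling the stochastic noise introduced by $\epsilon$-greedy exploration. At each step, I would write the update \Cref{eq_avg_q_learning_avg_oponent} as its conditional expectation (driven by the $\epsilon$-greedy action probabilities and the current state) plus a bounded zero-mean martingale difference. The conditional-expectation dynamics are a small $\bigo(\epsilon)$ perturbation of the deterministic recursions already analyzed in \Cref{thm_conv_fully_greedy}, while the martingale part is handled by an Azuma-Hoeffding bound over the $\bigo(1/\alpha)$ iterations spent in each phase.

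For Phase 1, both players greedily defect with probability $(1-\epsilon)^2$, so the $\dd$ state is visited almost every step and $\qddd$ drifts deterministically toward $\rdd/(1-\gamma)$ at contraction rate $1-\alpha(1-\gamma)$. Exploration steps occasionally update the off-diagonal entries $\qddc$, $\qcdc$, $\qcdd$, $\qdcc$; over an $\bigo(1/\alpha)$ horizon each of them receives $\bigo(\epsilon/\alpha)$ updates of magnitude $\bigo(\alpha)$, so its cumulative drift stays $\bigo(\epsilon)$. \Cref{ass_qddd} then guarantees that $\qddd^t$ eventually contracts below $\qddc^{t_0} - \bigo(\epsilon)$, triggering the switch to \loseshift~in $\bigo(1/\alpha)$ steps with high probability.

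For Phase 2, the dominant updates are the alternating $\qccd$/$\qddc$ recursions \Cref{eq_lose_shift_qccd,eq_lose_shift_qddc}; the monotonicity result \Cref{lemma_stay_lose_shift} should carry over up to $\bigo(\epsilon)$ corrections on the auxiliary entries. The delicate point is the Phase 2 $\to$ Phase 3 transition: I need $\qccd$ to cross $\qccc$ while $\qddc$ remains strictly above $\qddd$, so that the policy moves directly from \loseshift~to \Pavlovpolicy~rather than detouring back through \alwaysdefect. In Phase 3, the same linear contraction of $\qccc$ toward $\rcc/(1-\gamma)$ as in \Cref{thm_conv_fully_greedy} keeps the system at \Pavlovpolicy, provided the $\bigo(\epsilon)$ exploration-induced shifts in $\qccd$, $\qddc$, $\qcdd$, $\qdcc$ leave the Pavlov sign gaps strictly positive—this is exactly the regime of \Cref{prop_pavlov_eq}, which is why the statement requires $\epsilon$ small enough for the Pavlov fixed point to exist.

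The main technical obstacle is uniformly preserving several sign-gap inequalities (e.g., $\qddd^t > \qddc^t$ in Phase 1, $\qccd^t > \qccc^t$ in Phase 2, both Pavlov gaps in Phase 3) over $\bigo(1/\alpha)$ iterations without any of them accidentally flipping due to martingale fluctuations on the four state-indexed recursions. A union bound over the three phases, the Q-entries, and the time steps yields the overall failure probability $\delta$, and balancing the Azuma tail against the $\bigo(1/\alpha)$ horizon produces the stated constraint on $\alpha$ in terms of $\epsilon$ and $\delta$.
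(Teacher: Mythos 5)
Your proposal takes a genuinely different route from the paper. You decompose each update into its conditional expectation plus a bounded martingale difference and invoke Azuma--Hoeffding over the $\bigo(1/\alpha)$ horizon; the paper instead never forms a martingale at all. It conditions on the purely combinatorial event $\mathcal{E}_{k,T}$ that at most $k$ of the $T$ steps involve a non-greedy action by either player, lower-bounds its probability by $1-2^T(2\epsilon)^{k+1}$ via a binomial tail (\Cref{lemma_control_set_nongreedy}), and then treats those $\le k$ exploration steps as \emph{adversarial} perturbations, each moving any $Q$-entry by at most $\Delta_r\alpha/(1-\gamma)$ (\Cref{lemma_control_Qt1_Qt,lemma_control_q_val_nongreedy}), while the entry driven by greedy play contracts for at least $T-2k$ steps (\Cref{lemma_control_q_val_greedy}). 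This worst-case accounting is cruder than your drift-plus-noise analysis but is what directly produces the stated condition $\alpha \le C\log(1/\epsilon)/\log(1/\delta)$: requiring $2^T(2\epsilon)^{k+1}\le\delta$ with $T,k\sim 1/\alpha$ is exactly where the $\log(1/\epsilon)$ versus $\log(1/\delta)$ trade-off comes from. Your Azuma route would instead yield a failure probability governed by the squared deviation over the horizon, so you would still need a separate concentration bound on the \emph{number} of exploration events (your ``$\bigo(\epsilon/\alpha)$ updates of magnitude $\bigo(\alpha)$'' is only an expectation statement) to recover the theorem's quantitative form; what each approach buys is, respectively, sharper constants and a cleaner noise model (yours) versus elementary bookkeeping that matches the stated bound (the paper's).

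One concrete gap: you set out to prove full three-phase convergence to \Pavlopolicyfix\Pavlovpolicy, and you correctly identify the Phase~2 $\to$ Phase~3 transition (crossing $\qccd<\qccc$ while keeping $\qddc>\qddd$, without detouring back through \alwaysdefect) as the delicate step --- but you do not actually supply an argument for it, and this is precisely the step the paper also does not fully close. The theorem only claims that a \emph{cooperative} policy (\Pavlovpolicy~or \loseshift) is reached, and the paper's \Cref{lemma_ensure_not_leaving_lose_shift} only rules out jumping directly from \alwaysdefect~to \Pavlovpolicy, leaving open oscillation between \alwaysdefect~and \loseshift. So the claim you flag as ``delicate'' is genuinely unproven in your sketch and stronger than what is needed; to match the theorem you should weaken your Phase~2/3 conclusion to the disjunction the statement actually asserts, or else provide the missing monotonicity argument for the coupled $(\qccd,\qddc)$ linear system under $\bigo(\epsilon)$ perturbations.

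\newcommand{\Pavlopolicyfix}{}
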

\Cref{thm_conv_epsilon_greedy} states that for an arbitrarily large probability, there exist small enough $\alpha>0$ and $\epsilon>0$ such that agents following \Cref{alg:qlearning} converge from an \alwaysdefect~policy to a \Pavlovpolicy~policy.
The full proof of \Cref{thm_conv_epsilon_greedy} can be found in \Cref{app_proof_thm_conv_epsilon_greedy}.
A proof sketch is provided below.
\begin{proof}[Proof sketch]
We give the proof sketch for going from \texttt{always defect} to \texttt{lose-shift}  with high probability (Step 1).
The proof for \Cref{thm_conv_epsilon_greedy} is three-folded:
first, we show that the non-greedy actions are chosen at most $k$ times.
This requires controlling the occurring probability of the event
    \begin{align*}
        \mathcal{E}_{k, T} \triangleq \{
        &\text{either $a^1_t$ or $a^2_t$ is a non-greedy action}
        \\
        &\text{for \emph{at most} $k$ values of $t=1,\ldots,T$}
        \} \enspace,
    \end{align*}
which is done in \Cref{lemma_control_set_nongreedy}.
Using bounds on the $Q$-entries at each iteration (\Cref{lemma_control_Qt1_Qt}), one can control the maximum deviation of the $Q$-entries of non-greedy (\Cref{lemma_control_q_val_nongreedy}) and greedy actions (\Cref{lemma_control_Qt1_Qt}).
Building on \Cref{lemma_control_Qt1_Qt,lemma_control_q_val_nongreedy} one can
find assumptions on $k$ and $T$ such that $Q$-learners go from \alwaysdefect~
to \loseshift~\Cref{lemma_qval_not_ddd,lemma_qval_ddd}.
%%%%%%%%%%%%%%%%%%%%%%%%%%%%%%%%%%%%%%%%%%%%%%%%%%%%
%%%%%%%%%%%%%%%%%%%%%%%%%%%%%%%%%%%%%%%%%%%%%%%%%%%%%%%%%%%%%%%%%%%%%%%%%%%%%
%%%%%%%%%%%%%%%%%%%%%%%%%%%%%%%%%%%%%%%%%%%%%%%%%%%%
\begin{restatable}{lemma}{lemmaepsilongreedy}
    \label{lem_non_greedy}
    Let $0 < \epsilon < 1 / 2$, $0 < \gamma < 1$ and $0 \leq k \leq T$, $k \in \mathbb{N}$.
    Suppose that \Cref{ass_qvalues} holds, $s_0=\dd$, and both agents are guided by $\epsilon$-greedy $Q$-learning
    (\Cref{alg:qlearning}), then
    %%%%%%%%%%%%%%%%%%%%%%%%%%%%%%%%%%%%%%%%%%%%%%%%%%%%%%%%
    %%%%%%%%%%%%%%%%%%%%%%%%%%%%%%%%%%%%%%%%%%%%%%%%%%%%%%%%
     \begin{lemmaenum}[topsep=0pt,itemsep=0pt,partopsep=0pt,parsep=0pt]
        %%%%%%%%%%%%%%%%%%%%%%%%%%%%%%%%%%%%%%%%%%%%%%%%%%%%%ù
        %1
        \item \label{lemma_control_set_nongreedy}
        The probability of the event $\mathcal{E}_{k, T}$ is lower bounded
        \begin{equation*}
            \mathbb{P}(\mathcal{E}_{k, T})
            \geq
            1 - 2^T (2\epsilon)^{k+1}
            \enspace.
        \end{equation*}
        %%%%%%%%%%%%%%%%%%%%%%%%%%%%%%%%%%%%%%%%%%%%%%%%%%%%%%%%
        %2
        \item \label{lemma_control_Qt1_Qt}
        For all state-action pair $(s, a) \in \mathcal{S} \times \mathcal{A}$
        \begin{equation*}
            |\qval{s}{a}^{t+1} - \qval{s}{a}^{t} |
            \leq
            \frac{\Delta_r \alpha}{1-\gamma}
            \enspace.
        \end{equation*}
        %%%%%%%%%%%%%%%%%%%%%%%%%%%%%%%%%%%%%%%%%%%%%%%%%%%%%%%%
        %3
        \item \label{lemma_control_q_val_nongreedy} On the event $\mathcal{E}_{k, T}$,
        the deviation for the $Q$-values others than $\qddd$ is at most
        \begin{equation*}
            |\qval{s}{a}^t - \qval{s}{a}^{t_0} |
            \leq
            \frac{2k\Delta_r  \alpha}{1-\gamma} \,,\quad \forall (s,a) \neq (\dd, \D)
            \enspace.
        \end{equation*}
        %%%%%%%%%%%%%%%%%%%%%%%%%%%%%%%%%%%%%%%%%%%%%%%%%%%%%%%%
        %4
        \item \label{lemma_control_q_val_greedy}
        On the event $\mathcal{E}_{k, T}$,
        the deviation for the $Q$-value $\qddd$ is upper-bounded
        \begin{align*}
            &\qddd^{t+1} - \qddd^{\star, \defect} \leq
            \frac{2k \Delta_r \alpha}{1-\gamma}
            \\
            & +
             \left ( 1-\alpha (1 - \gamma) \right )^{T-2k}
             \left (\qddd^{t_0}- \qddd^{\star, \defect} \right )
             \enspace.
        \end{align*}
        %%%%%%%%%%%%%%%%%%%%%%%%%%%%%%%%%%%%%%%%%%%%%%%%%%%%%%%%
        %5
        \item \label{lemma_qval_not_ddd} On the event $\mathcal{E}_{k, T}$,
        for $k <\frac{ (1-\gamma) \Delta Q}{2\alpha \Delta_r}$,
        with $\Delta Q \triangleq \min_{s \neq \dd}
        \qval{s}{\D}^{t_0}
        - \qval{s}{\C}^{t_0}$
        \begin{equation*}
            \qval{s}{\D}^t
            >
            \qval{s}{\C}^t
            \,,\quad
            \forall
            t \leq T,\,s \neq \dd
            \enspace.
        \end{equation*}
        %%%%%%%%%%%%%%%%%%%%%%%%%%%%%%%%%%%%%%%%%%%%%%%%%%%%%%%%
        %6
        \item \label{lemma_qval_ddd} On the event $\mathcal{E}_{k, T}$, if
        $T > 2k
        +
        \frac{
            \log
            \left(
                \qddc^{t_0}-\qddd^{\star, \defect} - \tfrac{4k \Delta_r \alpha}{1-\gamma}
            \right)
            -
            \log
            \left ( \qddd^{t_0}- \qddd^{\star, \defect}
            \right )
            }{
                \log(1-\alpha +\gamma\alpha)
            }$, then
            \begin{equation*}
                \qddd^{T} < \qddc^T
                \enspace.
            \end{equation*}
        \end{lemmaenum}
    \end{restatable}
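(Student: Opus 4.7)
The plan is to prove the six parts in order, since each builds on the previous. The guiding intuition is that on $\mathcal{E}_{k, T}$ the multi-agent $Q$-learning process is a small perturbation of the unperturbed \alwaysdefect~dynamics studied in Phase~1 of the proof of \Cref{thm_conv_fully_greedy}: almost every update contracts $\qddd$ toward $\qddd^{\star, \defect}$, while all other $Q$-entries remain essentially pinned at their initialization values. This lets us pin down exactly when the greedy policy first switches away from \alwaysdefect.

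For (i), a time step is ``non-greedy'' if either agent picks the non-greedy action, which happens with probability at most $2\epsilon$ by a union bound over the two agents. The complement event $\mathcal{E}_{k, T}^c$ requires at least $k+1$ non-greedy steps among $T$, so a union bound over the $\binom{T}{k+1} \leq 2^T$ index subsets gives the claim. For (ii), the idea is to show by induction on $t$, using the optimistic initialization \Cref{ass_qccc_qddd}, that every $\qval{s}{a}^t$ stays in $[r_{\min}/(1-\gamma),\, r_{\max}/(1-\gamma)]$; the Bellman residual $|r + \gamma \max_{a'} Q - Q|$ is therefore bounded by $\Delta_r/(1-\gamma)$, which yields the per-step change bound.

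For (iii), on $\mathcal{E}_{k, T}$, an entry $\qval{s}{a}$ with $(s, a) \neq (\dd, \D)$ can only be updated when the realized pair $(s_t, a_t^1)$ differs from $(\dd, \D)$. This forces either $a_t^1 \neq \D$ (a non-greedy action at time $t$) or $s_t \neq \dd$ (a non-greedy action occurred at time $t-1$). Hence each of the $\leq k$ non-greedy events can trigger at most two updates of non-$\qddd$ entries, and combining with (ii) yields the $2k\alpha\Delta_r/(1-\gamma)$ bound. For (iv), the remaining $\geq T-2k$ updates concern $\qddd$ under the pure \alwaysdefect~rule, each contracting by factor $1 - \alpha(1-\gamma)$ toward $\qddd^{\star, \defect}$; unrolling the contraction and adding the at most $2k$ ``tainted'' updates, each bounded by $\alpha\Delta_r/(1-\gamma)$ via (ii), gives the stated sum.

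Parts (v) and (vi) then follow by combining the bounds with the initialization in \Cref{ass_qvalues}. For (v), part (iii) shows that the gap $\qval{s}{\D}^t - \qval{s}{\C}^t$ for $s\neq \dd$ drops by at most $4k\alpha\Delta_r/(1-\gamma)$ from its initial value, which is at least $\Delta Q$; it remains positive under the hypothesis on $k$. For (vi), part (iv) shows $\qddd^T - \qddd^{\star, \defect}$ decays geometrically in $T-2k$ clean updates, while part (iii) keeps $\qddc^T$ within $2k\alpha\Delta_r/(1-\gamma)$ of $\qddc^{t_0}$, which by \Cref{ass_qddd} strictly exceeds $\qddd^{\star, \defect}$; inverting the geometric inequality $\qddd^T < \qddc^T$ produces exactly the stated threshold on $T$. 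The main technical obstacle will be the bookkeeping in (iii)-(iv): correctly identifying which of the $T$ updates are ``clean'' contractions of $\qddd$ and which are perturbations due to non-greedy actions, and showing the perturbed count is at most $2k$ (not $k$ or some larger multiple) by careful tracking of how a single non-greedy action at time $t$ propagates into the state $s_{t+1}$. Once this is handled cleanly, the other parts reduce to routine algebraic estimates.
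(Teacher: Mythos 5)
Your proposal is correct and follows essentially the same route as the paper's proof: the binomial/union bound giving $1-2^T(2\epsilon)^{k+1}$ for (i), the invariant interval $[r_{\min}/(1-\gamma),\, r_{\max}/(1-\gamma)]$ bounding the Bellman residual for (ii), the accounting that each non-greedy step taints at most two updates (hence the $2k$) for (iii)--(iv), and the combination of these deviation bounds with \Cref{ass_qvalues} for (v)--(vi). The only differences are cosmetic: your union bound over $(k+1)$-subsets in (i) replaces the paper's explicit binomial sum while yielding the identical estimate, your explicit justification of the $2k$ count is a detail the paper leaves implicit, and (exactly as in the paper's own proof) the $4k\Delta_r\alpha/(1-\gamma)$ drop in (v) really requires $k<(1-\gamma)\Delta Q/(4\alpha\Delta_r)$, a factor of two stronger than the hypothesis as stated.
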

%%%%%%%%%%%%%%%%%%%%%%%%%%%%%%%%%%%%%%%%%%%%%%%%%%%%
%%%%%%%%%%%%%%%%%%%%%%%%%%%%%%%%%%%%%%%%%%%%%%%%%%%%
Combining \Cref{lemma_control_set_nongreedy,lemma_control_Qt1_Qt,lemma_control_q_val_greedy,lemma_control_q_val_nongreedy,lemma_qval_not_ddd,lemma_qval_ddd} yields that agents learn the \loseshift~policy with high probability.
Similar arguments hold for learning the \Pavlovpolicy~policy from the \loseshift~policy.
\end{proof}
%%%%%%%%%%%%%%%%%%%%%%%%%%%%%%%%%%%%%%%%%%%%%%%%%%%%%%%%%%%%%%%%
%%%%%%%%%%%%%%%%%%%%%%%%%
%%%%%%%%%%%%%%%%%%%%%%%%%%%%%%%%%%%%%%%%%%%%%%%%%%%%%%%%%%%%%%%%%%%%%%%%%%%%%
%%%%%%%%%%%%%%%%%%%%%%%%%%%%%%%%%%%%%%%%%%%%%%%%%%%%%%%%%%%%%%%%
\Cref{fig_q_learning_epsion_greedy} shows the evolution of the $Q$-values as a function of the number of iterations for multiple values of $\epsilon$.
$100$ of runs are averaged, and the standard deviation is displayed in the shaded area.
For $\epsilon=0.01$, we almost recover the no exploration case.
The larger the exploration parameter $\epsilon$, the less the condition for the \Pavlovpolicy~policy to be a fixed point of the multi-agent Bellman \Cref{eq_bellman_multi}  is satisfied (see \Cref{prop_pavlov_eq}).
%
%%%%%%%%%%%%%%%%%%%%%%%%%%%%%%%%%%%%%%%%%%%%%%%%%%%%

% !TEX root = ../icml2025.tex
%%%%%%%%%%%%%%%%%%%%%%%%%%%%%%%%%%%%%%%%%%%%%%%%%%%%%%%%%%%%%%%%
%%%%%%%%%%%%%%%%%%%%%%%%%%%%%%%%%%%%%%%%%%%%%%%%%%%%%%%%%%%%%%%%
\section{Related Work}
\label{sec_related_work}
%%%%%%%%%%%%%%%%%%%%%%%%%%%%%%%%%%%%%%%%%%%%%%%%%%%%%%%%%%%%%%%%
%%%%%%%%%%%%%%%%%%%%%%%%%%%%%%%%%%%%%%%%%%%%%%%%%%%%%%%%%%%%%%%%
The question of tacit collusion/cooperation between agents has been studied almost independently in the economics and reinforcement learning literature.
On the one hand, the economics community perceives collusion/cooperation as a negative feature since collusion between sellers is against the consumer's interest and remains illegal, violating anti-trust laws.
On the other hand, in reinforcement learning, collusion/cooperation is considered a desired property, which would potentially yield symbiotic behaviors between decentralized agents.

%%%%%%%%%%%%%%%%%%%%%%%%%%%%%%%%%%%%%%%%%%%%%%%%%%%%%%%%%%%%%%%%%%%%%%%%%%%%
\xhdr{$Q$-learning \emph{without} Memory}
The dynamics of $Q$-learning in the iterated prisoner's dilemma has previously been studied without memory, \ie \Cref{eq_bellman_multi} with $\mathcal{S}= \emptyset$.
More precisely, \citet{Banchio_Mantegazza2022} studied the memoryless,  averaged \citep[6.10]{VanSeijen2009,Sutton}, and time-continuous dynamics of \Cref{eq_avg_q_learning_avg_oponent} (\ie no stochasticity nor discreteness),
with $p(a^1)$ the probability of picking the action $a^1 \in \{ \C , \D \}$ writes
%, $\forall a^1 \in \mathcal{A}$
%%%%%%%%%%%%%%%%%%%%%%%%%%%%%%%%%%%%%%%%%%%%%%%%%%%%%%%%%%%%%%%%%%%%%%%%%%%%
\begin{equation}
\label{eq_avg_q_learning_avg_two_players_continuous_no_memory}
    % \dot{Q}(a^1)
    \qvaldotsingle{a^1}
    =  p(a^1)
    \left (
    \mathbb{E}_{a^2 \sim \pi}
    % \left (
        (r_{a^1, a^2} \right )
        + \gamma \max_{a'} \qvalsingle{a'}
        - \qvalsingle{a^1})
        %  \right)
        % \enspace.
        \enspace.
\end{equation}
%%%%%%%%%%%%%%%%%%%%%%%%%%%%%%%%%%%%%%%%%%%%%%%%%%%%%%%%%%%%%%%%%%%%%%%%%%%%
%%%%%%%%%%%%%%%%%%%%%%%%%%%%%%%%%%%%%%%%%%%%%%%%%%%%%%%%%%%%%%%%%%%%%%%%%%%%
\looseness=-1
The main theoretical results from the memoryless case
are the following:
%%%%%%%%%%%%%%%%%%%%%%%%%%%%%%%%%%%%%%%%%%%%%%%%%%%%%%%%%%%%%%%%%%%%%%%%%%%%
\begin{enumerate*}[series = tobecont, itemjoin = \;, label=(\roman*)]
    \item There is no equilibrium corresponding to a cooperative policy, \ie no fixed point of \Cref{eq_avg_q_learning_avg_two_players_continuous_no_memory} such that $Q_{\D}^\star < Q_{\C}^\star $.
    \item There is always an equilibrium corresponding to an \alwaysdefect~policy, \ie a fixed point of \Cref{eq_avg_q_learning_avg_two_players_continuous_no_memory} such that $Q_{\D}^\star > Q_{\C}^\star $
    \item Depending on the value of the exploration parameter $\epsilon$ and the incentive to cooperate $g$, an \emph{additional equilibrium at the  discontinuity of the vector field} can appear, \ie an equilibrium where $Q_{\D}^\star = Q_{\C}^\star $.
    This equilibrium can be seen as partial cooperation and is considered to be an artifact of the memoryless setting.
\end{enumerate*}
\looseness=-1
\xhdr{Multi-agent Reinforcement Learning}
The problem of cooperation has also been approached in the \emph{multi-agent reinforcement learning} community, which tackles the question of multiple agents' decision-making in a \emph{shared} environment.
Depending on the environment and the rewards, agents can either prioritize their interests or promote cooperation.
This yielded a vast literature of empirical algorithms aiming at learning cooperative strategies \citep{Whitehead1991} in various settings \citep{Lowe2017,Sunehag2017,Guan_Chen_Yuan_Zhang_Yu2023}, depending on if the agents can synchronize or not \citep{Arslan2016,Yongacoglu2021,Nekoei2023,Zhao2023,Yongacoglu2023}.

\looseness=-1
Multi-agent reinforcement learning has also been approached from the theoretical side: the convergence of algorithms has also been analyzed in the case of zero-sum games~\citep{littman1994markov}, non-zero-sum with only a single Nash equilibrium~\citep{Hu_Wellman1998}.
The main difficulty of the multi-agent reinforcement learning theoretical analyses \citep[\S3]{Zhang_Yang2021} comes from
%%%%%%%%%%%%%%%%%%%%%%%%%%%%%%%%%%%%%%%%%%%%%%%%%%%%%%%%%%%%%%%%%%%%%%%%%%%%
\begin{enumerate*}[series = tobecont, itemjoin = \;, label=(\roman*)]
    \item The notion of optimality/learning goal, \ie what are the desirable properties of the learned policy.
    \item The non-stationary environment, which yields rewards depending on the other player's actions.
    \item The existence of multiple equilibria.
\end{enumerate*}
While the question of convergence towards \emph{a Nash} has previously been studied \citep{Hu_Wellman1998,Wainwright2019,Usui2021,Meylahn2022}, characterizing towards \emph{which equilibrium} algorithms converge is much harder, and is usually only studied numerically, via simulations or by proposing approximation methods to reduce the problem to solving a (prefereably smooth) dynamical system which is then analyzed either theoretically or numerically \citep{Kaisers2012,Gupta2017,Barfuss2023,Meylahn2023,Ding2023, cartea2022algorithmic}.

% !TEX root = ../icml2025.tex
\section{Experiments}
\label{sec_experiments}
%%%%%%%%%%%%%%%%%%%%%%%%%%%%%%%%%%%%%%%%%%%%%%%%%%%%%%%%%%%%%%%%%%%%%%%%%%%%%
%%%%%%%%%%%%%
%%%%%%%%%%%%%%%%%%%%%%%%%%%%%%%%%%%%%%%%%%%%%%%%%%%%%%%%%%%%%%%%%%%%%%%%%%%%%%
\begin{figure}[tb]
    {    \centering
    \vspace{0pt}
    \includegraphics[width=1\linewidth]{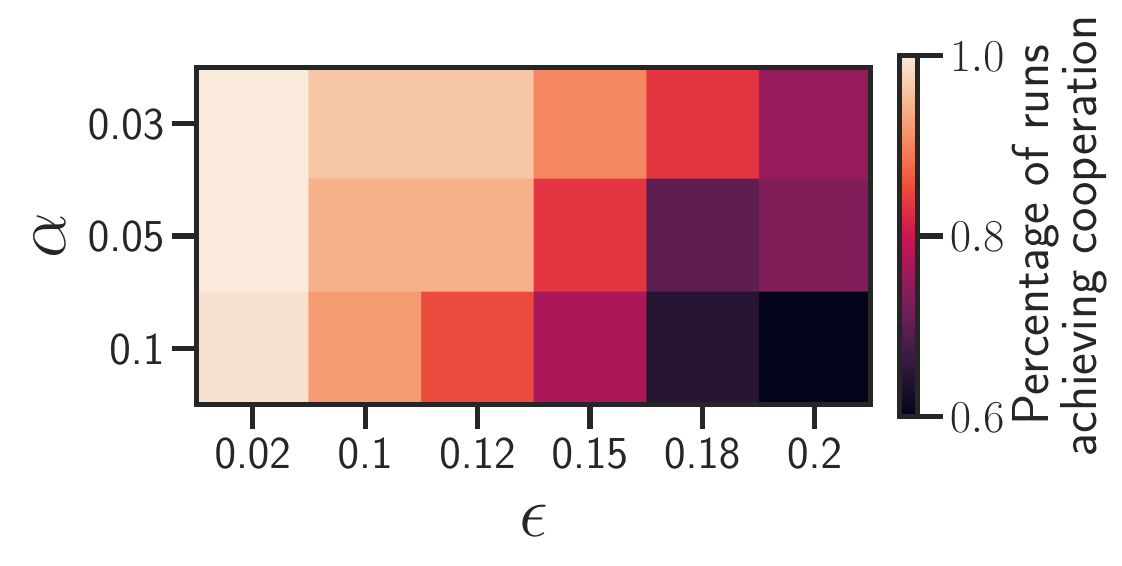}
    }
    % \vspace{-4mm}
    \caption{
    \textbf{Influence of $\alpha$ and $\epsilon$ on the cooperation.}
    For each pair $(\alpha, \epsilon)$, the probability of learning a cooperative strategy is estimated with $100$ runs.
    As predicted by \Cref{thm_conv_epsilon_greedy}, cooperation is achieved with a high probability for smaller values of the exploration parameter $\epsilon$, and of the step-size $\alpha$.
    }
    \label{fig_influ_alpha_epsilon}
\end{figure}
%%%%%%%%%%%%%%%%%%%%%%%%%%%%%%%%%%%%%%%%%%%%%%%%%%%%%%%%%%%%%%%%%%%%%%%%%%%%%%
%%%%%%%%%%%%%%%%%%%%%%%%%%%%%%%%%%%%%%%%%%%%%%%%%%%%%%%%%%%%%%%%%%%%%%%%%%%%%%
\begin{figure}[tb]
    {    \centering
    \includegraphics[width=1\linewidth]{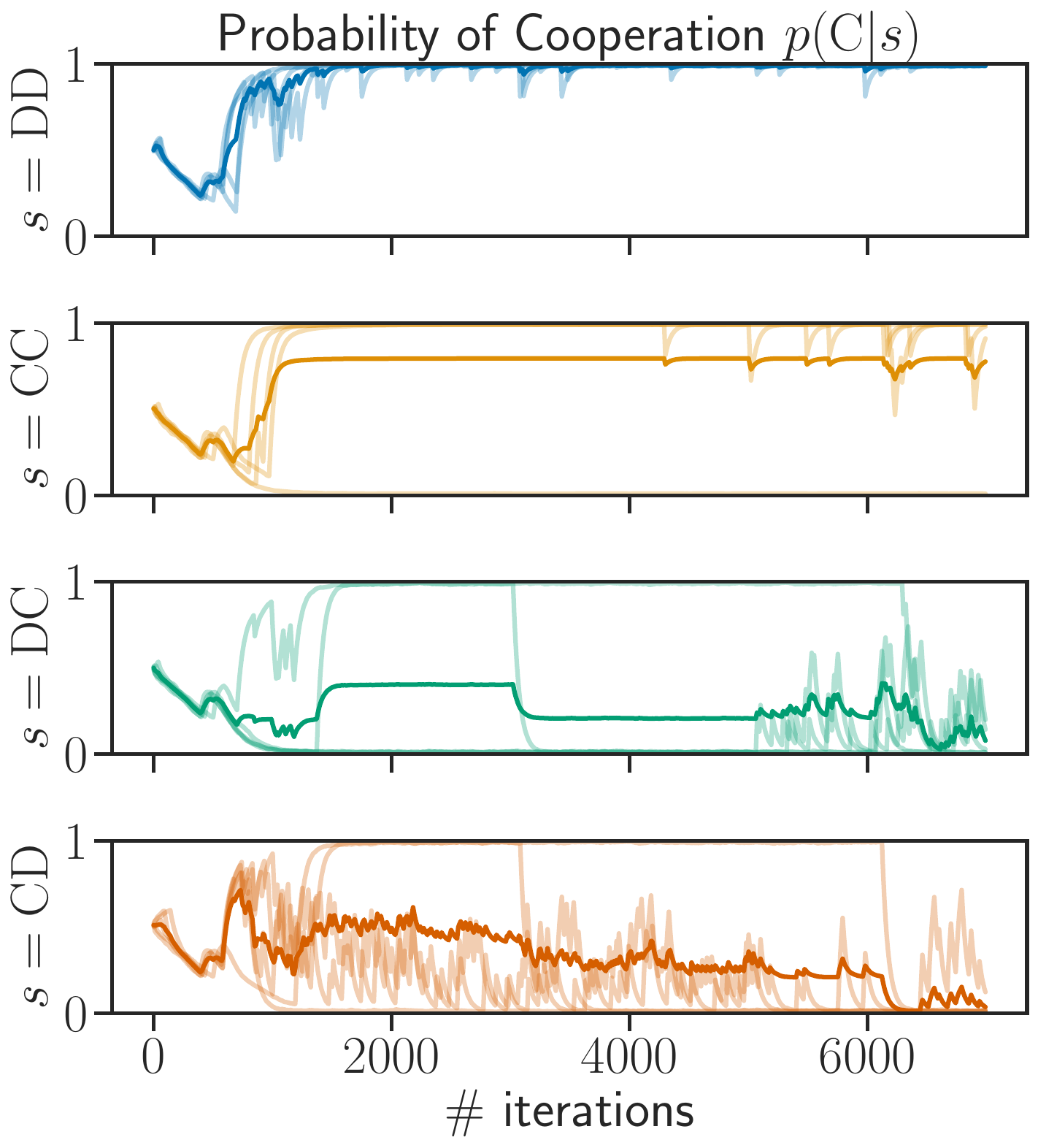}
    }
    % \vspace{-3mm}
    \caption{
    \textbf{Extension to deep $Q$-learning.}
    Evolution of the probability to cooperate conditioned on the state, \ie the previous actions, as a function of the number of iterations in the iterated prisoner's dilemma.
    Multiple runs corresponding to multiple seeds are displayed, as well as their mean.
    Agents are trained against a random agent for the initialization and go from an \alwaysdefect~policy to the cooperative \Pavlovpolicy~policy.
    % $100$ runs, 1.95 std. $\alpha=1e^{-3}$
    }
    \label{fig_deep_ql}
\end{figure}

\xhdr{Experimental Setup}
The code to reproduce the experiments is available at \url{https://github.com/QB3/cooperation-prisoners-dilemma}.
In all the experiments we consider a prisoner's dilemma with a fixed incentive to cooperate $g$ and a fixed discount factor $\gamma$: $g=1.8$ and $\gamma = 0.6$. In \Cref{fig_illustration_policy,fig_q_learning_epsion_greedy,fig_influ_initialization} the stepsize $\alpha$ is fixed to $\alpha=.1$

\xhdr{Influence of Step Size and Exploration}
\Cref{fig_influ_alpha_epsilon} shows the percentage of runs that achieve cooperation for multiple values of the exploration parameter $\epsilon$ and the step-size $\alpha$.
For each pair $(\epsilon, \alpha)$, \Cref{alg:qlearning} is run $100$ times, for $2000$ iterations, with an optimistic initialization.
The percentage of runs that yields cooperation is displayed as a function of $\epsilon$ and $\alpha$.
As predicted by \Cref{prop_pavlov_eq}, when $\epsilon$ is too large ($\epsilon=0.2$), \Pavlovpolicy~policy is no longer a stable point of \Cref{eq_bellman_multi}, and no cooperation is achieved.
As predicted by \Cref{thm_conv_epsilon_greedy}, smaller values of $\alpha$ and $\epsilon$ yield a larger probability of cooperation.
The influence of the incentive to cooperate $g$ is explored in \Cref{app_sub_influ_g}: similar behavior are observed for multiple values of the influence to cooperate $g$.
%%%%%%%%%%%%%%%%%%%%%%%%%%%%%%%%%%%%%%%%%%%%%%%%%%%%%%%%%%%%%%%%%

\textbf{Similar Behaviour in Deep $Q$-learning.}
We train a deep $Q$-network agent \citep{Mnih2015} with a $Q$-function parameterized by a neural network with two linear layers and ReLu activation function to play the iterated prisoner's dilemma.
First, the network is initialized by playing against a random agent for $500$ iterations.
As discussed in the previous paragraph, this corresponds to \Cref{alg:epsilon_greedy} with an exploration parameter of $\epsilon=1/2$.
Then the agent is playing against itself, with a decreasing exploration going from $\epsilon=1/2$ to $\epsilon=10^{-2}$.
Details and hyperparameters can be found in \Cref{app_expe_qlearning}.

\textbf{Comments on \Cref{fig_deep_ql}.}
A batch of actions is drawn at each iteration, and we compute the empirical probability of choosing the `cooperate' action given a specific state $p(\C | s)$, for all the previous possible states, $s \in \{ \cc, \cd, \dc, \dd \}$.
The procedure is repeated for multiple seeds and the mean across the seed is displayed as a thick line.
\Cref{fig_deep_ql} displays the probability of cooperation as a function of the number of iterations.
Players start to cooperate in the $\cd$ states, then successively in the $\dd$ and $\cc$ states.
Finally, around iteration $6000$ players start to defect in the $\cd$ state and reach the \Pavlovpolicy~policy.
% As opposed to the vanilla $Q$-learning cases, in which the agents learn to cooperate is not the same, but the resulting policy is.

% !TEX root = ../icml2025.tex

%%%%%%%%%%%%%%%%%%%%%%%%%%%%%%%%%%%%%%%%%%%%%%%%%%%%%%%%%%%%%%%%%%%%%%%%%%%%
%%%%%%%%%%%%%%%%%%%%%%%%%%%%%%%%%%%%%%%%%%%%%%%%%%%%%%%%%%%%%%%%%%%%%%%%%%%%
\section{Conclusion and Limitation}
%%%%%%%%%%%%%%%%%%%%%%%%%%%%%%%%%%%%%%%%%%%%%%%%%%%%%%%%%%%%%%%%%%%%%%%%%%%%
%%%%%%%%%%%%%%%%%%%%%%%%%%%%%%%%%%%%%%%%%%%%%%%%%%%%%%%%%%%%%%%%%%%%%%%%%%%%
To understand the empirical collusion phenomenon of machine learning algorithms observed in the economics community \citep{Calvano2020}, we theoretically studied multi-agent $Q$-learning in the minimalist setting of the iterated prisoner's dilemma.
% We first showed the cooperative strategy \Pavlovpolicy~was a fixed point of the multi-agent Bellman \Cref{eq_bellman_multi}, for $Q$-learning \emph{with} memory (as opposed to the memoryless case).
We showed that two agents guided by the usual $Q$-learning \Cref{alg:qlearning} could learn the cooperative \Pavlovpolicy~policy,
even when both agents start from an \alwaysdefect~policy.
In addition, we provided explicit conditions (\Cref{ass_qvalues}) for convergence towards such a cooperative strategy in the fully greedy case ($\epsilon=0$, \Cref{sub_multiagent_fully_greedy}),
and in the $\epsilon$-greedy case ($\epsilon>0$, \Cref{sub_multiagent_epsilon_greedy}).

The major limitation of our analysis is the self-play assumption, which is currently crucial in the proof of \Cref{thm_conv_fully_greedy,thm_conv_epsilon_greedy}.
Empirically, \emph{relaxing the self-play assumption still yields cooperation} ``most of the time'' \citep[Fig. 1a]{Barfuss2023}: in this setting, players alternate between multiple strategies but on average mostly play the \Pavlovpolicy~strategy.
Hence, relaxing the self-play assumption would require analyzing the stationary distribution of the process (as done in \citealt{Xu_Zhao2024} in the memoryless case), which is significantly harder and left as future work.
Finally, we also plan to study the extension of the emergence of collusion in other reinforcement learning algorithms and more complex environments, such as policy gradient or Bertrand games.

\newpage
% \newpage
% !TEX root = ../icml2025.tex

\section*{Acknowledgements}
Q.B. would like to thank Samsung Electronics Co., Ldt. for partially funding this research. GG is partially funded by a Canada CIFAR AI chair and and NSERC discovery grant. E.C. acknowledges funding by the European Union (ERC grant AI-Comp, 101098332) and PRIN 2022 grant ‘Algorithms and economic choices’, codice Cineca 2022S5RC7R, CUP E53D23006420001. This research project was initiated during the research program "Learning and Games" semester at the Simons Institute for the Theory of Computing. 

% !TEX root = ../icml2025.tex
%%%%%%%%%%%%%%%%%%%%%%%%%%%%%%%%%%%%%%%%%%%%%%%%%%%%%%%%%%%%%%%%%%%%%%%
%%%%%%%%%%%%%%%%%%%%%%%%%%%%%%%%%%%%%%%%%%%%%%%%%%%%%%%%%%%%%%%%%%%%%%%
\section*{Impact Statement}
%%%%%%%%%%%%%%%%%%%%%%%%%%%%%%%%%%%%%%%%%%%%%%%%%%%%%%%%%%%%%%%%%%%%%%%
%%%%%%%%%%%%%%%%%%%%%%%%%%%%%%%%%%%%%%%%%%%%%%%%%%%%%%%%%%%%%%%%%%%%%%%
This paper presents work whose goal is mainly theoretical advances in the field of machine learning.
The findings stress the importance of proactive ethical considerations and regulatory frameworks for responsible machine learning deployment, influencing policy-making and industry practices to ensure fair competition and societal benefit.

% \newpage
%%%%%%%%%%%%%%%%%%%%%%%%%%%%%%%%%%%%%%%%%%%%%%%%%%%%%%%%%%%%%%%
%%%%%%%%%%%%%%%%%%%%%%%%%%%%%%%%%%%%%%%%%%%%%%%%%%%%%%%%%%%%%%%
% \section*{Impact Statement}
% %%%%%%%%%%%%%%%%%%%%%%%%%%%%%%%%%%%%%%%%%%%%%%%%%%%%%%%%%%%%%%%
% %%%%%%%%%%%%%%%%%%%%%%%%%%%%%%%%%%%%%%%%%%%%%%%%%%%%%%%%%%%%%%%
% This paper presents work whose goal is to advance the field
% of Machine Learning. There are many potential societal
% consequences of our work, none of which we feel must be
% specifically highlighted here.
% \newpage
% \input{sections/8_1_impact_statement}

% \newpage

\bibliographystyle{icml2025}
\bibliography{pavlov}

\newpage
\onecolumn
\appendix
% !TEX root = ../icml2025.tex
%%%%%%%%%%%%%%%%%%%%%%%%%%%%%%%%%%%%%%%%%%%%%%%%%%%%%%%%%%%%%%%%%%%%%%%%
\section{Proofs of $Q$-values at Convergence}
\label{app_proof_qvalues_convergence}
%%%%%%%%%%%%%%%%%%%%%%%%%%%%%%%%%%%%%%%%%%%%%%%%%%%%%%%%%%%%%%%%%%%%%%%%
%%%%%%%%%%%%%%%%%%%%%%%%%%%%%%%%%%%%%%%%%%%%%%%%%%%%%%%%%%%%%%%%%%%%%%%%%%%%
% \Cref{tab_summary_policies} summarizes the main policies considered: \alwaysdefect, the \Pavlovpolicy~policy, and the in-between \loseshift~policy.

\subsection{\Alwaysdefect~Policy}
\label{app_sub_q_vlaue_nash}
%%%%%%%%%%%%%%%%%%%%%%%%%%%%%%%%%%%%%%%%%%%%%%%%%%%%%%%%%%%%%%%%%%%%%%%%
\nashequilibrium*
\begin{proof}[\textbf{Proof of \Cref{prop_nash_eq} (\alwaysdefect)}]{
    %  policy case:
    % $\bar a^1(s) = \bar a^2(s) = \D$
    The greedy action of the  \alwaysdefect~policy is to defect all the time.
    For all states $s_t \in \{ \C, \D\}^2$}, $a_t^1 \in \{ \C, \D\}$, the self-play multi-agent Bellman equation writes
    % In this case, the $Q$-value writes
    \begin{align}
        \qval{s_t}{a_t^1}^{\star, \defect}
        &=
        \gamma (1 - \epsilon)
        \qval{(a_t^1, \D)}{\D}^{\star, \defect}
        +
        \gamma \epsilon
        \qval{(a_t^1, \C)}{\D}^{\star, \defect}
        +
        \mathbb{E}_{a} \reward{a_t^1}{a}
        \enspace,
    \end{align}
    \ie, for all state $s_t \in \{ \C, \D\}^2$
    \begin{align}
        \qval{s_t}{\D}^{\star, \defect}
        -
        \gamma (1 - \epsilon)
        \qval{(\dd)}{\D}^{\star, \defect}
        -
        \gamma \epsilon \qval{(\dc)}{\D}^{\star, \defect}
        &=
        \mathbb{E}_{a} \reward{\D}{a}
        \label{eq:pavlov_defect}
        \\
        \qval{s_t}{\C}^{\star, \defect}
        -
        \gamma(1 - \epsilon) \qcdd^{\star, \defect}
        -
        \gamma \epsilon \qccd^{\star, \defect}
        &=
        \mathbb{E}_{a} \reward{\C}{a}
        \label{eq:pavlov_coop}
        \enspace.
    \end{align}
    Evaluating \Cref{eq:pavlov_defect} in $s_t = (\D, \C)$ and $s_t=(\dd)$ yields
    \begin{align}
        (1- \gamma \epsilon) \qval{\dc}{\D}^{\star, \defect}
        -
        \gamma (1 - \epsilon) \qddd^{\star, \defect}
        &=
        \mathbb{E}_{a} \reward{\D}{a}
        \\
        -
        \gamma \epsilon \qval{\dc}{\D}^{\star, \defect}
        +
        (1 - \gamma (1 - \epsilon))
        Q_{\dd, \D}^{\star, \defect}
        &=
        \mathbb{E}_{a} \reward{\D}{a}
        \enspace.
    \end{align}
    One can see that
    \begin{align}
        \qval{\dc}{\D}^{\star, \defect}
        =
        \qddd^{\star, \defect}
        =
        \mathbb{E}_{a}
        \reward{\D}{a} / (1- \gamma)
        \enspace,
        \label{eq:pavlov_defect_defect}
    \end{align}
    is a solution to the linear system.
    Plugging \Cref{eq:pavlov_defect_defect} into \Cref{eq:pavlov_defect} yields
    \begin{align}
        \qval{\cc}{\D}^{\star, \defect}
        =
        \qval{\cd}{\D}^{\star, \defect}
        =
        \mathbb{E}_{a}
        \reward{\D}{a} / (1- \gamma)
        \enspace.
        \label{eq:pavlov_coop_coop}
    \end{align}
    Plugging \Cref{eq:pavlov_coop_coop} into \Cref{eq:pavlov_coop} yields
    \begin{align}
        \forall s \in { \{ \C, \D \} }^2,
        \,
        \qval{s}{\C}^{\star, \defect}
        &=
        \mathbb{E}_{a}
        \reward{\C}{a}
        +
        \frac{\gamma}{1- \gamma}
        \mathbb{E}_{a}
        \reward{\D}{a}
        \\
        &=
        \frac{1}{1- \gamma}\mathbb{E}_{a} \reward{\D}{a}
        -
        \left (
            \mathbb{E}_{a} \reward{\D}{a}
            -
            \mathbb{E}_{a} \reward{\C}{a}
        \right )
        \\
        &=
        \qval{s}{\D}^{\star, \defect}
        -
        \left (
            \mathbb{E}_{a} \reward{\D}{a}
            -
            \mathbb{E}_{a} \reward{\C}{a}
        \right )
        \enspace.
    \end{align}
    This means that these $Q$ values imply a greedy \alwaysdefect~policy if and only if
    \begin{align}
        \mathbb{E}_{a} \reward{\D}{a}
        &>
        \mathbb{E}_{a} \reward{\C}{a}
        \enspace, \text{\ie}
        \\
        (1 - \epsilon) \rdd + \epsilon \rdc
        &>
        (1 - \epsilon) \rcd + \epsilon \rcc
        \enspace,
    \end{align}
    which is always true since $\rdd > \rcd$ and $\rdc > \rcc$.
\end{proof}

% !TEX root = ../icml2025.tex

%%%%%%%%%%%%%%%%%%%%%%%%%%%%%%%%%%%%%%%%%%%%%%%%%%%%%%%%%%%%%%%%%%%%%%%%%%%
\subsection{$Q$-values at convergence for the \Pavlovpolicy~Policy}
\label{app_sub_q_vlaue_pavlov}
%%%%%%%%%%%%%%%%%%%%%%%%%%%%%%%%%%%%%%%%%%%%%%%%%%%%%%%%%%%%%%%%%%%%%%%%%%%
\PavlovEquilibrium*
%%%%%%%%%%%%%%%%%%%%%%%%%%%%%%%%%%%%%%%%%%%%%%%%%%
%%%%%%%%%%%%%%%%%%%%%%%%%%%%%%%%%%%%%%%%%%%%%%%%%%
\begin{proof}[Proof of $Q$-values at convergence for the \Pavlovpolicy~Policy]

    \begin{align}
        \qval{s_t}{a_t^1}^{\star, \pavlov}
        &=
        \gamma (1 - \epsilon)
        \qval{(a_t^1, \bar a^2(s_t))}{\bar a^1(a_t^1, \bar a^2(s_t))}^{\star, \pavlov}
        +
        \gamma \epsilon
        \qval{(a_t^1, \underline{a}^2(s_t))}{\bar a^1(a_t^1, \underline{a}^2(s_t))}^{\star, \pavlov}
        + \mathbb{E}_{a \sim \pi(\cdot | s_t)} \reward{a_t^1}{a}
    \end{align}

    \begin{align}
        \qval{\cc}{a_t^1}^{\star, \pavlov}
        &=
        \gamma (1 - \epsilon)
        \qval{(a_t^1, C)}{ \bar a^1(a_t^1, \C) }^{\star, \pavlov}
        +
        \gamma \epsilon
        \qval{(a_t^1, \D }{, \bar a^1(a_t^1, \D)}^{\star, \pavlov}
        + \mathbb{E}_{a \sim \pi(\cdot | \cc)} \reward{a_t^1}{a}
        \\
        \qval{\dd}{a_t^1}^{\star, \pavlov}
        &=
        \gamma (1 - \epsilon)
        \qval{(a_t^1, \C)}{\bar a^1(a_t^1, \C)}^{\star, \pavlov}
        +
        \gamma \epsilon
        \qval{(a_t^1, \D)}{\bar a^1(a_t^1, \D)}^{\star, \pavlov}
        +
        \mathbb{E}_{a \sim \pi(\cdot | \dd)} \reward{a_t^1}{a}
        \\
        \qval{\cd}{a_t^1}^{\star, \pavlov}
        &=
        \gamma (1 - \epsilon)
        \qval{(a_t^1, \D)}{\bar a^1(a_t^1, \D)}^{\star, \pavlov}
        +
        \gamma \epsilon
        \qval{(a_t^1, \C)}{\bar a^1(a_t^1, \C)}^{\star, \pavlov}
        +
        \mathbb{E}_{a \sim \pi(\cdot | \cd)} \reward{a_t^1}{a}
        \\
        \qval{\dc}{a_t^1}^{\star, \pavlov}
        &=
        \gamma (1 - \epsilon)
        \qval{(a_t^1, \D)}{\bar a^1(a_t^1, \D)}^{\star, \pavlov}
        +
        \gamma \epsilon
        \qval{(a_t^1, \C)}{\bar a^1(a_t^1, \C)}^{\star, \pavlov}
        + \mathbb{E}_{a \sim \pi(\cdot | \dc )}\reward{a_t^1}{a}
    \end{align}
    Hence
    \begin{align}
        \qccc^{\star, \pavlov}
        -
        \gamma (1 - \epsilon)
        \qccc^{\star, \pavlov}
        -
        \gamma \epsilon
        \qddd^{\star, \pavlov}
        &= \mathbb{E}_{a \sim \pi(\cdot| \cc)} \reward{\C}{a}
        \\
        \qccd^{\star, \pavlov}
        -
        \gamma (1 - \epsilon)
        \qdcd^{\star, \pavlov}
        -
        \gamma \epsilon
        \qddc^{\star, \pavlov}
        &= \mathbb{E}_{a \sim \pi(\cdot |  \cc)} \reward{\D}{a}
        \\
        \qddc^{\star, \pavlov}
        -
        \gamma (1 - \epsilon)
        \qccc^{\star, \pavlov}
        -
        \gamma \epsilon
        \qcdd^{\star, \pavlov}
        &= \mathbb{E}_{a \sim \pi(\cdot | \dd)} \reward{\C}{a}
        \\
        \qddd^{\star, \pavlov}
        -
        \gamma (1 - \epsilon)
        \qdcd^{\star, \pavlov}
        -
        \gamma \epsilon
        \qddc^{\star, \pavlov}
        &= \mathbb{E}_{a \sim \pi( \cdot | \dd)} \reward{\D}{a}
        \\
        \qcdc^{\star, \pavlov}
        -
        \gamma (1 - \epsilon)
        \qcdd^{\star, \pavlov}
        -
        \gamma \epsilon
        \qccc^{\star, \pavlov}
        &= \mathbb{E}_{a \sim \pi(\cdot | \cd)} \reward{\C}{a}
        \\
        \qcdd^{\star, \pavlov}
        -
        \gamma (1 - \epsilon)
        \qddc^{\star, \pavlov}
        -
        \gamma \epsilon
        \qdcd^{\star, \pavlov}
        &= \mathbb{E}_{a \sim \pi(\cdot | \cd)} \reward{\D}{a}
        \\
        \qdcc^{\star, \pavlov}
        -
        \gamma (1 - \epsilon)
        \qcdd^{\star, \pavlov}
        -
        \gamma \epsilon
        \qccc^{\star, \pavlov}
        &= \mathbb{E}_{a \sim \pi(\cdot | \dc)} \reward{\C}{a}
        \\
        \qdcd^{\star, \pavlov}
        -
        \gamma (1 - \epsilon)
        \qddc^{\star, \pavlov}
        -
        \gamma \epsilon
        \qdcd^{\star, \pavlov}
        &= \mathbb{E}_{a \sim \pi(\cdot | \dc)} \reward{\D}{a}
        \enspace.
    \end{align}
    %
    % \begin{align}
    %     \begin{pmatrix}
    %         1 - \gamma(1 - \epsilon) & 0 & 0 & 0 &            0 & - \gamma \epsilon & 0 &0
    %         \\
    %         0 & 1 & -\gamma \epsilon & 0 & 0 & 0 & 0 & - \gamma (1 - \epsilon)
    %         \\
    %         \gamma (1 - \epsilon) & 0 & 1 & 0 &            0 & - \gamma \epsilon & 0 &0
    %         \\
    %         0 & 0 & - \gamma \epsilon & 1 & - \gamma (1 - \epsilon) & 0 & 0 & 0
    %     \end{pmatrix}
    % \end{align}
    One can observe that
    \begin{empheq}[box=\fcolorbox{blue!40!black!60}{green!10}]{align}
        \qccc^{\star, \pavlov}
        &=
        \qddc^{\star, \pavlov}
        \\
        \qccd^{\star, \pavlov}
        &=
        \qddd^{\star, \pavlov}
        \\
        \qcdc^{\star, \pavlov}
        &=
        \qdcc^{\star, \pavlov}
        \\
        \qcdd^{\star, \pavlov}
        &=
        \qdcd^{\star, \pavlov}
        \enspace,
    \end{empheq}
    and the system of equations rewrites
    \begin{align}
        \qccc^{\star, \pavlov}
        -
        \gamma (1 - \epsilon)
        \qccc^{\star, \pavlov}
        -
        \gamma \epsilon
        \qcdd^{\star, \pavlov}
        &= \mathbb{E}_{a \sim \pi(\cdot| \cc)} \reward{\C}{a}
        \\
        \qddd^{\star, \pavlov}
        -
        \gamma (1 - \epsilon)
        \qcdd^{\star, \pavlov}
        -
        \gamma \epsilon
        \qccc^{\star, \pavlov}
        &= \mathbb{E}_{a \sim \pi( \cdot | \dd)} \reward{\D}{a}
        \\
        \qcdc^{\star, \pavlov}
        -
        \gamma (1 - \epsilon)
        \qcdd^{\star, \pavlov}
        -
        \gamma \epsilon
        \qccc^{\star, \pavlov}
        &= \mathbb{E}_{a \sim \pi(\cdot | \cd)} \reward{\C}{a}
        \\
        \qcdd^{\star, \pavlov}
        -
        \gamma (1 - \epsilon)
        \qccc^{\star, \pavlov}
        -
        \gamma \epsilon
        \qcdd^{\star, \pavlov}
        &= \mathbb{E}_{a \sim \pi(\cdot | \cd)} \reward{\D}{a}
    \end{align}
    For $\qccc^{\star, \pavlov}$ and $\qcdd^{\star, \pavlov}$ on needs to solve the following linear system:
    \begin{align}
        \begin{pmatrix}
            1 - \gamma (1 - \epsilon) & - \gamma \epsilon
            \\
            - \gamma ( 1 - \epsilon) & 1 - \gamma \epsilon
        \end{pmatrix}
        \begin{pmatrix}
            \qccc^{\star, \pavlov}
            \\
            \qcdd^{\star, \pavlov}
        \end{pmatrix}
        =
        \begin{pmatrix}
            \mathbb{E}_{a \sim \pi(\cdot| \cc)} \reward{\C}{a}
            \\
            \mathbb{E}_{a \sim \pi(\cdot | \cd)} \reward{\D}{a}
        \end{pmatrix}
        \enspace,
    \end{align}
    which yields
    \begin{empheq}[box=\fcolorbox{blue!40!black!60}{green!10}]{align}
        \qddc^{\star, \pavlov}
        = \qccc^{\star, \pavlov}
        &=
        \dfrac{
            (1 - \gamma \epsilon) \mathbb{E}_{
                a \sim \pi(\cdot | \cc)
                } \reward{\C}{a}
            + \gamma \epsilon \mathbb{E}_{
                a \sim \pi(\cdot | \cd)
            } \reward{\D}{a}}{1 - \gamma}
        \\
        \qdcd^{\star, \pavlov}
        =
        \qcdd^{\star, \pavlov}
        &=
        \dfrac{
            \gamma(1 - \epsilon) \mathbb{E}_{a \sim \pi(\cdot| \cc)} \reward{\C}{a}
            + (1 - \gamma(1 - \epsilon)) \mathbb{E}_{a \sim \pi(\cdot | \cd)} \reward{\D}{a}}{1 - \gamma}
            \enspace.
    \end{empheq}
    \begin{align}
        &
        \qcdd^{\star, \pavlov} - \qcdc^{\star, \pavlov}
        =
        \qdcd^{\star, \pavlov} - \qdcc^{\star, \pavlov}
        \\
        &=
        \gamma (1 - \epsilon)
        \qccc^{\star, \pavlov}
        +
        \gamma \epsilon
        \qcdd^{\star, \pavlov}
        + \mathbb{E}_{a \sim \pi(\cdot | \cd)} \reward{\D}{a}
        \\
        & -
        \gamma (1 - \epsilon)
        \qcdd^{\star, \pavlov}
        -
        \gamma \epsilon
        \qccc^{\star, \pavlov}
        - \mathbb{E}_{a \sim \pi(\cdot | \cd)} \reward{\C}{a}
        \\
        &=
        \gamma (1 - 2 \epsilon)
        \qccc^{\star, \pavlov}
        -
        \gamma (1 - 2 \epsilon)
        \qcdd^{\star, \pavlov}
        + \mathbb{E}_{a \sim \pi(\cdot | \cd)} \reward{\D}{a}
        - \mathbb{E}_{a \sim \pi(\cdot | \cd)} \reward{\C}{a}
        \\
        &=
        \gamma (1 - 2 \epsilon)
        (\qccc^{\star, \pavlov}
        -
        \qcdd^{\star, \pavlov})
        + \mathbb{E}_{a \sim \pi(\cdot | \cd)} \reward{\D}{a}
        - \mathbb{E}_{a \sim \pi(\cdot | \cd)} \reward{\C}{a}
        \\
        &=
        \gamma (1 - 2 \epsilon)
        \underbrace{(\mathbb{E}_{a \sim \pi(\cdot| \cc)} \reward{\C}{a}
        -  \mathbb{E}_{a \sim \pi(\cdot | \cd)} \reward{\D}{a})}_{\text{depends on } \epsilon}
        + \underbrace{\mathbb{E}_{a \sim \pi(\cdot | \cd)} \reward{\D}{a}
        - \mathbb{E}_{a \sim \pi(\cdot | \cd)} \reward{\C}{a}}_{>0} \\
        & =  \gamma (1 - 2 \epsilon)
        \underbrace{((1-\epsilon)(\rcc - \rdd) + \epsilon( \rcd - \rdc))}_{\text{depends on } \epsilon}
        + \underbrace{\mathbb{E}_{a \sim \pi(\cdot | \cd)} \reward{\D}{a}
        - \mathbb{E}_{a \sim \pi(\cdot | \cd)} \reward{\C}{a}}_{>0}\\
        & =  \gamma (1 - 2 \epsilon)(2(g-1) -2g\epsilon)
        + (2-g)
        \enspace,
    \end{align}
Which is positive.
In addition
    \begin{align}
        &
        \qddc^{\star, \pavlov} - \qddd^{\star, \pavlov}
        % =
        % \qdcd^{\star, \pavlov} - \qdcc^{\star, \pavlov}
        \\
        &=
        \gamma (1 - 2 \epsilon)
        \underbrace{(\mathbb{E}_{a \sim \pi(\cdot| \cc)} \reward{\C}{a}
        -  \mathbb{E}_{a \sim \pi(\cdot | \cd)} \reward{\D}{a})}_{\text{depends}}
        + \underbrace{\mathbb{E}_{a \sim \pi( \cdot | \dd)} 
        \left ( \reward{\C}{a}
        -
        % \mathbb{E}_{a \sim \pi( \cdot | \dd)} 
        \reward{\D}{a} \right ) }_{<0}
        \\
        &=
        \gamma (1 - 2 \epsilon)
        \underbrace{((1 - \epsilon)(\rcc - \rdd) + \epsilon (\rcd - \rdc) )}_{\text{depends on } \epsilon}
        + \underbrace{(1- \epsilon)(\rcc - \rdc) + \epsilon (\rcd - \rdd )}_{<0}
        \\
        &=
        2\gamma (1 - 2 \epsilon)
        \underbrace{(((g-1) -g\epsilon))}_{\text{depends on } \epsilon}
        + g
    \enspace,
\end{align}
which is positive for $\epsilon$ small enough.
    %     (1 - \gamma ) \qcdc^\star
    %     =
    %     \gamma (1 - \epsilon) \qcdd^\star
    %     + \gamma \epsilon \qccc^\star
    %     + ( 1 - \gamma) \mathbb{E}_a r(C,a (C, D))
    % \end{align}
    % \begin{empheq}[box=\fcolorbox{blue!40!black!60}{green!10}]{align}
    %     \qcdc^\star
    %     &=
    %     \dfrac{\gamma (1 - \epsilon) \mathbb{E}_a r(C,a (C, C))
    %     + \gamma \epsilon
    %     \mathbb{E}_a r(D,a (C, D))}{
    %         (1 - \gamma )}
    %     + \mathbb{E}_a r(C,a (C, D))
    %     \\
    %     \qddd^\star
    %     &=
    %     \dfrac{\gamma (1 - \epsilon) \mathbb{E}_a r(C,a (C, C))
    %     + \gamma \epsilon
    %     \mathbb{E}_a r(D,a (C, D))}{
    %         (1 - \gamma )}
    %     + \mathbb{E}_a r(D,a (C, C))
    % \end{empheq}

\end{proof}

% \input{sections/9_1_2_gt}
% !TEX root = ../icml2025.tex
%%%%%%%%%%%%%%%%%%%%%%%%%%%%%%%%%%%%%%%%%%%%%%%%%%%%%%%%%%%
\section{\Cref{prop_always_defect_subgame,prop_pavlov_subgame}}
\label{app_proof_always_defect_pavlov_nash}
%%%%%%%%%%%%%%%%%%%%%%%%%%%%%%%%%%%%%%%%%%%%%%%%%%%%%%%%%%%
\begin{restatable}[\Alwaysdefect]{proposition}{propalwaysdefectnash}\label{prop_always_defect_subgame}
    For all $\gamma \in (0,1)$,
    the \alwaysdefect~policy,
    \ie
    $\pi^\defect(\D|s)=1\,,\, \forall s \in  \mathcal{S}$, is a subgame perfect equilibrium.
\end{restatable}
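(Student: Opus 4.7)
The plan is to verify the subgame perfect equilibrium property via the one-shot deviation principle, which for infinitely repeated discounted games characterizes SPE as the nonexistence, at any history, of a profitable single-period deviation followed by reversion to the prescribed strategy. Because the \alwaysdefect~policy prescribes $\D$ irrespective of the state, I only need to check the one-shot deviation condition at each of the four possible states $s \in \{\cc, \cd, \dc, \dd\}$, and by the symmetry of the game (\Cref{tab_prisoner_dilemma}) only for one of the two players.

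Concretely, I would fix a state $s$ and consider player $1$ deviating unilaterally from $\D$ to $\C$ for a single round, while the opponent plays \alwaysdefect~and player $1$ reverts to \alwaysdefect~from the next round onward. The key observation is that \alwaysdefect~is memoryless in its prescription: no matter what state the game transitions into after the deviation, both players will defect forever, so the continuation value from any post-deviation state equals $\gamma \rdd / (1-\gamma)$, exactly the continuation value that would have obtained without the deviation. The comparison therefore reduces to the one-shot payoffs, namely $\rcd$ under deviation versus $\rdd$ under no deviation, and the ordering $\rdd > \rcd$ from \Cref{tab_prisoner_dilemma} makes the deviation strictly unprofitable, independently of $\gamma \in (0,1)$.

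I expect no genuine obstacle. The only point that deserves care is to make explicit that the continuation value after the one-shot deviation coincides with the continuation value under no deviation, since this is what decouples the argument from $\gamma$ and from which subgame we are in; this decoupling is precisely what the memoryless, state-independent structure of \alwaysdefect~buys. Once that is stated, the conclusion follows by summing the (identical) discounted future streams and comparing the one-shot stage payoffs.
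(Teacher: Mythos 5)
Your proposal is correct and follows essentially the same route as the paper, which argues in one line that against an always-defecting opponent the stage-game comparison $\rcd < \rdd$ makes defection the best response at every step, for any $\gamma$. Your version simply makes explicit the one-shot-deviation bookkeeping (the cancellation of identical continuation values) that the paper leaves implicit, which is a welcome clarification but not a different argument.
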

%%%%%%%%%%%%%%%%%%%%%%%%%%%%%%%%%%%%%%%%%%%%%%%%%%%%%%%%%%%%%%%%%%%%%%%%%%%%%
\begin{proof} Regardless of the state,
    given that the opponent is always defecting, the best response at each time step is to defect since $r_1(\C,\D) < r_1(\D,\D)$.
\end{proof}

\begin{restatable}[\Pavlovpolicy]{proposition}{proppavlovnash}\label{prop_pavlov_subgame}
    \looseness=-1
    If $1>\gamma \geq \tfrac{2-g}{2(g-1)}$,
    then the \Pavlovpolicy~policy,
    \ie
    $\pi^{\pavlov}(\D|s)=1\,,\, \forall s \in  \{\cd, \dc \}$
    and
    $\pi^{\pavlov}(\C|s)=1\,,\, \forall s \in  \{ \dd, \cc\}$, is a subgame perfect equilibrium.
\end{restatable}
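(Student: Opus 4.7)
The plan is to invoke the one-shot deviation principle for infinite-horizon discounted games: the stationary Pavlov profile is subgame perfect if and only if, at every state $s \in \{\cc, \cd, \dc, \dd\}$, neither player can strictly increase their discounted payoff by deviating once and then reverting to $\pi^{\pavlov}$. Because the profile is symmetric and the opponent always plays Pavlov, it suffices to check unilateral one-shot deviations by player $1$ at each of the four states.

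First I would compute the on-path continuation value $V(s)$ of player $1$ under the symmetric Pavlov profile. Since $\cc$ is absorbing under Pavlov, $V(\cc) = \rcc/(1-\gamma)$. From $\dd$, both players cooperate and the state transitions to $\cc$, so $V(\dd) = \rcc + \gamma \rcc/(1-\gamma) = \rcc/(1-\gamma)$. From $\cd$ or $\dc$, both players defect and the state moves to $\dd$, giving $V(\cd) = V(\dc) = \rdd + \gamma \rcc/(1-\gamma)$.

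Next I would write the value of a one-shot deviation by player $1$ at each state and require it to be at most $V(s)$. In state $\cc$, deviating to $\D$ yields payoff $\rdc$ this round and, since player $1$'s action is the first coordinate, moves the state from player $1$'s perspective to $\dc$; the inequality $V(\cc) \geq \rdc + \gamma V(\dc)$ simplifies, after clearing $1-\gamma$, to $\gamma(\rcc - \rdd) \geq \rdc - \rcc$. The same inequality results in state $\dd$. In states $\cd$ and $\dc$, deviating to $\C$ against a Pavlov opponent (who defects in both of these states) yields $\rcd + \gamma V(\cd)$, and the no-profitable-deviation condition reduces to $\gamma(\rcc - \rdd) \geq \rcd - \rdd$, which is automatic since $\rcd < \rdd$. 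Hence the single binding constraint is $\gamma \geq (\rdc - \rcc)/(\rcc - \rdd)$, and substituting the one-parameter payoff structure referenced in the paper (\Cref{tab_prisoner_dilemma_parameterized}) reduces this to exactly $\gamma \geq (2-g)/(2(g-1))$.

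The argument is essentially bookkeeping; the only subtle point is tracking how a unilateral deviation by player $1$ transforms the next-period state from player $1$'s own viewpoint (first coordinate = own action), which is why the binding deviations turn out to be the ones at $\cc$ and $\dd$, where defecting against a cooperating Pavlov opponent is most tempting. Once that is settled the algebra is routine.
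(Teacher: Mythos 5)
Your proof is correct and follows essentially the same route as the paper's: both invoke the one-shot deviation principle and check the four states, with the binding constraint $\gamma \geq (\rdc-\rcc)/(\rcc-\rdd)$ arising from defecting against a cooperating opponent. If anything, your version is slightly more careful—you correctly identify the deviation payoff at $\cc$ as $\rdc$ (the paper's displayed inequality has a typo with $\rcd$) and you explicitly verify state $\dd$ reduces to the same inequality, where the paper only asserts the conclusion.
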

%%%%%%%%%%%%%%%%%%%%%%%%%%%%%%%%%%%%%%%%%%%%%%%%%%%%%%%%%%%%%%%%%%%%%%%%%%%%%
%
%\propalwaysdefectnash*

%\proppavlovnash*
\begin{proof}
    \looseness=-1
    Let us assume that our opponent plays according to \Pavlovpolicy~ and show that the best response is to also adopt this strategy. By the one deviation principle~\citep[Prop 438.1]{osborne2004introduction}, one only needs to show that for the four possible starting states the best response against \Pavlovpolicy~is to follow \Pavlovpolicy.

$\bullet$ If we start in $s_0 \in  \{\cd, \dc \}$, the opponent will defect and the best response is to defect since, if we cooperate, we end up in $s_1 = \cd$ which leads to a worse payoff that going straight to $\dd$ by defecting,
    \begin{math}
        \rcd
        + \gamma \rdd
        + \gamma^2 \rcc
        \leq \rdd
        + \gamma \rcc
        + \gamma^2 \rcc,
    \end{math}
    which is always the case since
    $ \rcd < \rdd < \rcc$.

$\bullet$ If we start in $s_0 = \cc$,
    we should show that it is better to cooperate than to defect.
    It is the case if,
    \begin{math}
        \rcd
        + \gamma \rdd
        + \gamma^2 \rcc
        % +\dots
        \leq \rcc
        + \gamma \rcc
        +\gamma^2 \rcc,
        % + \dots
    \end{math}
    which is true if and only if
    $\gamma
    \geq
    \tfrac{\rdc - \rcc}{ \rcc - \rdd}
    = \tfrac{2-g}{2(g-1)}$.

$\bullet$ Finally, if we start in $s_0 = \dd$, it is better to cooperate than to defect as the opponent is cooperating. Hence, the best response to the \Pavlovpolicy~policy is to play according to \Pavlovpolicy.
\end{proof}

% !TEX root = ../icml2025.tex

%%%%%%%%%%%%%%%%%%%%%%%%%%%%%%%%%%%%%%%%%%%%%%%%%%%%%%%%%%%%%%%%%%%%%%%%%%%%%%%
\section{Proof the Fully Greedy Case (\Cref{thm_conv_fully_greedy})}
%%%%%%%%%%%%%%%%%%%%%%%%%%%%%%%%%%%%%%%%%%%%%%%%%%%%%%%%%%%%%%%%%%%%%%%%%%%%%%%

%%%%%%%%%%%%%%%%%%%%%%%%%%%%%%%%%%%%%%%%%%%%%%%%%%%%%%%%%%%%%%%%%%%%%%%%%%%%%
%%%%%%%%%%%%%%%%%%%%%%%%%%%%%%%%%%%%%%%%%%%%%%%%%%%%%%%%%%%%%%%%%%%%%%%%%%%%%

%

\subsection{\Loseshift~to \Pavlovpolicy}
\label{app_lose_shift_pavlov_deter}
%%%%%%%%%%%%%%%%%%%%%%%%%%%%%%%%%%%%%%%%%%%%%%%%%%%%%%%%%%%%%%%%%%%%%%%%%%%%%%%
\Cref{eq_lose_shift_qccd,eq_lose_shift_qddc} that govern the dynamics of $\qccd$ and $\qddc$ from the \Loseshift~ policy to \Pavlovpolicy~policy are recalled below:
\begin{align}
    \begin{cases}
        \qccd^{2t+1}
        & =
        (1 - \alpha)  \qccd^{2t}
        + \alpha \left (\rdd + \gamma \qddc^{2t} \right )
        \tag{\ref{eq_lose_shift_qccd} }
        \enspace
        ,
        \\
        \qddc^{2t+1}
        & = \qddc^{2t}
    \end{cases}
    \\
    \begin{cases}
        \qccd^{2t+2}
    & =
    \qccd^{2t+1}
    \\
    \qddc^{2t+2}
    & =
    (1 - \alpha) \underbrace{\qddc^{2t+1}}_{=\qddc^{2t}}
    +
    \alpha
    \left (
        \rcc +
        \gamma
        \qccd^{2t+1}
    \right )
    \enspace.
    \tag{\ref{eq_lose_shift_qddc}}
    \end{cases}
\end{align}
Plugging \Cref{eq_lose_shift_qccd} in \Cref{eq_lose_shift_qddc} yields:
\begin{align}
    \qddc^{2t+2}
    & =
    (1 - \alpha) \qddc^{2t}
    +
    \alpha
    \left (
        \rcc +
        \gamma
        (1 - \alpha)  \qccd^{2t}
    + \gamma \alpha \left (\rdd + \gamma \qddc^{2t} \right )
    \right )
    \enspace,
    \text{\ie}
    \\
    \qddc^{2t+2}
    & =
    \alpha \gamma (1 - \alpha)  \qccd^{2t}
    +
    (1 - \alpha + \alpha \gamma^2) \qddc^{2t}
    +
    \alpha
    \left (
        \rcc
    + \gamma \alpha\rdd
    \right )
    \enspace.
    \label{eq_lose_shift_qddc2}
\end{align}

Following~\Cref{eq_lose_shift_qccd,eq_lose_shift_qddc2}, the dynamics of $\qccd$ and $\qddc$ is governed by the linear system
%%%%%%%%%%%%%%%%%%%%%%%%%%%%%%%%%%%%%%%%%%%%%%%%%%%%%%%%%%%%%%%%%%%%%%%%%%%%%%%
%%%%%%%%%%%%%%%%%%%%%%%%%%%%%%%%%%%%%%%%%%%%%%%%%%%%%%%%%%%%%%%%%%%%%%%%%%%%%%%
% \begin{equation}
\begin{empheq}[box=\fcolorbox{blue!40!black!60}{green!10}]{align}
    \begin{pmatrix}
        \qccd^{2t+1} \\
        - \qddc^{2t+2}
    \end{pmatrix}
    = \begin{pmatrix}
        1 -\alpha & - \alpha \gamma \\
        - \alpha \gamma (1 - \alpha) & 1 - \alpha + (\alpha \gamma)^2
    \end{pmatrix}
     \begin{pmatrix}
        \qccd^{2t-1} \\
        - \qddc^{2t}
    \end{pmatrix}
    + b
    \enspace,
\end{empheq}
%%%%%%%%%%%%%%%%%%%%%%%%%%%%%%%%%%%%%%%%%%%%%%%%%%%%%%%%%%%%%%%%%%%%%%%%%%%%%%%
%%%%%%%%%%%%%%%%%%%%%%%%%%%%%%%%%%%%%%%%%%%%%%%%%%%%%%%%%%%%%%%%%%%%%%%%%%%%%%%
for a given $b$.
Let $(- \qccd^{\star,\loseshift},\qddc^{\star,\loseshift})$ be the fixed point of this linear system, it writes

\begin{empheq}[box=\fcolorbox{blue!40!black!60}{green!10}]{align}
    \begin{pmatrix}
        \qccd^{2+1} - \qccd^{\star,\loseshift} \\
        - (\qddc^{2+2}- \qddc^{\star,\loseshift})
    \end{pmatrix}
    = \begin{pmatrix}
        1 -\alpha & - \alpha \gamma \\
        - \alpha \gamma (1 - \alpha) & 1 - \alpha + (\alpha \gamma)^2
    \end{pmatrix}
     \begin{pmatrix}
        \qccd^{2t-1} - \qccd^{\star,\loseshift}  \\
        - (\qddc^{2t} - \qddc^{\star,\loseshift})
    \end{pmatrix}
    % + b
    \enspace.
\end{empheq}

Standard computations yield the eigenvalues of the matrix
%%%%%%%%%%%%%%%%%%%%%%%%%%%%%%%%%%%%%%%%%%%%%%%%%%%%%%%%%%%%%%%%%%
%%%%%%%%%%%%%%%%%%%%%%%%%%%%%%%%%%%%%%%%%%%%%%%%%%%%%%%%%%%%%%%%%%
\begin{equation}
    \begin{pmatrix}
        1 -\alpha & - \alpha \gamma \\
        - \alpha \gamma ( 1 - \alpha) & 1 - \alpha + (\alpha \gamma)^2
    \end{pmatrix}
    \enspace,
\end{equation}
%%%%%%%%%%%%%%%%%%%%%%%%%%%%%%%%%%%%%%%%%%%%%%%%%%%%%%%%%%%%%%%%%%
%%%%%%%%%%%%%%%%%%%%%%%%%%%%%%%%%%%%%%%%%%%%%%%%%%%%%%%%%%%%%%%%%%
are
%%%%%%%%%%%%%%%%%%%%%%%%%%%%%%%%%%%%%%%%%%%%%%%%%%%%%%%%%%%%%%%%%%
%%%%%%%%%%%%%%%%%%%%%%%%%%%%%%%%%%%%%%%%%%%%%%%%%%%%%%%%%%%%%%%%%%
\begin{equation}\label{eq_eigenvalues}
    \lambda_{\pm} =
    1 -\alpha
    +
    \alpha \gamma
    \left(
        \frac{\alpha \gamma}{2}
        \pm \sqrt{1-\alpha +\frac{(\alpha \gamma)^2 }{4}}
        \right)
    \enspace.
\end{equation}
%%%%%%%%%%%%%%%%%%%%%%%%%%%%%%%%%%%%%%%%%%%%%%%%%%%%%%%%%%%%%%%%%%
%%%%%%%%%%%%%%%%%%%%%%%%%%%%%%%%%%%%%%%%%%%%%%%%%%%%%%%%%%%%%%%%%%
Thus, for $0<\alpha <1$ and $0<\gamma<1$ we have $0<\lambda_{\pm} <1$ and thus $\left ( \qccd^{t},\qddc^t \right )$ converges linearly towards $\left (\qccd^{\star,\loseshift},\qddc^{\star,\loseshift} \right )$.

Then using \Cref{lemma_stay_lose_shift}, we have that $\qccd^{t}$ will get below $\qccc^{t_0}$ before $\qddc^{t}$  getting below $\qddd^{t_0}$.

\begin{lemma}\label{lemma_stay_lose_shift}
    In phase 2, suppose that $\qccd^t > \qccc^t$, then
    $\qddc^{t} > \qddd^{t_1} = \qddd^t$.
\end{lemma}
%%%%%%%%%%%%%%%%%%%%%%%%%%%%%%%%%%%%%%%%%%%%%%%%
\begin{proof}(\Cref{lemma_stay_lose_shift})
    In phase 2, $\qccd^{t+1} > \qccc^{t_0}$
    Using \Cref{eq_lose_shift_qddc}
    \begin{align}
        \qddc^{2t+2}
        & =
        (1 - \alpha) \qddc^{2t}
        +
        \alpha
        (
            \rcc
            +
            \gamma
            \underbrace{\qccd^{2t+1} }_{
                >
                \qccc^{2t+1}
                =
                \qccc^{t_0}
                >
                \frac{\qddc^{t_0} - \rcc }{\gamma}
                \text{(\Cref{ass_qccc}})
                     }
                )
        \enspace,
    \end{align}
    hence
    \begin{align}
        \qddc^{2t+2}
        &>
        (1-\alpha) \qddc^{2t}
        + \alpha \qddc^{t_0}
        \\
        \qddc^{2t+2}
        &>
        (1-\alpha) \qddc^{2t}
        + \alpha \qddd^{t_1}
        \enspace.
    \end{align}
    Since
    \begin{align}
        \qddc^{t_1} > \qddd^{t_1}
        \enspace, \text{ then }
        \\
        \qddc^{2t+2}
        >
        \qddd^{t_1} = \qddd^{s}, \text{for all } s > t_1
    \end{align}
    For phases 2 and 3, the convergence is also linear, and similar arguments hold.
\end{proof}

%%%%%%%%%%%%%%%%%%%%%%%%%%%%%%%%%%%%%%%%%%%%%%%%%%%%%%%%%%%%%%%%%%%%%%%%%%%%%%%
\subsection{$\bigo(1/\alpha)$ convergence rate}
\label{app_conv_rate_deterministic}
%%%%%%%%%%%%%%%%%%%%%%%%%%%%%%%%%%%%%%%%%%%%%%%%%%%%%%%%%%%%%%%%%%%%%%%%%%%%%%%
The convergence of the $Q$-values is linear in each phase (1 to 3). For instance, in phase 1 the convergence is linear with rate $(1 - \alpha (1 - \gamma))$, more specifically,
\begin{align}
    \qddd^{t}  - \qddd^{\star, \defect} = (1 - \alpha (1 - \gamma))^t  (\qddd^{t_0} -\qddd^{\star, \defect} ) \quad .
\end{align}
With the later decrease, the policy switches, \ie
$$\qddd^{t} - \qddd^{\star, \defect}
<
\qddc^{t_0} -\qddd^{\star, \defect} $$
will require a number of steps
\begin{align}
    T
    &= (\log (Q_{(DD), C}^{t_0} -\qddd^{\star, \defect} )) -
    \log (\qddd^{t_0} -\qddd^{\star, \defect} ))
    / \log (1 - \alpha (1 - \gamma))
    \\
    & \sim_{\alpha \rightarrow 0}
    (\log (\qddd^{t_0} -\qddd^{\star, \defect} )) -
    \log (Q_{(DD), C}^{t_0} -\qddd^{\star, \defect} ))) / \alpha (1 - \gamma)
    \\
    &= \bigo(1/\alpha)
    \enspace.
\end{align}
For phases 2 and 3, the convergence is also linear, and similar arguments hold.
To summarize, the convergence is linear in the three phases (from \alwaysdefect~to \Loseshift, from \Loseshift~to \Pavlovpolicy, staying in \Pavlovpolicy), and is done in $\bigo(1/\alpha)$ in each phase.

% !TEX root = ../icml2025.tex

%%%%%%%%%%%%%%%%%%%%%%%%%%%%%%%%%%%%%%%%%%%%%%%%%%%%%%%%%%%%%%%%%%%%
\section{Proof of the $\epsilon$-Greedy Case (\Cref{thm_conv_epsilon_greedy})}
\label{app_proof_thm_conv_epsilon_greedy}
%%%%%%%%%%%%%%%%%%%%%%%%%%%%%%%%%%%%%%%%%%%%%%%%%%%%%%%%%%%%%%%%%%%%
Let us start with the proof of \Cref{lem_non_greedy}.
%%%%%%%%%%%%%%%%%%%%%%%%%%%%%%%%%%%%%%%%%%%%%%%%%%%%%%%%%%%%%%%%%%%%
\subsection{Proof of \Cref{lem_non_greedy}: from \alwaysdefect~to \loseshift}
% \label{app_proof_thm_conv_epsilon_greedy}
\lemmaepsilongreedy*
%%%%%%%%%%%%%%%%%%%%%%%%%%%%%%%%%%%%%%%%%%%%%%%%%%%%%%%%%%%%%%
\subsubsection{Proof of \Cref{lemma_control_set_nongreedy}}
%%%%%%%%%%%%%%%%%%%%%%%%%%%%%%%%%%%%%%%%%%%%%%%%%%%%%%%%%%%%%%
    \begin{proof}[Proof of \Cref{lemma_control_set_nongreedy}]
        The proof of this result relies on the fact that
        \begin{itemize}
            \item The probability of the greedy action is $1-\epsilon$.
            \item Each agent picks their action independently of
            the other.
        \end{itemize}
        Thus, either both agents take a greedy action with probability $(1-\epsilon)^2$,
        or at least one of them picks a non-greedy action with probability $2\epsilon - \epsilon^2$.
        \begin{align}
            \mathbb{P}(\mathcal{E}_{k, T})
             &
             = \sum_{i=0}^k
             \mathbb{P}(
                \text{a non-greedy action has been picked exactly $i$ times in $T$ steps})
             \\
             & =
             \sum_{i=0}^k
             \binom{T}{i}
             (1-\epsilon)^{2(T-i)}
             (2\epsilon-\epsilon^2)^{i}
             \\
             & =
            1 -
            \sum_{i=k+1}^T
            \binom{T}{i}
            (1-\epsilon)^{2(T-i)}
            \overbrace{(2\epsilon-\epsilon^2)^{i}}^{\leq (2 \epsilon)^i \leq (2\epsilon)^{k+1}}
            \, (\text{because $0 \leq \epsilon \leq 1/2$})
             \\
             & \geq
             1 -
             (2 \epsilon)^{k+1} \overbrace{\sum_{i = k+1}^T \binom{T}{i} (1-\epsilon)^{2(T-i)}}^{\leq \sum_{i = 0}^T \binom{T}{i} = 2^T}
             \\
             &\geq 1 - 2^T (2\epsilon)^{k+1}
            \enspace,
        \end{align}
        \ie
        \begin{empheq}[box=\fcolorbox{blue!40!black!10}{green!05}]{equation}
            \mathbb{P}(\mathcal{E}_{k, T})
            \geq
            1 - 2^T (2\epsilon)^{k+1}
            \enspace.
        \end{empheq}
    \end{proof}
    We will then use this lemma in steps 1 and 2.
    The idea of the proof will be to leverage this lower bound to show that the $\epsilon$-greedy dynamic behaves similarly to the fully greedy policy.

%%%%%%%%%%%%%%%%%%%%%%%%%%%%%%%%%%%%%%%%%%%%%%%%%%%%%%%%%%%%%%
\subsubsection{Proof of \Cref{lemma_control_Qt1_Qt}}
%%%%%%%%%%%%%%%%%%%%%%%%%%%%%%%%%%%%%%%%%%%%%%%%%%%%%%%%%%%%%%
\begin{proof}[Proof of \Cref{lemma_control_Qt1_Qt}]
    For all state, action pair $(s, a) \in \mathcal{S} \times \mathcal{A}$, the $Q$-learning update rule writes
    \begin{equation}\label{eq_Q_learning_update}
        \qval{s}{a}^{t+1}
        =
        \qval{s}{a}^{t}
        +
        \alpha \left (
            r_t
            +
            \gamma \max_{a'}
            \qval{s}{a'}^{t}
            -
            \qval{s}{a}^{t}
            \right )
        \enspace.
    \end{equation}
    In addition, since
    $Q = \sum_{t=0}^\infty \gamma^t r_t$,
    then the $Q$ values cannot go above (resp. below) the maximal (resp. minimal) reward. In other words
    \begin{align}
        \frac{r_{\min}}{1 - \gamma}
        \leq
        &\qval{s}{a}
        \leq
        \frac{r_{\max}}{1 - \gamma} \quad \text{which yields}
        \\
        r_{\min}
        +
        \gamma \frac{r_{\min}}{1 - \gamma}
        - \frac{r_{\max}}{1 - \gamma}
        \leq
        &r_t
        +
        \gamma \max_{a'}
        \qval{s}{a'}^{t}
        -
        \qval{s}{a}^{t}
        \leq
        r_{\max}
        + \gamma \frac{r_{\max}}{1 - \gamma}
        - \frac{r_{\min}}{1 - \gamma}
        \\
        \frac{r_{\min} - r_{\max}}{1 - \gamma}
        \leq
        &r_t
        +
        \gamma \max_{a'}
        \qval{s}{a'}^{t}
        -
        \qval{s}{a}^{t}
        \leq
        \frac{r_{\max} - r_{\min}}{1 - \gamma}
        \enspace.
        \label{eq_bound_inter_Q}
    \end{align}
    Plugging \Cref{eq_bound_inter_Q} in \Cref{eq_Q_learning_update} yields the desired result
    \begin{empheq}[box=\fcolorbox{blue!40!black!10}{green!05}]{equation}\label{eq_bound_app}
        |\qval{s}{a}^{t+1} - \qval{s}{a}^{t} |
        \leq
        \frac{\Delta_r \alpha}{1-\gamma}
        \enspace,
    \end{empheq}
    with
    $\Delta_r
    \triangleq
    r_{\max} - r_{\min}$
    the difference between the maximal and the minimal reward.
\end{proof}
%%%%%%%%%%%%%%%%%%%%%%%%%%%%%%%%%%%%%%%%%%%%%%%%%%%%%%%%%%%%%%%%
%%%%%%%%%%%%%%%%%%%%%%%%%%%%%%%%%%%%%%%%%%%%%%%%%%%%%%%%%%%%%%
\subsubsection{Proof of \Cref{lemma_control_q_val_nongreedy}}
%%%%%%%%%%%%%%%%%%%%%%%%%%%%%%%%%%%%%%%%%%%%%%%%%%%%%%%%%%%%%%
\begin{proof}[Proof of \Cref{lemma_control_q_val_nongreedy}]
    Conditioning on $\mathcal{E}_{k, T}$, for all state-action pair $ (s,a) \neq (\dd, \D)$, $\qval{s}{a}$ is updated at most $2k$ times.
    Since $\qval{s}{a}$ is updated at most $2k$ times, \Cref{lemma_control_Qt1_Qt} repeatedly applied $2k$ times yields
    %%%%%%%%%%%%%%%%%%%%%%%%%%%%%%%%%%%%%%%%%%%%%%%%%%%%%%%%%%%%%%
    %%%%%%%%%%%%%%%%%%%%%%%%%%%%%%%%%%%%%%%%%%%%%%%%%%%%%%%%%%%%%%
    \begin{empheq}[box=\fcolorbox{blue!40!black!10}{green!05}]{equation}
        | \qval{s}{a}^t
        -
        \qval{s}{a}^{t_0}|
        \leq
        \frac{2k \Delta_r  \alpha}{1-\gamma} \,,\quad \forall (s,a) \neq (\dd, \D)
        \enspace.
    \end{empheq}
\end{proof}
%%%%%%%%%%%%%%%%%%%%%%%%%%%%%%%%%%%%%%%%%%%%%%%%%%%%%%%%%%%%%%%%
%%%%%%%%%%%%%%%%%%%%%%%%%%%%%%%%%%%%%%%%%%%%%%%%%%%%%%%%%%%%%%
\subsubsection{Proof of \Cref{lemma_control_q_val_greedy}}
%%%%%%%%%%%%%%%%%%%%%%%%%%%%%%%%%%%%%%%%%%%%%%%%%%%%%%%%%%%%%%
\begin{proof}[Proof of \Cref{lemma_control_q_val_greedy}]
    Let us consider the first step, from \alwaysdefect~to~\loseshift.
    We start from the \alwaysdefect~region, \ie the greedy action is to defect regardless of the state.
    When the greedy action $\D$ is picked in the state $\dd$, the entry $\qddd$ is updated as follows.
    In other words, conditioning on $\mathcal{E}_{k, T}$
    \begin{itemize}
        \item The defect action $\D$ in the state $\dd$ is played at least $T-2k$ times.
        When it is the case, $\qddd$ is updated at least $T - 2 k$ times according to
        \begin{align}
            \qddd^{t+1}
            & =
            \qddd^{t} + \alpha \left (\rdd + \gamma \qddd^{t} - \qddd^{t} \right )
            \\
            \qddd^{t+1} - \qddd^{\star, \defect}
            & =
            (1-\alpha + \alpha \gamma)
            \left ( \qddd^{t}- \qddd^{\star, \defect} \right)
            \enspace, \label{eq_convergence_Qddd_defect}
        \end{align}
        with $\qddd^{\star, \defect} = \frac{\rdd}{1 - \gamma}$.
        \item Otherwise $\qddd$ update is bounded $2k$ times using \Cref{lemma_control_Qt1_Qt}:
            \begin{align}
                \qddd^{t+1}
                -
                \qddd^{\star, \defect}
                \leq
                \qddd^{t}
                -
                \qddd^{\star, \defect}
                +
                \frac{\Delta_r \alpha}{1-\gamma}
                \enspace.
                \label{eq_convergence_Qddd_nongreedy}
            \end{align}
    \end{itemize}
    %%%%%%%%%%%%%%%%%%%%%%%%%%%%%%%%%%%%%%%%%%%%%%%%%%%%%%
    %%%%%%%%%%%%%%%%%%%%%%%%%%%%%%%%%%%%%%%%%%%%%%%%%%%%%%
% Thus whenever $\dd$ is played at time $t$ and $t+1$, we have that $\qddd^{t+1}$ decreases linearly towards $\qddd^{\star, \defect}$.
%
%
%%%%%%%%%%%%%%%%%%%%%%%%%%%%%%%%%%%%%%%%%%%%%%%%%%%%%%%%%%%%%%
%%%%%%%%%%%%%%%%%%%%%%%%%%%%%%%%%%%%%%%%%%%%%%%%%%%%%%%%%%%%%%
Combining \Cref{eq_convergence_Qddd_defect} for at least $T - 2k$ steps, and \Cref{eq_convergence_Qddd_nongreedy} for at most $2k$ steps yields
\begin{empheq}[box=\fcolorbox{blue!40!black!10}{green!05}]{equation} \label{eq:conv_Q_DD}
    \qddd^{T} - \qddd^{\star, \defect}
    \leq
     (1-\alpha + \alpha \gamma)^{T-2k}
     \left(
        \qddd^{t_0}- \qddd^{\star, \defect}
    \right)
    +
    \frac{2k \Delta_r \alpha}{1-\gamma}
     \enspace.
\end{empheq}
\end{proof}
%
%
%%%%%%%%%%%%%%%%%%%%%%%%%%%%%%%%%%%%%%%%%%%%%%%%%%%%%%%%%%%%%%%%%
%%%%%%%%%%%%%%%%%%%%%%%%%%%%%%%%%%%%%%%%%%%%%%%%%%%%%%%%%%%%%%%%%
\subsubsection{Proof of \Cref{lemma_qval_not_ddd}}
%%%%%%%%%%%%%%%%%%%%%%%%%%%%%%%%%%%%%%%%%%%%%%%%%%%%%%%%%%%%%%%%%
%%%%%%%%%%%%%%%%%%%%%%%%%%%%%%%%%%%%%%%%%%%%%%%%%%%%%%%%%%%%%%%%%
\begin{proof}
    Let $s \in \mathcal{S} \backslash \dd$, $t \leq T$, on the event $\mathcal{E}_{k, T}$
    \begin{align}
        \qval{s}{\D}^t
        -
        \qval{s}{\C}^t
        & =
        \underbrace{\qval{s}{\D}^t
        -
        \qval{s}{\D}^{t_0}}_{\geq - \frac{2k \Delta_r \alpha}{1 - \gamma} \text{(using \Cref{lemma_control_q_val_nongreedy})}}
        +
        \underbrace{\qval{s}{\C}^{t_0}
        -
        \qval{s}{\C}^{t}}
        _{\geq - \frac{2k \Delta_r \alpha}{1 - \gamma} \text{(using \Cref{lemma_control_q_val_nongreedy})}}
        +
        \underbrace{\qval{s}{\D}^{t_0}
        -
        \qval{s}{\C}^{t_0}}_{
            \geq \min_{s \neq \dd}
        \qval{s}{\D}^{t_0}
            -
        \qval{s}{\C}^{t_0} \triangleq \Delta Q^{t_0}
            }
        \\
        & \geq
        - \frac{4 k \Delta_r \alpha}{1 - \gamma}
        +
        \Delta Q^{t_0}
        \enspace.
    \end{align}
    Hence, if
    \begin{equation}
        k > \frac{(1 - \gamma) \Delta Q^{t_0}}{4 \Delta r }
        \enspace,
    \end{equation}
    then on the event $\mathcal{E}_{k, T}$ for all $s \in \mathcal{S} \backslash \dd$,
    \begin{empheq}[box=\fcolorbox{blue!40!black!10}{green!05}]{equation}
        \qval{s}{\D}^T
        -
        \qval{s}{\C}^T \geq 0
        \enspace.
    \end{empheq}
    Note that
    $\Delta Q^{t_0} \triangleq
    \min_{s \neq \dd}
    \qval{s}{\D}^{t_0}
    -
    \qval{s}{\C}^{t_0} > 0$ since one starts from the \alwaysdefect~policy.
\end{proof}
%%%%%%%%%%%%%%%%%%%%%%%%%%%%%%%%%%%%%%%%%%%%%%%%%%%%%%%%%%%%%%%%%
%%%%%%%%%%%%%%%%%%%%%%%%%%%%%%%%%%%%%%%%%%%%%%%%%%%%%%%%%%%%%%%%%
\subsubsection{Proof of \Cref{lemma_qval_ddd}}
%%%%%%%%%%%%%%%%%%%%%%%%%%%%%%%%%%%%%%%%%%%%%%%%%%%%%%%%%%%%%%%%%
%%%%%%%%%%%%%%%%%%%%%%%%%%%%%%%%%%%%%%%%%%%%%%%%%%%%%%%%%%%%%%%%%
\begin{proof}
    Using \Cref{lemma_control_q_val_greedy},
    on the event $\mathcal{E}_{k, T}$, one has
    \begin{align}
        \qddd^{T}
        &\leq
         (1-\alpha + \alpha \gamma)^{T-2k}
         \left (
            \qddd^{t_0}- \qddd^{\star, \defect}
        \right )
        +
        \frac{2k \Delta_r \alpha}{1 - \gamma}
        +
        \qddd^{\star, \defect}
        \\
        &\leq
        - \frac{2 k \alpha \Delta r}{1 - \gamma}
        + \qval{\dd}{\C}^{t_0} \text{ if }
        \label{eq_control_qT_ddd}
    \end{align}

    \begin{align}
        T > 2k
        +
        \frac{
            \log
            \left(
                \qddc^{t_0}
                -
                \qddd^{\star, \defect}
                - \tfrac{4k \Delta_r \alpha}{1-\gamma}
            \right)
            -
            \log \left (
                \qddd^{t_0}
                -
                \qddd^{\star, \defect}
                \right )
            }{
                \log(1-\alpha + \gamma\alpha)
            }
            \enspace.
    \end{align}

    \begin{remark}
        $\left(
        \qddc^{t_0}
        -
        \qddd^{\star, \defect}
        - \tfrac{4k \Delta_r \alpha}{1-\gamma}
    \right)$ needs to be positive.
    \end{remark}

    In addition, using \Cref{lemma_control_q_val_nongreedy} on
    $\mathcal{E}_{k, T}$ yields
    \begin{align}
        - \frac{2 k \alpha \Delta r}{1 - \gamma}
        + \qval{\dd}{\C}^{t_0}
        \leq \qval{\dd}{\C}^{T} \enspace.
        \label{eq_control_qT_ddc}
    \end{align}
    %%%%%%%%%%%%%%%%%%%%%%%%%%%%%%%%%%%%%%%%%%%%%%%%%%ù
    Combining \Cref{eq_control_qT_ddd,eq_control_qT_ddc}, yields that on $\mathcal{E}_{k, T}$, if
    \begin{align}
        T > 2k
    +
    \frac{
        \log
        \left(
            \qddc^0
            -
            \qddd^{\star, \defect}
            - \tfrac{4k \Delta_r \alpha}{1-\gamma}
        \right)
        -
        \log \left (
            \qddd^{0}
            -
            \qddd^{\star, \defect}
            \right )
        }{
            \log(1-\alpha + \gamma\alpha)
        }
        \enspace,
    \end{align}
    then
    \begin{empheq}[box=\fcolorbox{blue!40!black!10}{green!05}]{equation}
        \qddd^{T}
        <
        \qddc^{T}
        \enspace.
    \end{empheq}
\end{proof}
%%%%%%%%%%%%%%%%%%%%%%%%%%%%%%%%%%%%%%%%%%%%%%%%%%%%%%%%%%%%%%%%%
%%%%%%%%%%%%%%%%%%%%%%%%%%%%%%%%%%%%%%%%%%%%%%%%%%%%%%%%%%%%%%%%%
\subsection{Proof of \Cref{thm_conv_epsilon_greedy}}
%%%%%%%%%%%%%%%%%%%%%%%%%%%%%%%%%%%%%%%%%%%%%%%%%%%%%%%%%%%%%%%%%
%%%%%%%%%%%%%%%%%%%%%%%%%%%%%%%%%%%%%%%%%%%%%%%%%%%%%%%%%%%%%%%%%
%%%%%%%%%%%%%%%%%%%%%%%%%%%%%%%%%%%%%%%%%%%%%%%%%%%%%%%%%%%%%%%%%
%%%%%%%%%%%%%%%%%%%%%%%%%%%%%%%%%%%%%%%%%%%%%%%%%%%%%%%%%%%%%%%%%
%
\begin{proof}[Proof of \Cref{thm_conv_epsilon_greedy}]
    To conclude, if we set
    $k = O(1/\alpha)$,
    such that
    $k < \frac{\Delta_{Q}(1-\gamma)}{2\alpha \Delta_r}$
    and the total number of steps $T$ such that
    $T > 2k
    + \frac{
        \log\left(\qddc^{t_0}
        -
        \qddd^{\star, \defect} - \tfrac{4k \Delta_r \alpha}{1-\gamma}
        \right )
        -
        \log \left (
        \qddd^{0}
        -
        \qddd^{\star, \defect}
    \right)}{\log(1-\alpha +\gamma\alpha)}         $,
    one that with the probability of at least
    $1- 2^T\epsilon^{T-k}$, there exists $t_1 <T$ such that $\qddd^{t_1} < \qddc^{t_1}$.

% We conclude the proof by noticing that if we want this event to happen
The proof can be concluded as follows: in order to hold with probability at least $1-\delta$, $\epsilon$ must satisfy
\begin{align}
    T\log(2) + k \log 2 \epsilon
    & \leq
    \log \delta \\
    \log(\epsilon)
    & \leq
    \underbrace{\frac{1}{k}}_{:= C / \alpha} \log(\delta)
    - \underbrace{\frac{T}{k}}_{:= C_1 \indep \alpha} \log(2)
    \enspace,
\end{align}
where the constant $C$ depends on the constants of $T$ and $k$ but is independent of alpha. Indeed, as both $T$ and $k$ grow as $1/\alpha$, the ration $T / k$ can be chosen independent of $\alpha$:
this ensures that the event holds with high probability, and does not contradict
$k \leq \Delta_Q \frac{(1-\gamma)}{2 \alpha \Delta_r} = O(1/\alpha)$ from \Cref{thm_conv_epsilon_greedy}.
More formally, the number of steps needed $T$, varies as a function of $1/\alpha$:
$T \sim C_1 / \alpha$,
and $k \sim C_2 / \alpha$, for given $C_1, C_2 \in \mathbb{R}$.
The choice of $k$ in \Cref{thm_conv_epsilon_greedy} requires as well a growth as $1/\alpha$:
$k \leq \Delta_Q(1-\gamma)/(2 \alpha \Delta_r) = \bigo(1/ \alpha)$.

To summarize, we proved that with high probability, there exists a time step $t_1$ such that
    \begin{align}
        &\qddd^{t_1} < \qddc^{t_1} \\
        &\qccd^{t_1} > \qccc^{t_1}\\
        &\qdcd^{t_1} > \qdcc^{t_1}\\
        &\qcdd^{t_1} > \qcdd^{t_1}
        \enspace,
    \end{align}
    and for all $t<t_1$
    \begin{align}
        &\qddd^{t} > \qddc^{t}\\
        &\qccd^{t} > \qccc^{t}\\
        &\qdcd^{t} > \qdcc^{t}\\
        &\qcdd^{t} > \qcdd^{t}
        \enspace.
    \end{align}
    In plain words, the $Q$-values went from the \alwaysdefect~to the \loseshift~region.
\end{proof}

% !TEX root = ../icml2025.tex

    %%%%%%%%%%%%%%%%%%%%%%%%%%%%%%%%%%%%%%%%%%%%%%%%%%%%%%%%%%%%%%
    %%%%%%%%%%%%%%%%%%%%%%%%%%%%%%%%%%%%%%%%%%%%%%%%%%%%%%%%%%%%%%
    \subsection{From \loseshift~to \Pavlovpolicy}
    %%%%%%%%%%%%%%%%%%%%%%%%%%%%%%%%%%%%%%%%%%%%%%%%%%%%%%%%%%%%%%
    %%%%%%%%%%%%%%%%%%%%%%%%%%%%%%%%%%%%%%%%%%%%%%%%%%%%%%%%%%%%%%
    In this section, we show similar results to \Cref{lem_non_greedy}.
    For simplicity, we considered that in \Cref{eq_lose_shift_qccd,eq_lose_shift_qddc} one-time step was performing two  $\epsilon$-greedy updates.
    %%%%%%%%%%%%%%%%%%%%%%%%%%%%%%%%%%%%%%%%%%%%%%%%%%%%%%%%
    %%%%%%%%%%%%%%%%%%%%%%%%%%%%%%%%%%%%%%%%%%%%%%%%%%%%%%%%
    \begin{enumerate}
        \item We can lower-bound
        $\mathbb{P}(\mathcal{E}_{k, T})$
        \item Under the event
        $\mathcal{E}_{k, T}$
        we will observe at least
        $(1-4k)/2$ pair of updates following
        \Cref{eq_lose_shift_qccd,eq_lose_shift_qddc}
        are performed and at most $4k$ greedy updates.
        \item Thus the $Q$-values for the greedy actions $\qccd$ and $\qddc$ will converge linearly and can be bounded.
        \item The $Q$-entries of the non-greedy actions can be bounded.
    \end{enumerate}
    %%%%%%%%%%%%%%%%%%%%%%%%%%%%%%%%%%%%%%%%%%%%%%%%%%%%%%%%
    %%%%%%%%%%%%%%%%%%%%%%%%%%%%%%%%%%%%%%%%%%%%%%%%%%%%%%%%
    % Thus, using similar computation as for phase 1, we get that for $\alpha = O\left(\frac{\log(1/\epsilon)}{\log (1/\delta)} \right)$ we have that \Cref{alg:qlearning} converge from \texttt{lose shift} to \texttt{Pavlov}.
    % For Phase 3 we can use the same technique.
    % Note that in that case by \Cref{prop_pavlov_eq} there exists a stable point.

Similarly to \Cref{lem_non_greedy}, one can control the $Q$-values in the greedy and non-greedy states.
\begin{restatable}{lemma}{lemmaepsilongreedytopavlov}
    \label{lem_non_greedy2}
    Let us consider~\Cref{alg:qlearning} with
    $\epsilon < 1 / 2$, $\gamma>0$ and $\delta>0$. If both agents act according to an $\epsilon$-greed \loseshift~policy then,
    %%%%%%%%%%%%%%%%%%%%%%%%%%%%%%%%%%%%%%%%%%%%%%%%%%%%%%%%
    %%%%%%%%%%%%%%%%%%%%%%%%%%%%%%%%%%%%%%%%%%%%%%%%%%%%%%%%
     \begin{lemmaenum}[topsep=0pt,itemsep=0pt,partopsep=0pt,parsep=0pt]
        % \item \label{lemma_control_set_nongreedy2} The probability of the event $\mathcal{E}_{k, T}$ is lower bounded as follows
        % \begin{equation*}
        %     \mathbb{P}(\mathcal{E}_{k, T})
        %     \geq
        %     1 - 2^T\epsilon^{T-k}
        %     \enspace.
        % \end{equation*}
        %%%%%%%%%%%%%%%%%%%%%%%%%%%%%%%%%%%%%%%%%%%%%%%%%%%%%%%%
        \item \label{lemma_control_Qt1_Qt2}
        On the event $\mathcal{E}_{k, T}$, for all $(s, a) \in \mathcal{S} \times \mathcal{A}$
        \begin{equation*}
            |\qval{s}{a}^{t+1} - \qval{s}{a}^{t} |
            \leq
            \frac{\Delta_r \alpha}{1-\gamma}
            \enspace.
        \end{equation*}
        %%%%%%%%%%%%%%%%%%%%%%%%%%%%%%%%%%%%%%%%%%%%%%%%%%%%%%%%
        \item \label{lemma_control_q_val_nongreedy2} On the event $\mathcal{E}_{k, T}$, for all $t \geq t_1$
        the deviation for the $Q$-values others than $\qddc$ and $\qccd$ is at most
        \begin{equation*}
            |\qval{s}{a}^t - \qval{s}{a}^{t_1} |
            \leq
            \frac{2k\Delta_r  \alpha}{1-\gamma} \,,\quad \forall (s,a) \notin \{ (\dd, \C), (\cc, \D) \}
            \enspace.
        \end{equation*}
        %%%%%%%%%%%%%%%%%%%%%%%%%%%%%%%%%%%%%%%%%%%%%%%%%%%%%%%%
        \item \label{lemma_control_q_val_greedy2}
        On the event $\mathcal{E}_{k, T}$,
        the deviation for the $Q$-value $\qccd$ is upper-bounded
        \begin{align}
            \qccd^t - \qccd^{\star, \loseshift}
            & \leq
            C_1
            \lambda_1^{t - k}
            +
            C_2
            \lambda_2^{t - k}
            +
            \frac{2 k \Delta_r \alpha}{1 - \gamma}
            \enspace.
        \end{align}
        where $C_1$ and $C_2$ are defined in \Cref{app_eq_def_C1_C2}.
        %%%%%%%%%%%%%%%%%%%%%%%%%%%%%%%%%%%%%%%%%%%%%%%%%%%%%%%%
        \item \label{lemma_qval_not_ddd2}
        On the event $\mathcal{E}_{k, T}$,
        for $k <\frac{ (1-\gamma) \Delta Q}{2\alpha \Delta_r}$,
        with $\Delta Q \triangleq \min_{s \notin \{ \dd, \cc \} }
        \qval{s}{\D}^{t_0}
        - \qval{s}{\C}^{t_0}$
        \begin{equation*}
            \qval{s}{\D}^t
            >
            \qval{s}{\C}^t
            \,,\quad
            \forall
            t \leq T,\,s \neq \{ \dd, \cc \}
            \enspace.
        \end{equation*}
        \item \label{lemma_ensure_not_leaving_lose_shift}
        If $\qccd^{t_0} - \qccc^{t_0}
            \geq
            \frac{4 k \Delta_r \alpha}{1 - \gamma}$
        then, for all $t_1 \leq t \leq t_2 $
        \begin{align}
            \qddc^t - \qddd^t
            + \qccd^t - \qccc^t
            &\geq
            \frac{k \Delta_r \alpha}{1 - \gamma}
            \enspace.
        \end{align}
        In other words, this ensures that the agents either go to the \Pavlovpolicy~policy or oscillate between \alwaysdefect~and \loseshift, but do not directly go from \alwaysdefect~to \Pavlovpolicy.
        \end{lemmaenum}
    \end{restatable}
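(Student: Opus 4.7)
The plan is to mirror the strategy used for \Cref{lem_non_greedy}, but now tracking two greedy entries $\qccd$ and $\qddc$ simultaneously, since under the \loseshift~policy the greedy actions at the ``synchronized'' states $\cc$ and $\dd$ are $\D$ and $\C$ respectively. The lower bound on $\mathbb{P}(\mathcal{E}_{k,T})$ from \Cref{lemma_control_set_nongreedy} transfers verbatim. Conditioning on $\mathcal{E}_{k,T}$, at most $2k$ timesteps feature a non-greedy action by either agent, and on the remaining ``clean'' timesteps only $\qccd$ and $\qddc$ are updated, following exactly \Cref{eq_lose_shift_qccd,eq_lose_shift_qddc}.

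Item (i) is identical to \Cref{lemma_control_Qt1_Qt}: boundedness of $Q$-values in $[r_{\min}/(1-\gamma),\,r_{\max}/(1-\gamma)]$ yields $|\qval{s}{a}^{t+1}-\qval{s}{a}^{t}|\leq \Delta_r \alpha/(1-\gamma)$. For (ii), any entry other than $\qccd$ or $\qddc$ is only touched on non-greedy timesteps, i.e.\ at most $2k$ times, so iterating (i) gives the stated bound. For (iii), I substitute \Cref{eq_lose_shift_qccd} into \Cref{eq_lose_shift_qddc} to obtain a $2\times 2$ linear system in $(\qccd,-\qddc)$ whose eigenvalues are the $\lambda_\pm\in(0,1)$ of \Cref{eq_eigenvalues}. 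On clean timesteps this gives a closed-form decay $\qccd^t - \qccd^{\star,\ls} = C_1 \lambda_1^{t-t_1} + C_2\lambda_2^{t-t_1}$ with $C_1,C_2$ determined by the initial gap at $t_1$; each of the at most $2k$ non-greedy steps adds an error of at most $\Delta_r\alpha/(1-\gamma)$ by (i), yielding the additive slack $2k\Delta_r\alpha/(1-\gamma)$ in the claimed bound. Item (iv) is then analogous to \Cref{lemma_qval_not_ddd}: for any $s\notin\{\cc,\dd\}$, decompose $\qval{s}{\D}^t - \qval{s}{\C}^t$ as $(\qval{s}{\D}^t-\qval{s}{\D}^{t_1}) + (\qval{s}{\C}^{t_1}-\qval{s}{\C}^t) + (\qval{s}{\D}^{t_1}-\qval{s}{\C}^{t_1})$ and apply (ii) to each of the first two summands.

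The hard part is (v). What must be ruled out is the pathology where both gaps $\qddc^t-\qddd^t$ and $\qccd^t-\qccc^t$ collapse together, which would let the policy jump directly from \loseshift~back into \alwaysdefect~or skip through \Pavlovpolicy~in an ill-controlled way. The idea is to build a one-dimensional Lyapunov-style quantity
\begin{equation*}
    \Phi^t \triangleq (\qddc^t - \qddd^t) + (\qccd^t - \qccc^t)
    \enspace.
\end{equation*}
On a clean timestep, only $\qccd$ and $\qddc$ move, and \Cref{eq_lose_shift_qccd,eq_lose_shift_qddc} imply that $\qccd^{t+1} + \qddc^{t+1}$ evolves as a contraction toward $\qccd^{\star,\ls}+\qddc^{\star,\ls}$, which lies strictly above $\qccc^{t_1}+\qddd^{t_1}$ by \Cref{ass_qccd,ass_qccc_qddd}. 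Hence $\Phi^t$ is non-increasing under clean steps only up to an explicit margin controlled by the spectral contraction of the linear system in (iii). Each of the at most $2k$ non-greedy timesteps perturbs $\Phi^t$ by at most $4\Delta_r\alpha/(1-\gamma)$ by (i), so telescoping gives $\Phi^t \geq \Phi^{t_1} - 8k\Delta_r\alpha/(1-\gamma)$, and the assumption $\qccd^{t_0}-\qccc^{t_0}\geq 4k\Delta_r\alpha/(1-\gamma)$ together with the \loseshift~lower bound on $\qddc^{t_1}-\qddd^{t_1}$ ensures $\Phi^t \geq k\Delta_r\alpha/(1-\gamma)$ throughout $[t_1,t_2]$.

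The main obstacle I anticipate is making the sign analysis for $\Phi^t$ fully rigorous: the two summands of $\Phi^t$ move in opposite directions (one shrinks, one grows), so a naive triangle-inequality bound is too lossy. The fix I plan to use is to track the two summands separately via the eigen-decomposition from (iii) — one combination decays with rate $\lambda_-$ and drives $\qccd-\qccc$ downward toward the \Pavlovpolicy~transition, while the complementary combination ensures $\qddc-\qddd$ stays comfortably positive — and then conclude on $\Phi^t$ only after this coordinate-wise control has been established.
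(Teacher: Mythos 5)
Your treatment of items \emph{i)}--\emph{iv)} coincides with the paper's: the paper simply observes that these proofs are identical to those of \Cref{lem_non_greedy}, with the only change being that the ``protected'' greedy entries are now $\qccd$ and $\qddc$ and everything else is touched at most $2k$ times on $\mathcal{E}_{k,T}$. Your item \emph{iii)} also matches: the paper substitutes \Cref{eq_lose_shift_qccd} into \Cref{eq_lose_shift_qddc}, diagonalizes the resulting $2\times 2$ matrix, and reads off $C_1,C_2$ from the coordinates of the initial gap in the eigenbasis (\Cref{app_eq_def_C1_C2}), with the $2k\Delta_r\alpha/(1-\gamma)$ slack absorbing the non-greedy steps exactly as you describe.

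For item \emph{v)}, your primary argument does not close as stated. Telescoping $\Phi^t$ with a per-step perturbation of $4\Delta_r\alpha/(1-\gamma)$ over $2k$ dirty steps costs $8k\Delta_r\alpha/(1-\gamma)$, while the hypothesis only supplies $\qccd^{t_0}-\qccc^{t_0}\geq 4k\Delta_r\alpha/(1-\gamma)$; after paying the $2k$-step drift to reach $t_1$ you are left with roughly $2k\Delta_r\alpha/(1-\gamma)$ of margin, which cannot absorb an $8k$ loss and still leave the claimed $k\Delta_r\alpha/(1-\gamma)$. Moreover, the sum $\qccd+\qddc$ is not an eigendirection of the clean-step dynamics, so it does not ``evolve as a contraction'' autonomously; the two summands genuinely move in opposite directions, as you note yourself. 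The fix you sketch in your final paragraph is precisely the paper's proof: set $u^t=\qccd^t-\qccd^{\star,\ls}$ and $v^t=-(\qddc^t-\qddc^{\star,\ls})$, decompose $(u^{t_1},v^{t_1})$ in the eigenbasis $(w_1,w_2)$ of the iteration matrix, and observe that $u^t-v^t$ is a positive multiple of $\lambda_1^t$ minus a positive multiple of $\lambda_2^t$ with $\lambda_1>\lambda_2$, so it cannot drop below its initial positive value; the chain of inequalities $\qccd^{t_0}-\qccc^{t_0}\geq 4k\Delta_r\alpha/(1-\gamma)\Rightarrow \qccd^{t_1}-\qccc^{t_1}\geq 2k\Delta_r\alpha/(1-\gamma)\Rightarrow \Phi^{t_1}\geq k\Delta_r\alpha/(1-\gamma)$ then seeds the monotonicity argument, and the translation from $u^t-v^t$ back to $\Phi^t$ uses the bounded drift of the non-greedy entries $\qccc,\qddd$ from item \emph{ii)}. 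You should commit to that eigenbasis argument as the proof of \emph{v)} and drop the telescoping bound, which is both too lossy and logically superfluous once the coordinate-wise control is in place.
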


    The proof of \Cref{lemma_control_Qt1_Qt2,lemma_control_q_val_nongreedy2,lemma_control_q_val_greedy2,lemma_qval_not_ddd2} are the same as for \Cref{lem_non_greedy}.

    \begin{proof}[Final proof of \Cref{thm_conv_epsilon_greedy}.]
        Once we reached the \loseshift~policy, using \Cref{,lemma_control_q_val_greedy2,lemma_qval_not_ddd2}, either
        \begin{itemize}
            \item $\qccd^t < \qccc^t$, and the policy changes for the \Pavlovpolicy policy.
            \item Either non-greedy actions are taken, and we go to the always \alwaysdefect~policy, \ie we back to the previous situation, \Cref{lem_non_greedy}, but at this time-step,  $\qccd^t$ might be closer to $\qccc^t$.
            In this situation
            \begin{itemize}
                \item Either greedy actions are taken and we achieve again the \loseshift~region
                \item Either non-greedy actions are taken, and  $\qccd^t$ might become smaller than $\qccc^t$, which yield a \texttt{win-stay} policy.
            \end{itemize}
        \end{itemize}
    \end{proof}

\begin{proof}[Proof of \Cref{lemma_ensure_not_leaving_lose_shift}]

    \begin{align}
            \begin{pmatrix}
                \qccd^{t+1} - \qccd^{\star, \loseshift} \\
                \qddc^{t+1} - \qddc^{\star, \loseshift}
            \end{pmatrix}
            = \begin{pmatrix}
                1 -\alpha & \alpha \gamma \\
                \alpha \gamma ( 1 - \alpha) & 1 - \alpha + (\alpha \gamma)^2
            \end{pmatrix}
             \begin{pmatrix}
                \qccd^{t} - \qccd^{\star, \loseshift} \\
                \qddc^{t} - \qddc^{\star, \loseshift}
            \end{pmatrix}
            \enspace,
    \end{align}
    which yields
    \begin{align}
        \label{eq_recursion_M}
        \begin{pmatrix}
            u^{t} \\
            v^{t}
        \end{pmatrix}
        =
        M^t
         \begin{pmatrix}
            u^{0} \\
            v^{0}
        \end{pmatrix}
        \enspace,
    \end{align}
    with
    \begin{align}
        M &= \begin{pmatrix}
            a & - b \\
            - ab & a + b^2
        \end{pmatrix} \\
        a &= 1 - \alpha \\
        b &=\alpha \gamma \\
        u^t &= \qccd^t - \qccd^{\star, \loseshift} \\
        v^t &= - \left (\qddc^t - \qddc^{\star, \loseshift} \right )
        \enspace.
    \end{align}

    The matrix $M$ can be diagonalized:
    \begin{align}
        M =
        \underbrace{
            \begin{pmatrix}
            \frac{b^2 + \sqrt{4a^2 + b^4}}{2ab}
            &
            \frac{b^2 - \sqrt{4a^2 + b^4}}{2ab}
            \\
            1
            & 1
            \end{pmatrix}}_{:= P := (w_1 | w_2)}
        \begin{pmatrix}
            \lambda_1 & 0 \\
            0 & \lambda_2
        \end{pmatrix}
        \underbrace{        \begin{pmatrix}
            \frac{ab}{\sqrt{4a^2 + b^4}}
            & \frac{1}{2} -
            \frac{\sqrt{4a^2 + b^4}}{ 2 (4a + b)}
            \\
            \frac{- ab}{\sqrt{4a^2 + b^4}}
            & \frac{1}{2} + \frac{\sqrt{4a^2 + b^4}}{ 2 (4a + b)}
        \end{pmatrix}}_{P^{-1}}
        \enspace,
    \end{align}
    with
    $\lambda_1 = a + \frac{b^2 + \sqrt{b^4 + a^2}}{2}$,
    $\lambda_2 = a + \frac{b^2 - \sqrt{b^4 + a^2}}{2}$.

    Let
    \begin{align}
        \begin{pmatrix} \delta_1^{0} \\ \delta_2^{0} \end{pmatrix}
        :=P^{-1}
        \begin{pmatrix} u^{0} \\ v^{0} \end{pmatrix}
        =
        \begin{pmatrix}
            \frac{u^0 ab}{\sqrt{4a^2 + b^4}}
            +
            v^0 \left ( \frac{1}{2} -
            \frac{\sqrt{4a^2 + b^4}}{ 2 (4a + b)} \right )
            \\
            \frac{- u^0 ab}{\sqrt{4a^2 + b^4}}
            +
            v^0
            \left (
                \frac{1}{2} + \frac{\sqrt{4a^2 + b^4}}{ 2 (4a + b)} \right )
        \end{pmatrix}
    \end{align}
    be the decomposition of $ \begin{pmatrix} u^{0} \\ v^{0} \end{pmatrix}$
    in the basis $w_1, w_2$:
    %%%%%%%%%%%%%%%%%%%%%%%%%%%%%%%%%%%%%%%%%%%%%%%%%%%%%%%%%%%%%%%%%%%%%%%%%%%%
    %%%%%%%%%%%%%%%%%%%%%%%%%%%%%%%%%%%%%%%%%%%%%%%%%%%%%%%%%%%%%%%%%%%%%%%%%%%%
    \begin{align}
        \label{eq_decomp_u_0_v_0}
        \begin{pmatrix}
            u^{0} \\
            v^{0}
        \end{pmatrix}
        =  \delta_1^0 w_1 + \delta_2^0 w_2
        \enspace.
    \end{align}
    %%%%%%%%%%%%%%%%%%%%%%%%%%%%%%%%%%%%%%%%%%%%%%%%%%%%%%%%%%%%%%%%%%%%%%%%%%%%
    %%%%%%%%%%%%%%%%%%%%%%%%%%%%%%%%%%%%%%%%%%%%%%%%%%%%%%%%%%%%%%%%%%%%%%%%%%%%
    Combining \Cref{eq_recursion_M,eq_decomp_u_0_v_0} yields
    \begin{align}\label{eq_ut_vt}
        \begin{pmatrix}
            u^{t} \\
            v^{t}
        \end{pmatrix}
        =  \delta_1^0 \lambda_1^t w_1 + \delta_2^0 \lambda_2^t w_2
        \enspace,
    \end{align}
    and
    \begin{align}
        u^{t} -  v^{t}
        &=
        \underbrace{\delta_1^0}_{> 0} \lambda_1^t
        \left \langle w_1,
        \begin{pmatrix}
            1\\
            -1
        \end{pmatrix}
        \right \rangle
        +
        \underbrace{\delta_2^0}_{>0} \lambda_2^t
        \left \langle w_2,
        \begin{pmatrix}
            1\\
            -1
        \end{pmatrix}
        \right \rangle
        \\
        &=
        \underbrace{\delta_1^0}_{>0} \lambda_1^t
        \underbrace{\left ( \frac{b^2 + \sqrt{4a^2 + b^4}}{2ab} - 1 \right )}_{> 0}
        -
        \underbrace{\delta_2^0 }_{>0} \lambda_2^t
        \underbrace{
            \left (1 -
            \frac{b^2 - \sqrt{4a^2 + b^4}}{2ab}
            \right )}_{>0}
        \enspace.
    \end{align}
    Since $\lambda_1 > \lambda_2$, then if $u^{t_1} - v^{t_1} > 0$, then $u^t - v^t > u^{t_1} - v^{t_1} > 0$.

    In other words, in order to show that
    $\qccd^t - \qccd^{\star} + \qddc^t - \qddc^{\star} :=u^t - v^t \geq Cste$,
    then it is sufficient that
    $u^{t_1} - v^{t_1} \geq Cste $

    Hence one only needs to show
    \begin{align}
        \qddc^{t_1} - \qddd^{t_1}
        + \qccd^{t_1} - \qccc^{t_1}
        &\geq
        \frac{k \Delta_r \alpha}{1 - \gamma}
        \\
        \qccd^{t_1} - \qccc^{t_1}
        &\geq
        \frac{2 k \Delta_r \alpha}{1 - \gamma}
        \\
        \qccd^{t_0} - \qccc^{t_0}
        &\geq
        \frac{4 k \Delta_r \alpha}{1 - \gamma}
        \enspace,
        % \\
        % \qddc^t - \qddc^{\star, \loseshift}
        % + \qddc^{\star, \loseshift} - \qddc^{t_0}
        % \\
        % + \qccd^t - \qccd^{\star, \loseshift} + \qccd^{\star, \loseshift} - \qccc^{t_0}
        % &\geq
        % \frac{4 k \Delta_r \alpha}{1 - \gamma}
        % \\
        % u^t - v^t
        % + \qddc^{\star, \loseshift} - \qddc^{t_0}
        % + \qccd^{\star, \loseshift} - \qccc^{t_0}
        % &\geq
        % \frac{4 k \Delta_r \alpha}{1 - \gamma}
        % % \\
        % % \delta_1^0
        % % \lambda_1^t
        % % \left ( \frac{b^2 + \sqrt{4a^2 + b^4}}{2ab} - 1 \right )
        % % -
        % % \delta_2^0  \lambda_2^t
        % %     \left (1 -
        % %     \frac{b^2 - \sqrt{4a^2 + b^4}}{2ab}
        % %     \right )
    \end{align}
\end{proof}

\begin{proof}{\Cref{lemma_control_q_val_greedy2}.}
Using \Cref{eq_decomp_u_0_v_0} one has
\begin{align}\label{app_eq_def_C1_C2}
    \qccd^t - \qccd^{\star, \loseshift}
    & \leq
    \underbrace{
        \delta_1^0
        \frac{b^2 + \sqrt{4a^2 + b^4}}{2ab}}_{:= C_1}
    \lambda_1^{t - k}
    +
    \underbrace{
        \delta_2^0
        \frac{b^2 - \sqrt{4a^2 + b^4}}{2ab}
    }_{:= C_2}
    \lambda_2^{t - k}
    +
    \frac{2 k \Delta_r \alpha}{1 - \gamma}
    \enspace.
\end{align}

\end{proof}

% !TEX root = ../icml2025.tex

\section{Additional Experimental Details}
\label{app_sec_expes_details}
%%%%%%%%%%%%%%%%%%%%%%%%%%%%%%%%%%%%%%%%%%%%%%%%%%%%%%%%%%%%%
%%%%%%%%%%%%%%%%%%%%%%%%%%%%%%%%%%%%%%%%%%%%%%%%%%
%%%%%%%%%%%%%%%%%%%%%%%%%%%%%%%%%%%%%%%%%%%%%%%%%%%%%%%%%%%%%%%
\begin{table}[H]
    \centering
    \caption{Prisoner's dilemma rewards parameterization used in the experiments, $1< g < 2$.}
    \label{tab_prisoner_dilemma_parameterized}
    \begin{tabular}{|c|c|c|}
        \hline
        & Cooperate
        & Defect
        \\ \hline
        Cooperate
        & \diagbox{$2g$}{$2g$}
        & \diagbox{$g$}{$2 + g$}
        \\ \hline
        Defect
        & \diagbox{$2 + g$}{$g$}
        & \diagbox{$2$}{$2$}
        \\ \hline
    \end{tabular}
\end{table}
%%%%%%%%%%%%%%%%%%%%%%%%%%%%%%%%%%%%%%%%%%%%%%%%%%%%%%%%%%%%%%%
%%%%%%%%%%%
\subsection{Influence of the Incentive to Cooperate $g$}
\label{app_sub_influ_g}
%%%%%%%%%%%%%%%%%%%%%%%%%%%%%%%%%%%%%%%%%%%%%%%%%%%%%%%%%%%%%%%

\xhdr{Comments on \Cref{fig_influ_g}.}
\Cref{fig_influ_g} displays the evolution of $Q$-values difference as a function of the number of iterations in the iterated prisoner's for a self-play evolution of \Cref{alg:qlearning} in the fully greedy case $\epsilon=0$.
As the incentive to cooperate $g$ increase,  the time when the \loseshift~policy and then the \Pavlovpolicy~policy are reached (green zone) decrease, in other words, cooperation is achieved faster.

\begin{figure*}[tb]
  \centering
  \includegraphics[width=0.6\linewidth]{figures/intro_legend.pdf}
  \includegraphics[width=1\linewidth]{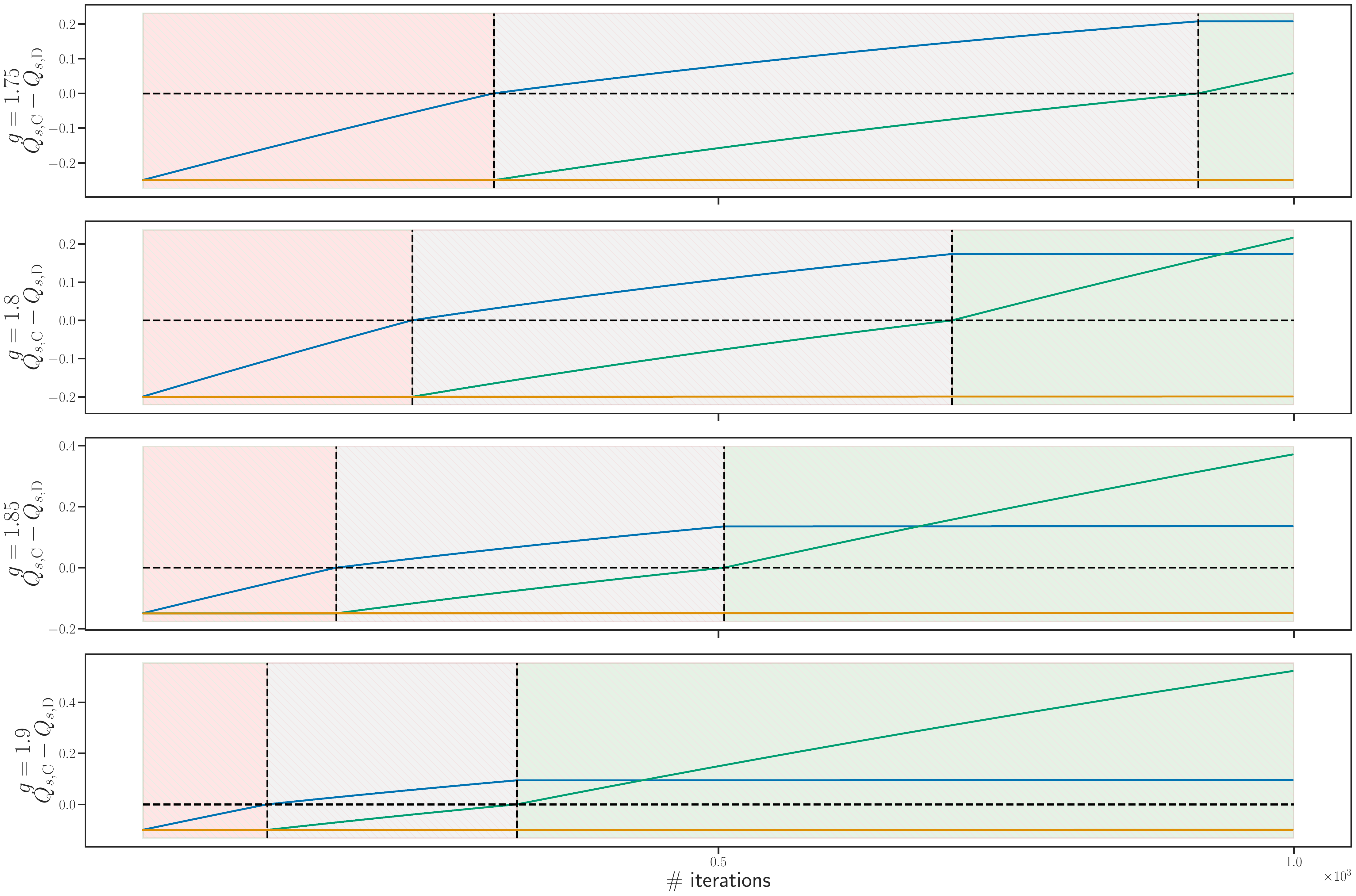}
      % \vspace{-3mm}
  \caption{
    % TODO
      \textbf{Influence of the incentive to cooperate $g$.}
      Evolution of the $Q$-learning policy as a function of the number of iterations in the iterated prisoner's dilemma.
      With a correct (optimistic) initialization, players go from an \texttt{always defect} policy to the \loseshift~policy (at time $t_1$), and then go to the cooperative \Pavlovpolicy~policy (at time $t_2$).
      The discount factor is set to $\gamma = 0.6$.
      The incentive to cooperate $g$ varies from $g=1.75$ to $g=1.9$ (see \Cref{tab_prisoner_dilemma_parameterized}).
      }
      \label{fig_influ_g}
\end{figure*}

%%%%%%%%%%%%%%%%%%%%%%%%%%%%%%%%%%%%%%%%%%%%%%%%%%%%%%%%%%%%%
\subsection{Hyperparameter for the Deep $Q$-Learning Experiments}
\label{app_expe_qlearning}
%%%%%%%%%%%%%%%%%%%%%%%%%%%%%%%%%%%%%%%%%%%%%%%%%%%%%%%%%%%%%
The hyperparameters used for the deep $Q$-learning experiments are summarized in \Cref{app_tab_hyperparams_dql}.
$5$ runs are displayed in \Cref{fig_deep_ql}, each run takes $3$ hours to train on a single GPU on RTX8000.
%
%%%%%%%%%%%%%%%%%%%%%%%%%%%%%%%%%%%%%%%%%%%%%%%%%%%%%%%%%%%%%%%%%%%%%%%%%%%%
%%%%%%%%%%%%%%%%%%%%%%%%%%%%%%%%%%%%%%%%%%%%%%%%%%%%%%%%%%%%%%%%%%%%%%%%%%%%
\begin{table}[H]
    \caption{List of hyperparameters used in the deep $Q$-learning experiment (\Cref{fig_deep_ql}).}
    \label{app_tab_hyperparams_dql}
    \centering
    \begin{tabular}{l|llll}
      \toprule
      Hyperparameter & Value \\
      \toprule
    tau & 0.01 \\
    seed & 8 \\
    gamma & 0.8 \\
    buffer\_capacity & 1000000 \\
    decay\_eps & true \\
    eps\_decay\_steps & 600 \\
    eps\_start & 0.5 \\
    eps\_end & 0.01 \\
    loss\_type & Huber Loss \\
    optimizer\_type & SGD \\
    hidden\_size & 32 \\
    num\_actions & 2 \\
    num\_iters & 10000 \\
    batch\_size & 16384 \\
    do\_self\_play & true \\
    pretrain\_iters & 600 \\
    pretrain\_vs\_random & true \\
    \end{tabular}
\end{table}
  %%%%%%%%%%%%%%%%%%%%%%%%%%%%%%%%%%%%%%%%%%%%%%%%%%%%%%%%%%%%%%%%%%%%%%%%%%%%
  %%%%%%%%%%%%%%%%%%%%%%%%%%%%%%%%%%%%%%%%%%%%%%%%%%%%%%%%%%%%%%%%%%%%%%%%%%%%

% \begin{table}{|c|c|}
%     % \hline
%     % \textbf{Hyperparameter} & \textbf{Value} \\
%     % \hline
%     % \endhead
%     % \hline
%     % \endfoot
%     tau & 0.01 \\
%     seed & 8 \\
%     gamma & 0.8 \\
%     buffer\_capacity & 1000000 \\
%     decay\_eps & true \\
%     eps\_decay\_steps & 600 \\
%     eps\_start & 0.5 \\
%     eps\_end & 0.01 \\
%     loss\_type & huber \\
%     optimizer\_type & sgd \\
%     hidden\_size & 32 \\
%     num\_actions & 2 \\
%     num\_iters & 10000 \\
%     batch\_size & 16384 \\
%     do\_self\_play & true \\
%     pretrain\_iters & 600 \\
%     pretrain\_vs\_random & true \\
%     \hline
%     \label{app_tab_hyperparams_dql}
% \end{table}

\end{document}